\numberwithin{theorem}{section} 
\newcommand{\Past}{\operatorname{\mathsf Past}}
\newcommand{\id}{{\mathsf{id}}}
\newcommand{\cl}[1]{{\uparrow\!\!#1}}
\newcommand{\ol}[1]{{\overline{#1}}}
\newcommand{\OH}{{\mathsf H }}
\newcommand{\OS}{{\mathsf{S}}}
\newcommand{\OP}{{\mathsf{P}}}
\newcommand{\OD}{{\mathsf{D}}}
\newcommand{\OI}{{\mathsf{I}}}
\newcommand{\XX}{{\mathbf{X}}}
\newcommand{\YY}{{\mathbf{Y}}}
\font\aa=rsfs10 \def\nL{\mbox{\aa L\,}}
\def\Od{\mathsf d}
\def\Ou{\mathsf u}
\def\Oo{\mathsf o}
\begin{document}

\title{On Varieties of Ordered Automata\thanks{The paper was supported by 
grant GA15-02862S of the Czech Science Foundation.}}
\author{Ond\v rej Kl\'{i}ma and Libor Pol\'ak}
\institute{Department of Mathematics and Statistics, Masaryk University,\\
Kotl\'a\v rsk\'a 2, 611 37 Brno, Czech Republic,
\mailsa\\
\url{http://www.math.muni.cz}}

\maketitle

\begin{abstract}
The Eilenberg correspondence relates varieties of regular languages 
to pseudovarieties of finite monoids. 
Various modifications of this correspondence have been found with more general 
classes of regular languages on one hand and classes of more complex algebraic 
structures on the other hand.
It is also possible to consider classes of automata  instead of algebraic 
structures as a natural counterpart of classes of languages. 
Here we deal with the correspondence relating positive $\mathcal C$\=/varieties of languages 
to positive $\mathcal C$\=/varieties of ordered automata and we present various specific instances 
of this correspondence.
These bring certain well\=/known results from a new perspective and also some new observations.
Moreover, complexity aspects of the membership problem are  
discussed both in the particular examples and in a general setting.  
\end{abstract}


\section{Introduction}\label{Introduction}
Algebraic theory of regular languages is a well\=/established field in the theory of formal languages.
The basic ambition of this theory is to obtain 
effective characterizations of various natural classes of regular languages.
First examples of significant classes of languages, which were effectively characterized 
by properties of syntactic monoids, were the star\=/free languages
by Sch{\"u}tzenberger~\cite{schutz} 
and the piecewise testable languages by Simon~\cite{simon-pw}. 
A~general framework for discovering relationships
between properties of regular languages and properties of monoids
was provided by Eilenberg~\cite{eilenberg}, who established a one\=/to\=/one
correspondence between the so\=/called {\em varieties} of regular languages
and \emph{pseudovarieties} of finite monoids.
Here varieties of languages are classes closed for taking quotients, 
preimages under homomorphisms and Boolean operations.
Thus a membership problem for a given variety of regular languages can be translated to a membership problem
for the corresponding pseudovariety of finite monoids.
An advantage of this approach is that pseudovarieties of monoids are exactly classes
of finite monoids which have an equational description by pseudoidentities -- see Reiterman~\cite{reiterman}.
For a thorough introduction
to that theory we refer to
surveys by Pin~\cite{pin-handbook} and by Straubing and Weil~\cite{straubing-weil-handbook}.

Since not every natural class of languages is closed for taking all mentioned operations,
various generalizations of the notion of varieties of languages have been
studied. 
One possible generalization is the notion of {\em positive varieties} of languages introduced 
by Pin~\cite{pin-positive} -- the classes need not be closed for taking complementation.
Their equational characterization was given by Pin and Weil~\cite{pin-weil}.
Another possibility is to weaken the closure property concerning 
preimages under homomorphisms -- only 
homomorphisms from a certain fixed class $\mathcal C$ are used.
In this way, one can consider $\mathcal C$\=/varieties of regular languages
which were introduced by Straubing~\cite{straubing} 
and whose equational description was presented by Kunc~\cite{michal}.
These two generalizations could be combined 
as suggested by Pin and Straubing~\cite{pin-straubing}. 

In our contribution we do not use syntactic structures at all. 
We consider classes of automata as another
natural counterpart to classes of regular languages. 
In fact, we deal with classes of semiautomata, which are exactly
automata without the specification of initial nor final states.
Characterizing of classes of languages by properties of minimal automata
is quite natural, since usually we assume that an input of a membership problem
for a fixed class of languages is given exactly by the minimal deterministic automaton.
For example, if we want to test whether an input language is piecewise testable,
we do not need to compute its syntactic monoid which could be quite 
large (see Brzozowski and Li~\cite{broz-j-triv}).
Instead of that, we check 
a condition which must be satisfied by its minimal automaton 
and which was also established in~\cite{simon-pw}.
This characterization was used in~\cite{stern} and~\cite{trahtman}
to obtain a polynomial and quadratic algorithms, respectively, 
for testing piecewise testability.
In~\cite{dlt13-klima}, Simon's condition was reformulated and 
the so\=/called {\em confluent acyclic (semi)automata} were defined.
In this setting, this characterization can be viewed as an instance
of Eilenberg type theorem between varieties of languages and varieties of semiautomata.

Moreover,
each minimal automaton is implicitly equipped with an order
in which the final states form an upward closed subset. This leads to a 
notion of ordered automata.
Then positive $\mathcal C$\=/varieties of ordered semiautomata can be defined 
as classes which are closed for taking certain natural closure operations. 
We recall here the general Eilenberg type theorem, namely 
Theorem~\ref{t:eilenberg-ordered},
which states that positive $\mathcal C$\=/varieties of ordered 
semiautomata correspond to positive $\mathcal C$\=/varieties of languages. 

Summarizing, there are three worlds:\\
(L) classes of regular languages,\\
(S) classes of finite monoids,
sometimes enriched by an additional structure like 
the ordered monoids, monoids with distinguished generators, etc.,\\
(A) classes of semiautomata, sometimes ordered semiautomata, etc.

Most variants of Eilenberg correspondence relate (L) and (S),
the relationship between (A) and (S) was studied by Chaubard et al.~\cite{pin-str}, and 
finally
the transitions between (L) and (A) were initiated by {\'E}sik and Ito~\cite{esik-ito}. 
Here
we continue in the last approach, 
to establish Theorem~\ref{t:eilenberg-ordered}. In fact,
this result is a combination of Theorem 5.1 of \cite{pin-straubing} 
(only some hints to a possible proof are given there) and
the main result of \cite{pin-str} relating worlds (S) and (A).
In contrary, in~the present paper, one can find a self\=/contained proof 
which does not go through the classes of monoids.

The paper is structured as follows.
In Sections 2 and 3 we recall the basic notions. In Sections 4 and 5
we study ordered semiautomata and some natural algebraic constructions 
on them.
The next section is devoted to the detailed proof of Theorem~\ref{t:eilenberg-ordered}. 
Section 7 explains how the unordered variant of this result 
can be obtained. Section~8 presents several instances 
of Theorem~\ref{t:eilenberg-ordered} and 
Section~9 discusses membership problem for $\mathcal C$\=/varieties of semiautomata
given by certain type of pseudoidentities.

This paper is a technical report which precedes the short conference paper~\cite{kp-lata} on the topic. 
The final authenticated publication is available online at {\tt https://doi.org/10.1007/978-3-030-13435-8}.

\section{Positive $\mathcal C$\=/Varieties of Languages}

First of all, we recall  basic definitions. 
Let $A^*$ be the set of all words over a finite alphabet $A$. 
We denote by $\lambda$ the empty word.
The set $A^*$ equipped with the operation of concatenation forms a free 
monoid over $A$ with
$\lambda$ being a neutral element.
A {\it language} over alphabet $A$ is a subset of $A^*$. 
Note that all languages which are considered in the paper are regular.
For a language $L\subseteq A^*$ and
a pair of words $u,v\in A^*$, we denote by $u^{-1}Lv^{-1}$ the 
{\it quotient} of $L$ by these words, i.e.
the set $u^{-1}Lv^{-1}=\{\, w\in A^* \mid uwv\in L\, \}$.
In particular, a {\em left quotient} is defined by 
$u^{-1} L = \{\,w\in A^* \mid uw\in L\,\}$ and a
{\em right} one is defined by $L v^{-1} = \{\,w\in A^* \mid wv\in L\,\}$.

For the propose of this paper, following Straubing~\cite{straubing},
the {\em category of homomorphisms} $\mathcal C$
is a category where objects are all free monoids over non-empty finite alphabets 
and morphisms are certain monoid homomorphisms
among them. If the sets $A$ and $B$ are clear from the context,
we write briefly $f\in \mathcal C$
instead of $f\in \mathcal C(A^*,B^*)$.
This ``categorical'' definition means
that $\mathcal C$
satisfies the following properties: 
\begin{itemize}
\item For each finite alphabet $A$, the identity mapping 
$\id_A : A^* \rightarrow A^*$ belongs to $\mathcal C$.
\item  
If $f: B^* \rightarrow A^*$ and 
$g: C^* \rightarrow B^*$ belong to $\mathcal C$, then their composition 
$gf : C^* \rightarrow A^*$ is also in $\mathcal C$.
\end{itemize}

If $f: B^* \rightarrow A^*$ is a homomorphism and $L\subseteq A^*$, then 
by the {\em preimage} of $L$ in the homomorphism $f$ is meant the set 
$f^{-1}(L) =\{\, v \in B^* \mid f(v)\in L\, \}$.

\begin{definition}\label{d:positive-C-varieties}
Let $\mathcal C$ be a category of homomorphisms.
A {\em positive $\mathcal C$\=/variety of languages} $\mathcal{V}$
associates to every non\=/empty finite alphabet $A$ a
class $\mathcal{V}(A)$ of regular languages over $A$ in such a way that
\begin{itemize}
\item
$\mathcal{V}(A)$ is closed under unions and intersections of finite families,
\item $\mathcal{V}(A)$ is closed under
quotients, i.e.
$$L\in \mathcal {V}(A),\ u,v\in A^*\quad 
\text{implies} \quad
u^{-1}Lv^{-1}\in\mathcal {V}(A)\, ,$$
\item  $\mathcal{V}$ is closed under preimages in 
morphisms of $\mathcal C$, i.e.
$$f:B^*\rightarrow A^*,\, f\in \mathcal C,\,  L\in \mathcal {V}(A)\quad  
\text{implies}\quad  
f^{-1}(L)\in \mathcal{V}(B)\, .$$
\end{itemize}
\end{definition}

Note that the first condition in Definition~\ref{d:positive-C-varieties} ensures that 
the languages $\emptyset$ and $A^*$ belong to $\mathcal{V}(A)$ for every 
alphabet $A$:
$\emptyset$ is the union of the empty system and 
$A^*$ is  the intersection of the 
empty system. 
In other words, the first condition can be equivalently formulated as
$\emptyset, A^*\in \mathcal{V}(A)$ and $\mathcal{V}(A)$ is closed under binary unions and intersections. 
In particular, all $\mathcal{V}(A)$'s are nonempty.

If $\mathcal C$ consists of all homomorphisms we get exactly 
the notion of the 
positive varieties of languages. When adding ``each $\mathcal{V}(A)$ is closed under
complements'', we get exactly the notion of the  $\mathcal C$\=/\emph{variety}
of languages.


\section{The Canonical DFA}

In this section we fix basic terminology concerning finite automata.
First of all, note that all considered automata in the paper are 
deterministic, complete, finite and over finite alphabets.  
Moreover, we use the term {\it semiautomaton} when the initial and final
states are not explicitly given.

A {\em deterministic finite automaton} (DFA) over the alphabet $A$ is
a five\=/tuple $\mathcal A = (Q,A,\cdot,i,F)$, where
$Q$ is a non\=/empty set of {\em states},
$\cdot : Q\times A \rightarrow Q$ is a complete {\em transition function}, 
$i\in Q$ is the {\em initial} state and
$F\subseteq Q$ is the set of {\em final} states.
The transition function can be extended to a mapping 
$\cdot : Q\times A^* \rightarrow Q$ by
$q\cdot \lambda = q,\ q\cdot (ua)= (q\cdot u)\cdot a$,
for every $q\in Q,\,u\in A^*,\,a\in A$.
The automaton $\mathcal A$ {\it accepts} a word $u\in A^*$
if and only if $i\cdot u\in F$ and 
the language {\em recognized} by the
automaton $\mathcal A$ is 
$\nL_{\mathcal A}=\{\,u\in A^* \mid i\cdot u\in F\,\}$.
More generally, for $q\in Q$, we denote
$\nL_{\mathcal A,q}=\{\,u\in A^* \mid q\cdot u\in F\,\}$.
For a fixed $\mathcal A$, 
we denote this language simply by $\nL_q$.

We recall the construction of the minimal automaton of a regular language
which was introduced by Brzozowski~\cite{broz-minimal}. Since 
this automaton is uniquely determined and it
plays a central role in our paper,
we use the adjective ``canonical'' for it.

\begin{definition}\label{d:canonical-dfa}
The {\em canonical deterministic automaton} of a regular language $L$ is 
$\mathcal D_L =(D_L,A,\cdot,L,F_L)$, where
$D_L =\{\, u^{-1} L \mid u\in A^*\,\}$,
$q\cdot a =  a^{-1}q$, for each $q\in D_L$, $a\in A$,
and  $F_L=\{\, q\in D_L \mid  \lambda\in q\,\} $. 
\end{definition}

A part of Brzozowski's result is the correctness of the previous definition, because 
one needs to show that $\mathcal D_L$ is really a finite deterministic automaton. 
The minimality of $\mathcal D_L$ 
can be obtained as a consequence of the following lemma.
Since the result will be modified later in the paper, 
the proof of the following lemma is also presented here.

\begin{lemma}[\cite{broz-minimal}]\label{l:canonical-dfa}
Let $L$ be a regular language with the canonical automaton $\mathcal D_L$
and let $\mathcal A=(Q,A,\cdot,i,F)$ be an arbitrary DFA with 
$\nL_{\mathcal A}=L$. Then the following holds:

(i) For each  $q\in D_L$, we have that $\nL_{\mathcal D_L, q}=q$.

(ii) For each $u\in A^*$, we have that $\nL_{\mathcal A, i\cdot u}=u^{-1}L$.

(iii) The rule $\varphi : i\cdot u \mapsto u^{-1}L$, 
for every $u\in A^*$, correctly defines
a surjective mapping from $Q'=\{\,i\cdot u\mid u\in A^*\,\}$ onto $D_L$
satisfying $\varphi((i\cdot u)\cdot a)= (\varphi(i\cdot u))\cdot a$, 
for every $u\in A^*$, $a\in A$.
\end{lemma}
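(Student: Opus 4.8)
The plan is to isolate one computational fact about the canonical automaton and then read off all three parts from it. The workhorse is the formula describing the extended transition function of $\mathcal D_L$: for every state $q \in D_L$ and every word $w \in A^*$ one has $q \cdot w = w^{-1} q$. I would prove this by induction on $|w|$, the base case $w = \lambda$ being trivial and the inductive step $q \cdot (wa) = (q \cdot w)\cdot a = a^{-1}(w^{-1}q) = (wa)^{-1}q$ following directly from the definition $q \cdot a = a^{-1}q$ together with the elementary identity $a^{-1}(w^{-1}q) = (wa)^{-1}q$.

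Granting this formula, part (i) is immediate: by definition $\nL_{\mathcal D_L, q} = \{\, w \mid q\cdot w \in F_L\,\}$, and $q \cdot w = w^{-1}q$ lies in $F_L$ exactly when $\lambda \in w^{-1}q$, i.e.\ exactly when $w \in q$. Hence $\nL_{\mathcal D_L, q} = q$. Part (ii) does not even need the canonical automaton's structure, only the hypothesis $\nL_{\mathcal A} = L$ and the standard identity $(i\cdot u)\cdot w = i\cdot(uw)$ valid in $\mathcal A$: a word $w$ belongs to $\nL_{\mathcal A, i\cdot u}$ iff $i\cdot(uw)\in F$, iff $uw \in \nL_{\mathcal A} = L$, iff $w \in u^{-1}L$.

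For part (iii) I would verify in turn that $\varphi$ is well defined, surjective, and compatible with the transitions. Well-definedness is where the earlier parts do the real work and where the logical order matters: if $i\cdot u = i\cdot u'$ then the two states recognize the same language, so by (ii) we get $u^{-1}L = (u')^{-1}L$, which is precisely the value $\varphi$ must assign. Surjectivity is then nothing but the definition $D_L = \{\, u^{-1}L \mid u\in A^*\,\}$, since each $u^{-1}L$ is the image $\varphi(i\cdot u)$ of a state of $Q'$. Finally, compatibility with transitions follows from the single-letter case of the workhorse formula: $\varphi((i\cdot u)\cdot a) = \varphi(i\cdot(ua)) = (ua)^{-1}L = a^{-1}(u^{-1}L) = (u^{-1}L)\cdot a = (\varphi(i\cdot u))\cdot a$.

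No step presents a genuine obstacle; the whole lemma is a sequence of unwindings of definitions. If anything deserves care it is the bookkeeping around $\varphi$ in (iii): the statement bundles three separate claims, and one must resist reproving well-definedness and the homomorphism property from scratch when both are corollaries of the quotient identities already established in (i) and (ii). Keeping the dependence (ii) $\Rightarrow$ well-definedness explicit is the cleanest way to present the argument.
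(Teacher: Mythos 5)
Your proposal is correct and follows essentially the same route as the paper: part (ii) via the same chain of equivalences, well-definedness of $\varphi$ deduced from (ii), surjectivity from the definition of $D_L$, and compatibility from the identity $(ua)^{-1}L = a^{-1}(u^{-1}L)$. The only difference is presentational: you isolate the formula $q\cdot w = w^{-1}q$ as an explicit inductive lemma, whereas the paper uses it implicitly when writing $(u^{-1}L)\cdot v=(uv)^{-1}L$ in part (i).
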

\begin{proof} (i)
Let $q$ be a state of $\mathcal D_L$, i.e. $q=u^{-1}L$ for some $u\in A^*$.
Then $v\in \nL_{\mathcal D_L, q}$ if and only if $\lambda\in (u^{-1}L)\cdot v=(uv)^{-1}L$,
which is equivalent to $uv\in L$ and also to $v\in u^{-1}L=q$.

(ii) Let $u\in A^*$. Then, 
for every $v\in A^*$,
we have the following chain of equivalent formulas:
$$v\in\nL_{\mathcal A, i\cdot u},\ (i\cdot u)\cdot v\in F,\ 
\ uv\in L, \ \text{and} \
\ v\in u^{-1}L\, .$$
 
(iii) 
The correctness of the definition of $\varphi$ follows from (ii) and the
surjectivity of $\varphi$ is clear.
Moreover, $\varphi((i\cdot u)\cdot a)= \varphi(i\cdot ua)=(ua)^{-1}L
=u^{-1}L\cdot a=\varphi(i\cdot u)\cdot a$.
\qed\end{proof}


\section{Ordered Automata}\label{sec:ordered-automata}

First,
we recall some basic terminology from the theory of ordered sets.
By an {\it ordered set} we mean a set $M$ equipped with an 
{\it order} $\le$, i.e. by a reflexive, antisymmetric and transitive 
relation. A subset $X$ is called {\it upward closed}
if, for every pair of elements $x,y\in M$, 
the following property holds: $x\le y,\, x\in X$ implies $y\in X$.
For every subset $X$, we denote by $\cl X$ the smallest upward closed subset containing the subset $X$,
i.e. $\cl X=\{\,m \in M \mid \exists\, x\in X : x\le m\,\}$. 
In particular, for $x\in M$, we write $\cl x$ instead of $\cl\{x\}$.
A mapping $f: M \rightarrow N$ between two ordered sets $(M,\le)$ and $(N,\le)$
is called {\it isotone} if, for every pair of elements $x,y\in M$, we have 
that $x\le y$ implies $f(x)\le f(y)$.

States of the canonical automaton $\mathcal D_L$ are languages, 
and therefore they are ordered naturally by the set\=/theoretical inclusion. 
The action by each letter $a\in A$ is an isotone mapping:  
for each pair of states $p,q$ such that 
$p\subseteq q$, we have  $p\cdot a= a^{-1}p \subseteq a^{-1}q=q\cdot a $.
Moreover, the set $F_L$ of all final states is an upward closed subset 
with respect to $\subseteq$.
These observations motivate the following definition.

\begin{definition}\label{d:ordered-automaton}
An {\em ordered automaton} over the alphabet $A$ is
a six\=/tuple $\mathcal O = (Q,A,\cdot,\le,i,F)$, where
\begin{itemize}
\item $(Q,A,\cdot,i,F)$
is a usual DFA;
\item $\le $ is an order on the set $Q$;
\item the action by every letter $a\in A$ is an isotone mapping from 
the ordered set $(Q,\le)$ to itself;
\item $F$ is  an upward closed subset of $Q$ with respect to $\le$.
\end{itemize}
\end{definition}

The definitions of the acceptance and the recognition are the 
same as in the case of DFA's.  
Since a composition of isotone mappings is isotone, 
it follows from  Definition~\ref{d:ordered-automaton} that the action by 
every word $u\in A^*$ is an isotone mapping
from the ordered set of states into itself.

Moreover, the ordered semiautomaton $\mathcal Q=(Q,A,\cdot,\leq)$
{\it accepts} the language $L\subseteq A^*$ if we
can complete $\mathcal Q$ to an ordered
automaton $\mathcal O = (Q,A,\cdot,\le,i,F)$ such that $L=\nL_{\mathcal O}$.

The following result states that Brzozowski's construction gives 
the minimal ordered automata $\mathcal O_L =(D_L,A,\cdot,\subseteq,L,F_L)$.

\begin{lemma}\label{l:canonical-odfa}
Let $\mathcal O = (Q,A,\cdot,\le,i,F)$ be an ordered automaton 
recognizing
the language $L=\nL_{\mathcal O}$.
Then, for states $p\le q$,
we have $\nL_p\subseteq \nL_q$.
Moreover, the mapping $\varphi$ from Lemma \ref{l:canonical-dfa} 
is an isotone one onto the canonical ordered automaton
$\mathcal O_L =(D_L,A,\cdot,\subseteq,L,F_L)$.
\end{lemma}

\begin{proof}
Let $p\le q$ hold in the given ordered automaton $\mathcal O$.
 If $w\in\nL_p$, then $p\cdot w\in F$. Now $p\cdot w \le q\cdot w$ 
 implies $q\cdot w \in F$ and therefore  $w\in \nL_q$.
Moreover, having $u,v\in A^*$ such that $p=i\cdot u,\,q=i\cdot v$, we get
 $\nL_{i\cdot u}\subseteq \nL_{i\cdot v}$. By Lemma~\ref{l:canonical-dfa},
 it means that $u^{-1}L\subseteq v^{-1}L$, i.e. $\varphi(i\cdot u)
 \subseteq \varphi(i\cdot v).$
\qed\end{proof}

When relating languages with algebraic structures (not our task here), 
the following property of the minimal/canonical ordered automaton is a crucial one.

\begin{lemma}[Pin~{\cite[Section 3]{pin-handbook}}]
The transition monoid of the minimal automaton of a regular language
$L$ is isomorphic to the syntactic monoid of~$L$.
Similarly, the ordered transition monoid of the minimal ordered automaton of 
$L$ is isomorphic
to the syntactic ordered monoid of $L$.
\end{lemma}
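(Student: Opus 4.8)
The plan is to realize the transition monoid as a quotient of the free monoid $A^*$ and then to recognize the defining congruence as the syntactic congruence. Writing $\tau_u$ for the transformation $q\mapsto q\cdot u$ of the state set $D_L$ of the canonical automaton $\mathcal D_L$, the transition monoid $T$ is by definition $\{\tau_u\mid u\in A^*\}$, and $\eta\colon u\mapsto\tau_u$ is a surjective monoid homomorphism of $A^*$ onto $T$ (where the product in $T$ is taken so that $\tau_u\tau_v=\tau_{uv}$). By the first isomorphism theorem it therefore suffices to show that the kernel congruence $\ker\eta$, given by $u\approx v \iff \tau_u=\tau_v$, coincides with the syntactic congruence $\sim_L$, defined by $u\sim_L v \iff (\forall\, x,z\in A^*)\ (xuz\in L \Leftrightarrow xvz\in L)$.

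For this I would use that every state of $\mathcal D_L$ has the form $x^{-1}L$ for some $x\in A^*$ and that $(x^{-1}L)\cdot u=(xu)^{-1}L$, so that $\tau_u=\tau_v$ is equivalent to $(xu)^{-1}L=(xv)^{-1}L$ for all $x\in A^*$. Spelling out equality of these left quotients as membership conditions gives exactly $xuz\in L\Leftrightarrow xvz\in L$ for all $x,z$, i.e.\ $u\sim_L v$; thus $T\cong A^*/{\sim_L}$ is the syntactic monoid $M(L)$. Alternatively one can invoke Lemma~\ref{l:canonical-dfa}(i), which identifies $\nL_{\mathcal D_L,q}$ with $q$, to read off the same equivalence directly from the recognized languages.

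For the ordered part I would equip $T$ with the pointwise order inherited from $(D_L,\subseteq)$, namely $\tau_u\le\tau_v \iff (\forall\, q)\ q\cdot u\subseteq q\cdot v$; the isotonicity of the letter actions in $\mathcal O_L$ is exactly what makes this order compatible with the product, so that $(T,\le)$ is an ordered monoid, and antisymmetry of $\subseteq$ makes $\le$ a genuine partial order. Running the previous computation with inclusions instead of equalities turns $\tau_u\le\tau_v$ into $(xu)^{-1}L\subseteq(xv)^{-1}L$ for all $x$, which unwinds to $xuz\in L\Rightarrow xvz\in L$ for all $x,z$, i.e.\ the syntactic preorder $u\le_L v$. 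Hence the same bijection $\eta$ is simultaneously an order isomorphism, identifying $(T,\le)$ with the syntactic ordered monoid. The main point to watch is the bookkeeping of conventions---the direction of the quotient $(x^{-1}L)\cdot u=(xu)^{-1}L$ and the orientation of the order so that final states stay upward closed and $\le$ matches $\le_L$ rather than its reverse; once these are fixed, both isomorphisms fall out of the single observation that the states of $\mathcal D_L$ are precisely the left quotients of $L$.
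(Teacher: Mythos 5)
Your proof is correct, but there is nothing in the paper to compare it against: the paper does not prove this lemma at all, it simply quotes it from Pin's handbook chapter. Your argument is the standard one and is sound in all its steps: realizing the transition monoid as $A^*/\ker\eta$ via the surjection $\eta\colon u\mapsto\tau_u$, identifying $\ker\eta$ with the syntactic congruence through the identity $(x^{-1}L)\cdot u=(xu)^{-1}L$ (which holds by induction on $|u|$, exactly as used in the proof of Lemma~\ref{l:canonical-dfa}), and then rerunning the computation with inclusions in place of equalities for the ordered half. Your parenthetical care about conventions is also exactly the right point to watch: with this paper's conventions (final states upward closed, canonical automaton ordered by $\subseteq$), the pointwise order on transformations unwinds to $u\le_L v\iff(\forall\,x,z\in A^*)(xuz\in L\Rightarrow xvz\in L)$, and this orientation of the syntactic preorder is the one consistent with the rest of the paper --- for instance with Subsection~\ref{s:ord-increasing}, where extensive actions $q\le q\cdot a$ correspond to the inequality $1\le x$ and to closure of the language under inserting letters. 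The only implicit ingredients you lean on, compatibility of the pointwise order with composition (which needs isotonicity of the actions, as you note) and the fact that $\sim_L$ is the intersection of $\le_L$ with its reverse, are both immediate, so the proof is complete.
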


The next lemma clarifies how 
the quotients of a language can be obtained changing
the initial and the final states appropriately.

\begin{lemma}\label{l:quotients}
Let $\mathcal O=(Q,A,\cdot,\leq,i,F)$ be an ordered automaton recognizing
the language $L=\nL_{\mathcal O}$.
Let $u,v\in A^*$. Then

(i) $u^{-1}L=\nL_{\mathcal B}$ where  
$\mathcal B=(Q,A,\cdot,\leq,i\cdot u,F)$,

(ii)
$Lv^{-1}=\nL_{\mathcal C}$ where  
$\mathcal C=(Q,A,\cdot,\leq,i,F_v) \text{ and } 
F_v=\{\,q\in Q\mid q\cdot v\in F\,\}$.
\end{lemma}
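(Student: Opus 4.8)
The plan is to prove both parts by a direct chain of equivalences, relying only on the compatibility of the extended transition function with concatenation, namely $q\cdot(xy)=(q\cdot x)\cdot y$ for all $q\in Q$ and $x,y\in A^*$, together with the hypothesis $L=\nL_{\mathcal O}$.

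For part (i), I would first note that $\mathcal B$ is again an ordered automaton: it differs from $\mathcal O$ only in the distinguished initial state, while the order, the isotone letter actions and the upward closed set $F$ are left untouched. Then, for an arbitrary $w\in A^*$, I would compute
$$w\in\nL_{\mathcal B}\iff (i\cdot u)\cdot w\in F\iff i\cdot(uw)\in F\iff uw\in L\iff w\in u^{-1}L,$$
the middle step being exactly $(i\cdot u)\cdot w=i\cdot(uw)$. This yields $\nL_{\mathcal B}=u^{-1}L$ at once.

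For part (ii), the one genuine verification — the only place where the order structure is actually used — is that $\mathcal C$ is a legitimate ordered automaton, i.e.\ that $F_v$ is upward closed. I would argue: if $q\in F_v$ and $q\le q'$, then $q\cdot v\in F$, and since the action by $v$ is isotone we have $q\cdot v\le q'\cdot v$; as $F$ is upward closed this forces $q'\cdot v\in F$, that is $q'\in F_v$. With $\mathcal C$ now known to be an ordered automaton, the equality of languages follows by the same kind of computation as before: for $w\in A^*$,
$$w\in\nL_{\mathcal C}\iff i\cdot w\in F_v\iff (i\cdot w)\cdot v\in F\iff i\cdot(wv)\in F\iff wv\in L\iff w\in Lv^{-1},$$
so $\nL_{\mathcal C}=Lv^{-1}$.

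I do not expect any real obstacle here; the statement is essentially a bookkeeping lemma. The only point requiring a moment's thought is the upward closedness of $F_v$ in (ii), and this is precisely where the two ordered\=/automaton axioms (isotonicity of the letter actions, hence of the action by $v$, and upward closedness of $F$) come into play, confirming that relocating the final states via a right quotient keeps us inside the class of ordered automata.
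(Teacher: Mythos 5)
Your proof is correct and follows essentially the same route as the paper: part (ii) is word-for-word the paper's argument (upward closedness of $F_v$ via isotonicity of the action by $v$, then a chain of equivalences), and for part (i) you merely inline the computation that the paper delegates to Lemma~\ref{l:canonical-dfa}~(ii), which is the identical chain of equivalences.
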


\begin{proof}
(i)
It follows from Lemma~\ref{l:canonical-dfa} (ii).

(ii)
To show that $\mathcal C$ is an ordered semiautomaton, 
we need to prove that $F_v$ is upward closed.
Let $p\in F_v$ and $p\le q\in Q$.
From $p\in F_v$ we have $p\cdot v\in F$ and from $p\le q$ 
we obtain $p\cdot v\le q\cdot v$. Since $F$ is upward closed
we get $q\cdot v\in F$, which implies $q\in F_v$.

Now, for every $w\in A^*$, 
the following is a chain of equivalent statements:
$$w\in Lv^{-1},\ wv\in L,\ (i\cdot w)\cdot v\in F,\ 
i\cdot w\in F_v,\ \text{and}\ w \in \nL_{\mathcal C}\,.$$
Thus we proved the equality  $Lv^{-1}=\nL_{\mathcal C}$. 
\qed\end{proof}

The next result characterizes languages which are recognized
by changing the final states in the canonical ordered automaton.

\begin{lemma}\label{l:quotients-in-canonical}
Let $H\subseteq D_L$ be upward closed. Then
$(D_L,A,\cdot,\subseteq, L,H)$ recognizes a language which can be 
expressed as a finite union of finite intersections
of languages of the form $Lv^{-1}$.
\end{lemma}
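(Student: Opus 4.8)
The plan is to make the recognized language explicit, decompose it along the order on $H$, and then exploit the finiteness of $D_L$ to collapse an a~priori infinite intersection into a finite one.

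First I would identify what the automaton $\mathcal O_H=(D_L,A,\cdot,\subseteq,L,H)$ recognizes. Its initial state is $L$ and in the canonical automaton $L\cdot w=w^{-1}L$ (by Lemma~\ref{l:canonical-dfa}(i)--(ii)), so $w$ is accepted exactly when $w^{-1}L\in H$; that is,
\[
  \nL_{\mathcal O_H}=\{\,w\in A^*\mid w^{-1}L\in H\,\}.
\]
Since $D_L$ is finite and $H$ is upward closed, $H=\bigcup_{q\in H}\cl q$ with $\cl q=\{\,p\in D_L\mid q\subseteq p\,\}$, a finite union of principal up-sets. This yields
\[
  \nL_{\mathcal O_H}=\bigcup_{q\in H}\{\,w\in A^*\mid q\subseteq w^{-1}L\,\},
\]
a finite union, and reduces the task to writing each language $\{\,w\mid q\subseteq w^{-1}L\,\}$ as a finite intersection of right quotients.

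Next I would unfold the inclusion elementwise. Regarding the state $q\in D_L$ as a language over $A$, we have $q\subseteq w^{-1}L$ iff $wv\in L$ for every $v\in q$, i.e. iff $w\in Lv^{-1}$ for every $v\in q$; hence
\[
  \{\,w\mid q\subseteq w^{-1}L\,\}=\bigcap_{v\in q}Lv^{-1}.
\]

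The main obstacle is that $q$ may be infinite, so this intersection ranges over infinitely many words $v$. The decisive observation is that only finitely many distinct right quotients exist: since $w\in Lv^{-1}$ iff $(L\cdot w)\cdot v\in F_L$, we get $Lv^{-1}=\{\,w\mid L\cdot w\in F_v\,\}$ with $F_v=\{\,p\in D_L\mid p\cdot v\in F_L\,\}$ as in Lemma~\ref{l:quotients}, and $F_v$ depends on $v$ only through the transformation $p\mapsto p\cdot v$ of the finite set $D_L$. As there are finitely many such transformations, there are finitely many sets $F_v$ and hence finitely many languages $Lv^{-1}$. Therefore each $\bigcap_{v\in q}Lv^{-1}$ reduces to an intersection of finitely many distinct right quotients, giving the desired finite union of finite intersections. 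The degenerate cases are covered by the usual conventions recalled after Definition~\ref{d:positive-C-varieties}: an empty union handles $H=\emptyset$ (recognized language $\emptyset$) and an empty intersection handles a bottom state $q=\emptyset$ (recognized language $A^*$).
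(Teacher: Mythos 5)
Your proof is correct, and while it opens exactly as the paper does, its key step is genuinely different. Both arguments identify the recognized language as $\{\,w\in A^*\mid w^{-1}L\in H\,\}$ and split the upward closed set $H$ into the finite union of principal up-sets $\cl q$, $q\in H$, reducing everything to writing $\{\,w\mid q\subseteq w^{-1}L\,\}$ as a finite intersection of right quotients. At that point the paper chooses, for each state $r\in D_L$ with $q\not\subseteq r$, one separating word $v_r\in q\setminus r$ (possible because states of $\mathcal D_L$ coincide with the languages they accept, Lemma~\ref{l:canonical-dfa}(i)), and proves $\{\,w\mid q\subseteq w^{-1}L\,\}=\bigcap_{q\not\subseteq r} Lv_r^{-1}$ by two inclusions, the nontrivial one being a short contradiction argument: if $wv_r\in L$ for all such $r$, then $L\cdot w$ equals no state $r$ with $q\not\subseteq r$, hence $q\subseteq L\cdot w$. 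You instead start from the exact but a priori infinite identity $\{\,w\mid q\subseteq w^{-1}L\,\}=\bigcap_{v\in q}Lv^{-1}$ and collapse it using the fact that $L$ has only finitely many distinct right quotients, which you correctly derive from $Lv^{-1}=\{\,w\mid L\cdot w\in F_v\,\}$ with $F_v$ determined by the transformation $v$ induces on the finite set $D_L$. The paper's construction buys an explicit small bound (at most $|D_L|$ intersectands, one per state not above $q$) and stays entirely inside the canonical automaton; your argument avoids both the choice of separating words and the contradiction step, at the price of a coarser bound (the number of distinct right quotients, up to $2^{|D_L|}$) and of invoking the auxiliary finiteness fact. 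Both handle the degenerate cases by the same empty-union and empty-intersection conventions that the paper itself relies on.
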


\begin{proof}
For an arbitrary upward closed subset $X\subseteq D_L$,
we define $\Past(X)=\{\,w\in A^* \mid L \cdot w \in X\,\}$.
Since $\Past(X)=\bigcup_{p\in X}\Past(\cl p)$, it is enough to prove
that, for each $p\in D_L$, the set $\Past(\cl p)$ can be expressed as 
a finite intersection
of right quotients of the language $L$.

Let $p$ be an arbitrary state in $D_L$.
For each $q\in D_L$ such that $p\not \subseteq q$, we have 
$p=L_p\not\subseteq L_q=q$. This means
that there is $v_q\in A^*$ with the property 
$v_q \in L_p \setminus L_q$. Equivalently, 
$p\cdot v_q\in F_L$ and $q\cdot v_q\not\in F_L$.
Now if $w\in \Past(\cl p)$ then $L\cdot w \supseteq p$. Therefore 
$(L \cdot w) \cdot v_q \supseteq p\cdot v_q\in F_L$, 
from which we get $wv_q \in L$, i.e. $ w\in L v_q^{-1}$. 
We have showed that $\Past(\cl p) \subseteq L v_q^{-1}$.

Now we claim that 
$$\Past(\cl p)=\bigcap_{p\not\subseteq q} L v_q^{-1} \, .$$
We already saw the ``$\subseteq$''\=/part. 
To prove the opposite inclusion, 
let $w$ be an arbitrary word from $\in \bigcap_{p\not\subseteq q} L v_q^{-1}$.
Fixing $q$ for a moment, we see that $wv_q\in L$, i.e.  
$L \cdot wv_q \in F_L$.
In the case of $q=L\cdot w$
we would have $q\cdot v_q=(L \cdot w)\cdot v_q\in F_L$ -- a contradiction.
Hence
$L\cdot w\not= q$, and this holds for each $q\not\supseteq p$. 
Therefore $L\cdot w\supseteq p$ and we deduce that $w\in \Past(\cl p)$.
\qed\end{proof}


There is a natural question how the minimal ordered (semi)automaton 
can be computed from a given automaton. 

\begin{proposition}
There exists an algorithm which  computes, for a given  
automaton $\mathcal A=(Q,A,\cdot, i, F)$,
the minimal ordered automaton
of the language $\nL(\mathcal A)$.
\end{proposition}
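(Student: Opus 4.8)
The plan is to reduce the task to three computable sub-problems: first determine the language recognized (implicitly, via the given automaton), then construct the canonical ordered automaton $\mathcal O_L$ directly, and finally argue that each step is effective on finite data. The key observation is that Brzozowski's construction from Definition~\ref{d:canonical-dfa} is itself algorithmic once we can decide equality of left quotients $u^{-1}L$, and the given finite automaton $\mathcal A$ provides exactly the means to do this.

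First I would compute the set of reachable states $Q' = \{\, i\cdot u \mid u\in A^*\,\}$ of $\mathcal A$ by a standard breadth-first exploration, which terminates since $Q$ is finite. By Lemma~\ref{l:canonical-dfa}(ii), the state $i\cdot u$ carries exactly the information $\nL_{\mathcal A,i\cdot u}=u^{-1}L$, so two words $u,v$ yield the same canonical state $u^{-1}L=v^{-1}L$ if and only if $\nL_{\mathcal A,i\cdot u}=\nL_{\mathcal A,i\cdot v}$. The latter equality is the usual language-equivalence of states in a finite DFA and is decidable by the classical minimization/partition-refinement procedure (Myhill--Nerode): I would run state minimization on the reachable part $(Q',A,\cdot,i,F)$ of $\mathcal A$, obtaining the classes of indistinguishable states. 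By Lemma~\ref{l:canonical-dfa}(iii) the resulting quotient is precisely $\mathcal D_L$ up to the canonical relabelling $\varphi$, so its states are in bijection with $D_L$ and its transition structure matches $q\cdot a = a^{-1}q$.

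Having the unordered canonical automaton $\mathcal D_L$, the remaining task is to equip it with the correct order $\subseteq$ on $D_L$, together with the correct final set $F_L$; the latter is immediate since $F_L$ is just the image of $F$ under the minimization quotient. For the order, I would use the characterization of Lemma~\ref{l:canonical-odfa}: for canonical states $p=u^{-1}L$ and $q=v^{-1}L$ we have $p\subseteq q$ exactly when $\nL_{\mathcal D_L,p}\subseteq \nL_{\mathcal D_L,q}$, and by Lemma~\ref{l:canonical-dfa}(i) this language-inclusion is the inclusion of the accepted languages of the two states. Language inclusion between states of a finite DFA is decidable by the product-automaton emptiness test: $\nL_p\subseteq\nL_q$ fails iff from the pair $(p,q)$ one can reach a pair $(p',q')$ with $p'\in F_L$ and $q'\notin F_L$, a reachability check that terminates. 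Doing this for all pairs $(p,q)$ in the finite set $D_L\times D_L$ yields the relation $\subseteq$, which by Lemma~\ref{l:canonical-odfa} is automatically an order for which each letter acts isotonically and $F_L$ is upward closed, so no further verification of the axioms of Definition~\ref{d:ordered-automaton} is needed.

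The output of these steps is exactly $\mathcal O_L=(D_L,A,\cdot,\subseteq,L,F_L)$, the minimal ordered automaton of $\nL(\mathcal A)$ guaranteed by Lemma~\ref{l:canonical-odfa}. I do not expect a serious obstacle here, since every ingredient is a standard finite-automaton decision procedure; the only point requiring care is the justification that the computed inclusion relation genuinely coincides with the semantic order on $D_L$ rather than merely a superset of it, and this is precisely what Lemma~\ref{l:canonical-odfa} supplies. Thus the main content of the proof is the assembly of these effective procedures together with an appeal to the earlier lemmas, rather than any single difficult construction.
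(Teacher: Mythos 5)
Your proof is correct, but it is organized differently from the paper's. Both arguments ultimately compute the same semantic relation, namely the quasiorder of (\ref{eq:hopcroft}): $(p,q)$ is related iff $p\cdot u\in F$ implies $q\cdot u\in F$ for all $u\in A^*$. The paper computes this relation in a single pass directly on the state set $Q$, by a greatest\=/fixpoint refinement in the style of Hopcroft/Moore minimization: it starts from all pairs not separated by $F$ and repeatedly deletes pairs whose successors under some letter have already been deleted; the equivalence part of the resulting quasiorder $\overline R$ then yields the state merging and $\overline R$ itself descends to the order on the quotient, so minimization and ordering fall out of one computation. You instead proceed in two phases: first run standard (unordered) DFA minimization on the reachable part to obtain $\mathcal D_L$ via Lemma~\ref{l:canonical-dfa}, and then compute the order separately by pairwise language\=/inclusion tests, realized as reachability of a pair $(p',q')$ with $p'\in F_L$, $q'\notin F_L$ in the product automaton -- a least\=/fixpoint computation of the complement of the same relation. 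Your decomposition is more modular (it reuses off\=/the\=/shelf minimization and a standard inclusion test as black boxes), at the cost of doing the distinguishability work twice; the paper's version generalizes the classical partition\=/refinement algorithm to the ordered setting in one stroke. One small imprecision: the fact that the inclusion order turns $\mathcal D_L$ into an ordered automaton (isotone actions, $F_L$ upward closed) is established in the discussion opening Section~\ref{sec:ordered-automata}, not in Lemma~\ref{l:canonical-odfa} itself, which concerns the isotone map from an arbitrary recognizing ordered automaton onto $\mathcal O_L$; this is a citation detail, not a gap in the argument.
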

\begin{proof}
Our construction is based on Hopcroft minimization algorithm for DFA's.
We may assume that all states of $\mathcal A$
are reachable from the initial state $i$.
Let $R=(Q\times F) \cup ((Q\setminus F) \times Q)$.
Then we construct the relation $\overline R$ from $R$ by removing 
unsuitable pairs of states step by step.
At first, we put $R_1=R$. Then for each integer $k$, 
if we find $(p,q)\in R_k$ and a letter $a\in A$
such that $(p\cdot a,q\cdot a) \not\in R_k$, 
then we remove $(p,q)$ from the current relation $R_k$, that is, 
we put $R_{k+1}=R_k \setminus \{(p,q)\}$.
This construction stops after, say, $m$ steps. 
So, $R_{m+1}=\overline R$ satisfies $(p,q)\in \overline R \implies 
(p\cdot a,q\cdot a) \in \overline R$, for every $p,q\in Q$ and $a\in A$. 
Now, we observe that, $(p,q)\in \overline R$ if and only if, for every $u\in A^*$, 
$(p\cdot u, q\cdot u) \in R$. 
The condition can be equivalently written  as
\begin{equation}
    \label{eq:hopcroft}
(p,q)\in \overline R \ \text{ if and only if }
\  (\ \forall\, u\in A^* \ :\ p\cdot u \in F\ \implies\ q\cdot u \in F\ )\, .
\end{equation}    
It follows that  $\overline R$ 
is a quasiorder on $Q$ and we can consider the 
corresponding equivalence relation 
$\rho=\{\,(p,q) \mid (p,q)\in \overline R , (q,p) \in \overline R\,\}$ on the set $Q$.
Then the quotient set $Q/\rho =\{\,[q]_\rho \mid q\in Q\,\}$ has a structure of the automaton:
the rule $[q]_\rho\cdot_\rho a =[q\cdot a]_\rho$, 
for each $q\in Q$ and $a\in A$, defines correctly 
actions by letters using~(\ref{eq:hopcroft}). 
Furthermore, the relation $\le$ on $Q/\rho$ defined by the rule
$[p]_\rho \le [q]_\rho$ iff $(p,q)\in \overline R$, is an order on $Q/\rho$ 
compatible with actions by letters.
So, $\mathcal A_\rho=(Q/\rho, A, \cdot_\rho,\le, [i]_\rho, F_\rho)$, where 
$F_\rho=\{\,[f]_\rho \mid f\in F\,\}$,
is an ordered automaton recognizing $\nL(\mathcal A)$. 
Moreover, if there are two states $[p]_\rho,[q]_\rho \in Q/\rho$ 
such that $\nL(\mathcal A_\rho,p)=\nL(\mathcal A_\rho,q)$, then
$(p,q)\in \rho$.
Thus, the ordered automaton $\mathcal A_\rho$ is isomorphic to 
the minimal ordered automaton of the language $\nL(\mathcal A)$.
\qed\end{proof}

Note also that the classical power\=/set construction makes from 
a nondeterministic automaton
an ordered deterministic automaton which is ordered by the set\=/theoretical inclusion. 
Thus, for the purpose 
of a construction of the minimal ordered automaton, 
one may also use Brzozowski's minimization algorithm  
using power\=/set construction for the reverse of the given language. 


\section{Algebraic Constructions on Ordered Semiautomata}
\label{sec:constructions}

To get an Eilenberg correspondence between classes of languages and the classes
of semiautomata we need an appropriate definition of a variety of semiautomata. 
The notion of variety of semiautomata would be given in terms of closure properties
with respect to certain constructions on semiautomata.

Positive $\mathcal C$\=/varieties of languages are closed under quotients, 
therefore the choice of an initial state and final states in ordered 
automata can be left free due to  Lemma~\ref{l:quotients}.

If an ordered automaton 
$\mathcal O=(Q,A,\cdot,\le, i, F)$
is given, then we denote by $\ol{\mathcal O}$ the corresponding ordered 
semiautomaton $(Q,A,\cdot,\le)$.
In particular, for the canonical ordered automaton
$\mathcal{D}_L=(D_L,A,\cdot,\subseteq,L,F)$ of the language $L$, 
we have $\ol{\mathcal{D}_L}=(D_L,A,\cdot,\subseteq)$.

Since positive $\mathcal C$\=/varieties of languages are closed under 
taking finite unions and intersections, 
we include the closedness with respect to
direct products of ordered semiautomata.

\begin{definition}\label{d:product}
Let $n\ge 1$ be a natural number.
Let $\mathcal O_j=(Q_j,A,\cdot_j,\le_j)$ be an ordered semiautomaton 
for $j=1,\dots,n$. 
We define the ordered semiautomaton 
$\mathcal O_1\times\dots\times \mathcal O_n=
(Q_1\times\dots\times Q_n,A,\cdot,\le)$ as follows:
\begin{itemize}
\item for each $a\in A$, we put
$(q_1,\dots,q_n) \cdot a = (q_1\cdot_1 a,\dots,q_n\cdot_n a)$ and
\item we have 
$(p_1,\dots,p_n) \le (q_1,\dots,q_n)$ if and only if, for each 
$j=1,\dots,n$,
the inequality $p_j\le_j q_j$ is valid.
\end{itemize}
The ordered semiautomaton $\mathcal O_1\times\dots\times \mathcal O_n$
is called a {\em product}
of the ordered semiautomata $\mathcal O_1,\dots,\mathcal O_n$.
\end{definition}

We would like to know, which languages are recognized by a product 
of ordered semiautomata.

\begin{lemma} \label{l:product-arbitrary-subset}
Let the ordered semiautomaton $\mathcal O$ be the product
of the ordered semiautomata $\mathcal O_1,\dots,\mathcal O_n$.
Then the following holds:

(i) If, for each $j=1,\dots,n$, the language 
$L_j$ is recognized by $\mathcal O_j$, then both $L_1\cap\dots\cap L_n$
and $L_1\cup\dots\cup L_n$ are recognized by $\mathcal O$.

(ii) If the language $L$ is recognized by $\mathcal O$, then $L$
is a finite union of finite intersections of languages recognized by
$\mathcal O_1,\dots,\mathcal O_n$.
\end{lemma}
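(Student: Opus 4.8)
The plan is to treat the two directions separately, with part~(i) being essentially a bookkeeping exercise and part~(ii) resting on a single structural observation about upward closed subsets of a product. Throughout I would use that, by definition of the product, $i\cdot w=(i_1\cdot_1 w,\dots,i_n\cdot_n w)$ whenever $i=(i_1,\dots,i_n)$, and that the order is componentwise.

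For part~(i), I would first fix, for each $j$, an initial state $i_j$ and an upward closed final set $F_j\subseteq Q_j$ witnessing that $\mathcal O_j$ recognizes $L_j$. In the product I take the common initial state $i=(i_1,\dots,i_n)$. To capture the intersection I would complete $\mathcal O$ with the final set $F_\cap=F_1\times\dots\times F_n$, and for the union with $F_\cup=\{\,(q_1,\dots,q_n)\mid q_j\in F_j\text{ for some }j\,\}$. The only thing to verify is that these sets are upward closed with respect to the componentwise order, which is immediate from each $F_j$ being upward closed; the defining equivalences $i\cdot w\in F_\cap\iff(\forall j)\ i_j\cdot_j w\in F_j$ and $i\cdot w\in F_\cup\iff(\exists j)\ i_j\cdot_j w\in F_j$ then identify the recognized languages with $L_1\cap\dots\cap L_n$ and $L_1\cup\dots\cup L_n$ respectively.

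For part~(ii), suppose $L$ is recognized by the product via an initial state $i=(i_1,\dots,i_n)$ and an upward closed final set $F$. The key observation is that in the componentwise order a principal up\=/set factors as $\cl{f}=\cl{f_1}\times\dots\times\cl{f_n}$ for $f=(f_1,\dots,f_n)$, since $f\le(p_1,\dots,p_n)$ holds exactly when $f_j\le_j p_j$ for every $j$. Because $F$ is upward closed and the state set is finite, $F=\bigcup_{f\in F}\cl{f}$ is a finite union, whence $L=\bigcup_{f\in F}\{\,w\mid i\cdot w\in\cl{f}\,\}$. Fixing $f$ and using that the action is componentwise together with the factorization of $\cl{f}$, I would rewrite $\{\,w\mid i\cdot w\in\cl{f}\,\}=\bigcap_{j=1}^n\{\,w\mid i_j\cdot_j w\in\cl{f_j}\,\}$. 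Each set in this intersection is recognized by $\mathcal O_j$ by completing it with initial state $i_j$ and the upward closed final set $\cl{f_j}$, so $L$ emerges as a finite union of finite intersections of languages recognized by the factors, as required.

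I expect the only genuine content to be the factorization $\cl{f}=\cl{f_1}\times\dots\times\cl{f_n}$ and the attendant realization that one may simply range over all of the finite set $F$ rather than over its minimal elements; once this is in place, both the union\=/of\=/intersections shape of $L$ and the recognizability of each factor language by the corresponding $\mathcal O_j$ follow by unwinding the definitions. Everything else -- checking upward closedness of the constructed final sets and verifying the chains of equivalences -- is routine and parallels the computations already carried out in Lemmas~\ref{l:quotients} and~\ref{l:quotients-in-canonical}.
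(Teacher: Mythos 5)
Your proposal is correct and follows essentially the same route as the paper's own proof: part~(i) uses exactly the final sets $F_1\times\dots\times F_n$ and $\{\,(q_1,\dots,q_n)\mid \exists j:\ q_j\in F_j\,\}$, and part~(ii) uses the same decomposition $F=\bigcup_{p\in F}\cl{p}$ together with the factorization $\cl{p}=\cl{p_1}\times\dots\times\cl{p_n}$ to write $L$ as a union of intersections of languages recognized by the factors. No gaps; the details you call routine are indeed the same verifications the paper leaves to the reader.
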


\begin{proof}
Let $\mathcal O_j=(Q_j,A,\cdot_j,\le_j),\ j=1,\dots,n$. Denote
$Q=Q_1\times\dots\times Q_n$ and 
$\mathcal O= (Q,A,\cdot,\le)$.

(i) Let $F_1, \dots , F_n$ be sets of final states used
for recognition of the languages $L_1, \dots , L_n$.
Put 
$F=F_1\times \dots \times F_n$
for the intersection $L_1\cap\dots\cap L_n$ and 
$$F=\{\, (q_1,\dots,q_n) \mid \text{ there exists } j\in\{1,\dots ,n\} \text{ such that }
q_j\in F_j\, \}$$
for the union $L_1\cup\dots\cup L_n$.
It is not hard to see that, in the both cases, $F$ is indeed 
an upward closed subset. 

(ii)
Let $L$ be a language recognized by  $(Q,A,\cdot,\le)$, i.e
let $F$ be an upward closed subset of  $Q$, and 
$i\in Q$ such that $L$ is recognized by 
$(Q,A,\cdot,\le,i,F)$.
Since $F= \bigcup_{p\in F} \cl{p}$, 
we see that $L=\bigcup_{p\in F} L_p$, where
$L_p$ is recognized by the ordered 
automata  $(Q,A,\cdot,\le,i,\cl{p})$.
Furthermore, for such $p$, we have $p=(p_1,\dots ,p_n)$ and we can write 
$\cl{p}=\cl{p_1}\times \dots \times \cl{p_n}$.  
Let $i=(i_1,\dots , i_n)$ and let $L_{(p,j)}$ be a language recognized
by the ordered automaton $(Q_j,A,\cdot_j,\le_j,i_j,\cl{p_j})$.
Then one can check that $L_p=L_{(p,1)}\cap \dots \cap L_{(p,n)}$. 
\qed\end{proof}

Also the following construction is useful. 

\begin{definition}\label{d:disjoint-union}
Let $I=\{1,\dots , n\}$  be a non\=/empty finite set and, for each $j\in I$,
let $\mathcal Q_j=(Q_j,A,\cdot_j,\le_j)$ be an ordered semiautomaton. 
We define the {\em disjoint union} $\mathcal Q=(Q,A,\cdot,\le)$
of ordered semiautomata
$\mathcal Q_1, \dots ,\mathcal Q_n$
in the following way:
\begin{itemize}
\item $Q=\{\, (q,j) \mid j\in I, q\in Q_j\, \}$, 
\item for each 
$a\in A$ and $(p,j), (q,k) \in Q$, we put
$ (q,j) \cdot a = (q \cdot_j a,j)$ and
\item we put
$(q,j) \le (p,k)$ if and only if $j=k$ and $q_j\le_j p_j$.
\end{itemize}
\end{definition}

Clearly, $L$ is recognized by a disjoint union of ordered 
semiautomata if and only if  it is recognized
by some of them.
A further useful notion is a homomorphism of
ordered semiautomata.

\begin{definition}\label{d:morphisms}
Let $(Q,A,\cdot,\le)$ and $(P,A,\circ,\preceq)$ be ordered semiautomata 
and $\varphi :Q\rightarrow P$
be a mapping. Then $\varphi$ is called a 
{\em homomorphism} of ordered semiautomata if it is isotone and
$\varphi(q\cdot a)=\varphi (q)\circ a$ for all $a\in A$, $q\in Q$.
If there exists a surjective homomorphism of ordered semiautomata 
from $(Q,A,\cdot,\le)$ to $(P,A,\circ,\preceq)$, then 
we say that $(P,A,\circ,\preceq)$ is a {\em homomorphic image} of 
$(Q,A,\cdot,\le)$.
We say that $\varphi$ is {\em backward order preserving} if,
for every $p,q\in Q$, the
inequality $\varphi(p)\preceq\varphi(q)$ implies $p\le q$.
If the  homomorphism $\varphi$ is surjective and backward 
order preserving, then we say that 
$(Q,A,\cdot,\le)$ is  {\em isomorphic} to $(P,A,\circ,\preceq)$.
\end{definition}

In what follows, we use often simply $(P,A,\cdot,\le)$ instead of
$(P,A,\circ,\preceq)$.
Note that every  
backward order preserving mapping is injective. 

In the setting of the previous definition,
one can prove by induction with respect to the length
of words that 
for an arbitrary homomorphism $\varphi$ of semiautomata 
that  the equality 
$\varphi(q\cdot u)=\varphi(q) \circ u$ holds for 
every state $q\in Q$ and every word $u\in A^*$.

\begin{lemma}\label{l:morphic-image-odfa}
Let  an ordered semiautomaton $(P,A,\cdot,\le)$ be a homomorphic 
image of an ordered semiautomaton
$(Q,A,\cdot,\le)$
and $L$ be recognized by $(P,A,\cdot,\le)$. Then  $L$ is also recognized by 
$(Q,A,\cdot,\le)$. 
\end{lemma}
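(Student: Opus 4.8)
The plan is to transport an accepting structure from $P$ back to $Q$ along the surjective homomorphism $\varphi\colon Q\to P$ witnessing that $(P,A,\cdot,\le)$ is a homomorphic image of $(Q,A,\cdot,\le)$. Since $L$ is recognized by $(P,A,\cdot,\le)$, I may fix an ordered automaton $\mathcal P=(P,A,\cdot,\le,i_P,F_P)$ with $L=\nL_{\mathcal P}$, where $F_P$ is upward closed. Using surjectivity of $\varphi$, I choose some $i_Q\in Q$ with $\varphi(i_Q)=i_P$, put $F_Q=\varphi^{-1}(F_P)$, and consider the candidate $\mathcal Q=(Q,A,\cdot,\le,i_Q,F_Q)$. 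The claim will be that $\mathcal Q$ recognizes $L$.

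Two things then have to be verified. First, $\mathcal Q$ must actually be an ordered automaton, i.e. $F_Q$ must be upward closed: if $q\in F_Q$ and $q\le q'$, then $\varphi(q)\in F_P$ and, since $\varphi$ is isotone, $\varphi(q)\le\varphi(q')$; as $F_P$ is upward closed this gives $\varphi(q')\in F_P$, so $q'\in F_Q$. Second, the recognized language must coincide with $L$: for every word $u\in A^*$ I invoke the extension of the homomorphism property to words noted after Definition~\ref{d:morphisms}, namely $\varphi(i_Q\cdot u)=\varphi(i_Q)\cdot u=i_P\cdot u$, to obtain the chain of equivalences asserting $u\in\nL_{\mathcal Q}$ iff $i_Q\cdot u\in F_Q$ iff $\varphi(i_Q\cdot u)\in F_P$ iff $i_P\cdot u\in F_P$ iff $u\in\nL_{\mathcal P}=L$.

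The argument is essentially a diagram chase, and I do not expect a genuine obstacle. The only step that truly uses the ordered structure (rather than just the unordered transition structure) is the upward-closedness of $F_Q=\varphi^{-1}(F_P)$, and this is secured precisely by the isotonicity of $\varphi$; a map that merely commuted with the letter actions, without being isotone, would not in general keep the preimage of an upward closed set upward closed. All the remaining verifications are order-free and follow from $\varphi$ being a surjection that commutes with the actions of the letters.
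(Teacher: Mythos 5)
Your proof is correct and follows exactly the paper's own argument: pull back the initial state along the surjective homomorphism $\varphi$, take $F_Q=\varphi^{-1}(F_P)$, use isotonicity of $\varphi$ to get upward closedness, and conclude via the same chain of equivalences based on $\varphi(i_Q\cdot u)=\varphi(i_Q)\cdot u$. Your closing remark correctly pinpoints isotonicity as the only place where the ordered structure is genuinely needed, which the paper states more tersely.
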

\begin{proof}
If $L$ is recognized by an ordered automaton
$\mathcal P=(P,A,\cdot,\le,i,F)$, with $F$ being an upward closed subset,
and $\varphi$ is a surjective homomorphism of a semiautomaton 
$(Q,A,\cdot,\le)$ onto the semiautomaton $\ol{\mathcal P}=(P,A,\cdot,\le)$, 
then
we can choose some $i'\in Q$ such that $\varphi(i')=i$ 
and we can consider 
$F'=\{\, q\in Q \mid \varphi(q)\in F\,\}$.
Now $F'$ is an upward closed subset in $(Q,\le)$, because $\varphi$ is an isotone mapping and $F$ is upward closed.

Moreover, for an arbitrary $u\in A^*$, 
the following is a chain of equivalent
statements:
$$i'\cdot u \in F',\  \varphi(i'\cdot u)\in F, \ 
\varphi(i')\cdot u \in F,\  i\cdot u\in F,\ \text{and}\ u\in L\, .$$
The statement of the lemma follows.
\qed\end{proof}

\begin{definition}\label{d:trivial-odfa}
An ordered semiautomaton $(Q,A,\cdot,\le)$ is {\em trivial} if
$q\cdot a=q$ for all $q\in Q$ and $a\in A$, and $\le$ is the equality relation on $Q$.
In particular,
for a natural number $n$, 
we define the ordered semiautomaton 
$\mathcal{T}_n(A)=(I_n, A,\cdot,=)$, where $I_n=\{1,\dots ,n\}$, 
the transition function $\cdot $ is defined by the rule 
$j\cdot a=j$ for all $j\in I_n$ and $a\in A$.
\end{definition}
 It follows directly from the definition
that every trivial ordered semiautomaton is isomorphic to some 
$\mathcal{T}_n(A)=(I_n, A,\cdot,=)$.

\begin{lemma}\label{l:disjoint-via-product}
The disjoint union of ordered  
semiautomata $\mathcal Q_j=(Q_j,A,\cdot_j,\le_j)$, $j\in \{1,\dots ,n\}$, is 
  a homomorphic image of the product 
  $\mathcal Q_1 \times \dots \times \mathcal Q_n\times \mathcal{T}_n(A)$.
\end{lemma}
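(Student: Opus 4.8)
The plan is to construct an explicit surjective homomorphism of ordered semiautomata from the product $\mathcal Q_1 \times \dots \times \mathcal Q_n \times \mathcal{T}_n(A)$ onto the disjoint union $\mathcal Q$, and then verify the three defining properties: that it respects the transition actions, that it is isotone, and that it is surjective. The key observation is that a typical state of the product has the form $(q_1,\dots,q_n,j)$ where $q_k \in Q_k$ for each $k$ and $j \in I_n = \{1,\dots,n\}$; the last coordinate $j$ can be read as a \emph{selector} telling us which summand of the disjoint union we land in, while the coordinate $q_j$ supplies the actual state within that summand.

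Concretely, I would define $\psi(q_1,\dots,q_n,j) = (q_j,j)$, sending a product state to the pair consisting of the $j$-th coordinate and the index $j$ itself. First I would check compatibility with the action by a letter $a\in A$: the product sends $(q_1,\dots,q_n,j)$ to $(q_1\cdot_1 a,\dots,q_n\cdot_n a, j)$ since the $\mathcal{T}_n(A)$-coordinate is fixed by every letter, and applying $\psi$ gives $(q_j\cdot_j a, j)$, which is exactly the image $\psi(q_1,\dots,q_n,j)\cdot a = (q_j,j)\cdot a = (q_j\cdot_j a, j)$ in the disjoint union by Definition~\ref{d:disjoint-union}. Next I would verify that $\psi$ is isotone: if $(p_1,\dots,p_n,j) \le (q_1,\dots,q_n,k)$ in the product, then by Definition~\ref{d:product} all coordinates compare, so in particular the $\mathcal{T}_n(A)$-coordinates satisfy $j = k$ (the order on $\mathcal{T}_n(A)$ is equality) and $p_j \le_j q_j$; this is precisely the condition $\psi(p_1,\dots,p_n,j) = (p_j,j) \le (q_j,j) = \psi(q_1,\dots,q_n,k)$ in the disjoint union. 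Surjectivity is immediate, since any $(q,j)$ in the disjoint union is hit by, say, the product state whose $j$-th coordinate is $q$ and whose remaining coordinates are chosen arbitrarily.

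The argument is essentially a bookkeeping verification, so I do not anticipate a deep obstacle; the only point demanding a little care is the role of the trivial factor $\mathcal{T}_n(A)$. Without it, a product state would carry no index recording which summand we intend, and there would be no canonical way to recover the disjoint-union order, whose defining feature is that states in different summands are incomparable. The factor $\mathcal{T}_n(A)$, being ordered by equality, supplies exactly this incomparability: two product states with distinct last coordinates are never comparable, matching the clause $j=k$ in Definition~\ref{d:disjoint-union}. Thus the main conceptual step is recognizing that $\mathcal{T}_n(A)$ encodes the partition into summands, after which the homomorphism $\psi$ and its three properties follow directly from the definitions of product and disjoint union.
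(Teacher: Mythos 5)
Your proof is correct and takes essentially the same approach as the paper: the paper's proof defines exactly the same mapping $\varphi\colon (q_1,\dots,q_n,j)\mapsto (q_j,j)$ and simply asserts that it is a surjective homomorphism of ordered semiautomata. You have merely spelled out the routine verifications (compatibility with the actions, isotonicity using the equality order on $\mathcal{T}_n(A)$, and surjectivity) that the paper leaves to the reader.
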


\begin{proof}
Clearly, the mapping defined by the rule
$$\varphi: (q_1,\dots,q_n,j) \mapsto (q_j,j),\ \text{for every}\  
q_1\in Q_1,\dots,q_n\in Q_n,\, j\in \{1,\dots ,n\}\, ,$$
is a surjective homomorphism of the considered semiautomata.
\qed\end{proof}

\begin{definition}\label{d:subautomaton}
Let $(Q,A,\cdot,\le)$ be an ordered semiautomaton and $P\subseteq Q$ be a non\=/empty subset.
If $p\cdot a \in P$ for every $p\in P$, $a\in A$, then
$(P,A,\cdot,\le)$  is called a {\em subsemiautomaton}
of $(Q,A,\cdot,\le)$.
\end{definition}

In the previous definition, the transition function and order on $P$ are restrictions 
of the corresponding data on the set $Q$ and so 
they are denoted by the same symbols.

Using the notions of a subsemiautomaton and a homomorphic image,
we can formulate the minimality of the canonical ordered 
semiautomaton in a bit precise way.

\begin{lemma}\label{l:minimality-canonical-dfa}
Let $(Q,A,\cdot,\le)$ be an ordered semiautomaton recognizing 
the language $L$.
Then the canonical ordered semiautomaton $\ol{\mathcal{D}_L}$ 
is a homomorphic image of some subsemiautomaton of 
$(Q,A,\cdot,\le)$.
\end{lemma}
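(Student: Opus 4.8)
The plan is to exploit the lemmas already established for the canonical automaton, since the statement is essentially a repackaging of them. As $(Q,A,\cdot,\le)$ recognizes $L$, by the definition of recognition I may complete it to an ordered automaton $\mathcal O=(Q,A,\cdot,\le,i,F)$ with $\nL_{\mathcal O}=L$. First I would pass to the reachable part $Q'=\{\,i\cdot u\mid u\in A^*\,\}$. This set is non\=/empty, as it contains $i=i\cdot\lambda$, and it is closed under the action of every letter, because $(i\cdot u)\cdot a=i\cdot(ua)\in Q'$; hence $(Q',A,\cdot,\le)$ is a subsemiautomaton of $(Q,A,\cdot,\le)$ in the sense of Definition~\ref{d:subautomaton}.

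Next, I would invoke Lemma~\ref{l:canonical-dfa}(iii), which supplies a surjective mapping $\varphi: Q'\to D_L$ given by $\varphi(i\cdot u)=u^{-1}L$ and satisfying $\varphi((i\cdot u)\cdot a)=\varphi(i\cdot u)\cdot a$. Thus $\varphi$ is a surjective homomorphism of the underlying (unordered) semiautomata from $(Q',A,\cdot,\le)$ onto $\ol{\mathcal D_L}=(D_L,A,\cdot,\subseteq)$. It then remains only to check that $\varphi$ respects the orders, i.e. is isotone, which is precisely the content of Lemma~\ref{l:canonical-odfa}: for $p\le q$ one has $\nL_p\subseteq\nL_q$, and writing $p=i\cdot u$, $q=i\cdot v$ this reads $u^{-1}L\subseteq v^{-1}L$, that is $\varphi(p)\subseteq\varphi(q)$.

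Putting these together, $\varphi$ is a surjective isotone semiautomaton homomorphism, hence a homomorphism of ordered semiautomata in the sense of Definition~\ref{d:morphisms}, and $\ol{\mathcal D_L}$ is its image. This yields the claim, with the witnessing subsemiautomaton being the reachable part $(Q',A,\cdot,\le)$.

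I do not expect a genuine obstacle here, since each ingredient is provided by an earlier result. The only points demanding a little care are verifying that the reachable part is truly closed under the transition function, so that it is a legitimate subsemiautomaton, and confirming that $\varphi$ is order\=/preserving rather than merely a homomorphism of the unordered semiautomata --- the latter being handed to us directly by Lemma~\ref{l:canonical-odfa}.
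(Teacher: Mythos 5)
Your proposal is correct and follows essentially the same route as the paper's proof: restrict to the reachable part $Q'=\{\,i\cdot u\mid u\in A^*\,\}$, which is a subsemiautomaton, and use the mapping $\varphi$ from Lemma~\ref{l:canonical-dfa}(iii), whose isotonicity is exactly what Lemma~\ref{l:canonical-odfa} supplies. The only difference is that you spell out the verification steps (closure of $Q'$ under transitions, order-preservation of $\varphi$) which the paper leaves implicit.
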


\begin{proof}
Let $L$ be recognized by the ordered automaton 
$\mathcal A=(Q,A,\cdot,\le,i,F)$.
 It is easy  to see that the subset 
 $Q'=\{\,i\cdot u \mid u\in A^*\,\}$ 
 constructed in Lemma~\ref{l:canonical-dfa}
 forms a subsemiautomaton of $(Q,A,\cdot,\le)$. 
 Furthermore, we defined there
 the mapping $\varphi : Q'\rightarrow D_L$
 by the rule $\varphi(q) =u^{-1}L$, where $q=i\cdot u$.
 This mapping $\varphi$ is a surjective homomorphism of ordered
 semiautomata.
\qed\end{proof}

We say that an ordered semiautomaton $(Q,A,\cdot, \le)$ is 
{\it 1\=/generated} if
there exists a state $i\in Q$ such that 
$Q=\{\,i\cdot u \mid u\in A^*\,\}$.

\begin{lemma}\label{l:reconstruction-product}
Let $(Q,A,\cdot \le)$ be a 1\=/generated ordered semiautomaton. 
Then this semiautomaton is isomorphic to 
a subsemiautomaton of a product of the 
canonical ordered semiautomata of 
languages recognized by 
the ordered semiautomaton $(Q,A,\cdot \le)$. 
\end{lemma}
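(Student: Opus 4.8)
The plan is to realize the given 1-generated ordered semiautomaton $(Q,A,\cdot,\le)$ inside a product of canonical ordered semiautomata by choosing a suitable family of languages recognized by it. Let $i\in Q$ be a generator, so that every state has the form $i\cdot u$ for some $u\in A^*$. For each state $q\in Q$, I would consider the upward closed set $\cl q$ and let $L_q=\nL_{(Q,A,\cdot,\le,i,\cl q)}$ be the language recognized by $(Q,A,\cdot,\le)$ with initial state $i$ and final set $\cl q$; this is legitimate because $\cl q$ is upward closed by construction, so each $L_q$ is genuinely recognized by the semiautomaton. I would then form the product $\mathcal P=\prod_{q\in Q}\ol{\mathcal D_{L_q}}$ of the canonical ordered semiautomata of these languages and define a map $\Phi:Q\to\mathcal P$ by sending a state to the tuple of its images in each canonical factor.

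Concretely, by Lemma~\ref{l:minimality-canonical-dfa} (or directly Lemma~\ref{l:canonical-odfa}) each factor $\ol{\mathcal D_{L_q}}$ receives a homomorphism $\varphi_q$ from the relevant subsemiautomaton generated by $i$; since $Q$ itself is $1$-generated, these combine into a single homomorphism $\Phi:Q\to\mathcal P$, $\Phi(p)=(\varphi_q(p))_{q\in Q}$. That $\Phi$ is a homomorphism of ordered semiautomata (isotone and commuting with the letter actions) follows coordinatewise from the fact that each $\varphi_q$ is one. The image $\Phi(Q)$ is then a subsemiautomaton of $\mathcal P$, and I would identify $(Q,A,\cdot,\le)$ with this image. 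What remains is to check that $\Phi$ is an isomorphism onto its image in the sense of Definition~\ref{d:morphisms}, i.e.\ surjective onto $\Phi(Q)$ (automatic) and backward order preserving.

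The main obstacle is precisely the backward-order-preserving property: I must show that $\Phi(p)\le\Phi(p')$ in $\mathcal P$ forces $p\le p'$ in $Q$. The key is the way the languages $L_q$ were chosen. Suppose $\Phi(p)\le\Phi(p')$; by the product order this means $\varphi_q(p)\subseteq\varphi_q(p')$ for every $q$, which by Lemma~\ref{l:canonical-dfa}(i) and the identification of states with left quotients translates into $\nL_{(Q,\dots,\cl q),p}\subseteq\nL_{(Q,\dots,\cl q),p'}$ for every $q$. Taking $q=p$ and the empty word, one gets $p\in\cl p$ hence $\lambda$ witnesses $p$ reaching $\cl p$, and the inclusion then forces $p'$ to reach $\cl p$ as well, i.e.\ $p\le p'$. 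The delicate point is setting up the indexing so that each inequality $p\le p'$ is \emph{detected} by at least one coordinate; choosing the final sets to be the principal up-sets $\cl q$ ranging over all $q\in Q$ is exactly what guarantees this separation of incomparable states.

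A secondary point worth verifying is that only finitely many distinct languages $L_q$ arise (since $Q$ is finite), so the product is genuinely finite and $\mathcal P$ is a well-defined ordered semiautomaton; duplicate factors may be discarded without affecting the argument. With backward order preservation established, $\Phi$ is injective and backward order preserving, hence by Definition~\ref{d:morphisms} an isomorphism of $(Q,A,\cdot,\le)$ onto the subsemiautomaton $\Phi(Q)$ of the product $\prod_{q\in Q}\ol{\mathcal D_{L_q}}$, which is what the lemma asserts.
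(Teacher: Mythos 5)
Your proposal is correct and follows essentially the same route as the paper's own proof: the same family of languages $L_q$ recognized with final sets $\cl q$, the same product of canonical ordered semiautomata with the coordinatewise map $\Phi(p)=(\varphi_q(p))_q$, and the same key step for backward order preservation (project to the coordinate $q=p$ and use the empty word to conclude $p'\in\cl p$, i.e.\ $p\le p'$). The only differences are cosmetic, such as noting explicitly that duplicate factors may be discarded.
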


\begin{proof}
Let $\mathcal A=(Q,A,\cdot \le)$ be a 1\=/generated ordered 
semiautomaton, i.e.
$Q=\{\,i\cdot u \mid u\in A^*\,\}$ for some $i\in Q$.
For each $q\in Q$, we consider the ordered automaton 
$\mathcal{A}_q=(Q,A,\cdot,\le,i,\cl q)$.
This automaton recognizes the language 
$L_{\mathcal{A}_q}$, 
which we denote by $L(q)$. 
Using Lemma~\ref{l:minimality-canonical-dfa}, there is 
a surjective homomorphism $\varphi_q : \mathcal A \rightarrow 
\ol{\mathcal D_{L(q)}}$ of ordered semiautomata, because
$\mathcal A$ is 1\=/generated and thus $Q'=Q$ here.

Assume that $\mathcal A$ has $n$ states and denote them by $q_1,\dots ,q_n$.
We can consider the product of the canonical ordered semiautomata 
$\ol{\mathcal{D}_{L(q_k)}}=(D_{L(q_k)},A,\cdot_{q_k},\subseteq_{q_k})$
for all $k=1,\dots ,n$, 
i.e. $\ol{\mathcal{D}_{L(q_1)}} \times \dots \times \ol{\mathcal{D}_{L(q_n)}}$.
Moreover, since we have the mapping $\varphi_q$ for each $q\in Q$, 
we can define a mapping $\varphi : Q \rightarrow  
{D}_{L(q_1)}\times \dots \times {D}_{L(q_n)}$ by the rule
$\varphi(p)=(\varphi_{q_1}(p), \dots ,\varphi_{q_n}(p) )$, for $p\in Q$.
To prove the statement of the lemma it is enough to 
show that this mapping $\varphi$ is an backward order preserving 
homomorphism of ordered
semiautomata.

Let $p\le q $ hold in $Q$. For each $r\in Q$ the homomorphism 
$\varphi_r$ is isotone and hence $\varphi_r(p)\subseteq_r\varphi_r(q)$. 
Thus we get $\varphi(p)\le\varphi(q)$
and we see that $\varphi$ is an isotone mapping. 
In the similar way one can check that $\varphi(p\cdot a)=\varphi(p) \cdot a$ 
for every $p\in Q$ and $a\in A$. 
These facts mean that $\varphi$ is a homomorphism of ordered semiautomata.

Now assume that $p$ and $q$ are two states in $Q$ such that $\varphi(p)\le \varphi(q)$.
Then we have $\varphi_p(p) \subseteq_p \varphi_p(q)$ 
in $\ol{\mathcal{D}_{L(p)}}$.
By the definition of the mapping $\varphi_p$ given 
in Lemma~\ref{l:canonical-dfa},
for each $r\in Q$, we have $\varphi_p(r)= \nL_{\mathcal A_p, r}$. 
In particular, we can write 
$\nL_{\mathcal A_p, p}\subseteq \nL_{\mathcal A_p, q}$.
Since $p$ is a final state in $\mathcal A_p$, we have 
$\lambda\in \nL_{\mathcal A_p, p}$ 
and therefore $\lambda\in \nL_{\mathcal A_p, q}$, which means that 
$q$ is a final state in $\mathcal{A}_p$ too.
In other words $q\in \cl p$, i.e. $q\ge p$.
Thus the mapping $\varphi$ is backward order preserving.
\qed\end{proof}

\begin{lemma}\label{l:reconstruction-union}
Let $\mathcal A=(Q,A,\cdot \le)$ be an ordered semiautomaton. 
Then the semiautomaton $\mathcal A$
is a homomorphic image of a disjoint union of its 1\=/generated subsemiautomata. 
\end{lemma}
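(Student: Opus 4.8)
The plan is to realize $\mathcal A=(Q,A,\cdot,\le)$ as a homomorphic image of the disjoint union of the $1$\=/generated subsemiautomata obtained by picking each state of $Q$ as a generator. Concretely, for every $i\in Q$ let $Q_i=\{\,i\cdot u \mid u\in A^*\,\}$ and let $\mathcal A_i=(Q_i,A,\cdot,\le)$ be the corresponding subsemiautomaton of $\mathcal A$; this is a subsemiautomaton by Definition~\ref{d:subautomaton}, since $Q_i$ is closed under the action of every letter. Each $\mathcal A_i$ is $1$\=/generated by construction. I would then form the disjoint union $\mathcal Q=(Q',A,\cdot,\le)$ of the family $\{\mathcal A_i \mid i\in Q\}$ as in Definition~\ref{d:disjoint-union}, whose underlying set is $Q'=\{\,(q,i)\mid i\in Q,\ q\in Q_i\,\}$.

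Next I would define the candidate homomorphism $\psi:Q'\rightarrow Q$ by the obvious ``forgetting the tag'' rule $\psi(q,i)=q$. The verification that $\psi$ is a homomorphism of ordered semiautomata is the bulk of the argument, and it is routine: for the transition compatibility, $\psi((q,i)\cdot a)=\psi(q\cdot a,i)=q\cdot a=\psi(q,i)\cdot a$, using the definition of the action in the disjoint union; for isotonicity, if $(q,i)\le(p,k)$ then by Definition~\ref{d:disjoint-union} we have $i=k$ and $q\le p$ inside $Q_i$, hence $\psi(q,i)=q\le p=\psi(p,k)$ in $Q$. Surjectivity is immediate because every state $i\in Q$ already appears as $\psi(i,i)$, the image of the generator of its own component.

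The only genuinely substantive point to be checked is that each $\mathcal A_i$ really is a legitimate ordered subsemiautomaton, i.e. that the family of $1$\=/generated pieces covers all of $Q$ and glues correctly; this is exactly why one must index the components by \emph{all} states $i\in Q$ rather than a single initial state, since $\mathcal A$ need not itself be $1$\=/generated. I do not expect any serious obstacle here: the hardest step is merely bookkeeping of the tags in the disjoint union. One subtlety worth an explicit remark is that $\psi$ is \emph{not} claimed to be backward order preserving or injective — a state $q\in Q$ may appear with several tags $i$, and comparabilities across different components are forbidden in the disjoint union while their images may be comparable in $\mathcal A$ — so we obtain only a homomorphic image, which is precisely what the statement asserts. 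Having checked that $\psi$ is a surjective homomorphism of ordered semiautomata, the lemma follows directly from Definition~\ref{d:morphisms}.
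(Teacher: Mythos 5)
Your proposal is correct and follows essentially the same route as the paper's own proof: one $1$\=/generated component $Q_i=\{\,i\cdot u\mid u\in A^*\,\}$ for \emph{each} state $i\in Q$, the disjoint union of these, and the tag\=/forgetting projection, with surjectivity witnessed by the diagonal pairs $(i,i)$. The verifications (compatibility with actions, isotonicity, surjectivity) match the paper's, and your closing remark that the map is only a surjective homomorphism, not an isomorphism, is exactly the right caveat.
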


\begin{proof}
For every $q\in Q$, we consider the subset of $Q$ given by 
$Q_q=\{\, q\cdot u \mid u\in A^*\,\}$ 
consisting from all states reachable  
from $q$. Clearly, $Q_q$ form a 1\=/generated subsemiautomaton of  
$\mathcal A =(Q,A,\cdot, \le)$.
We consider disjoint union of all these semiautomata.
The set of all states of this ordered semiautomaton is 
$P=\{\,(p,q)\mid \text{there exists } u\in A^* \text{ such that } q\cdot u =p\,\}$.
We show that the mapping 
$\varphi :P \rightarrow Q$ given by the rule $\varphi((p,q))=p$
is a surjective homomorphism of ordered semiautomata.
Indeed, for $a\in A$, we have $(p,q)\cdot a= (p\cdot a,q)$, 
and hence
$\varphi((p,q)\cdot a)=\varphi((p\cdot a,q))=p\cdot a=\varphi((p,q))\cdot a$.
Moreover, $(p,q)\le (p',q')$ in  $P$ implies $q=q'$ and $p\le p'$, 
which means that $\varphi((p,q))\le\varphi((p,q'))$.
Finally, the surjectivity follows from
the fact $\{(q,q) \mid q\in Q\} \subseteq P$. 
 \qed\end{proof}

Since positive $\mathcal C$\=/varieties of languages are closed 
under taking preimages in morphisms
from the category $\mathcal C$
we need a construction on ordered
semiautomata which enables the 
recognition of 
such languages.

\begin{definition}\label{d:f-subautomaton}
Let $f: B^* \rightarrow A^*$ be a homomorphism and 
$\mathcal A=(Q,A,\cdot,\le)$ be an ordered semiautomaton.
By $\mathcal A^f$ we denote the semiautomaton
$(Q,B,\cdot^f,\le)$ where $q\cdot^f b= q\cdot f(b)$ 
for every  $q\in Q$ and $b\in B$.
We speak about $f${\em\=/renaming} of $\mathcal A$ and
we say that  $(P,B,\circ,\preceq)$ is an $f${\em\=/subsemiautomaton}
of $(Q,A,\cdot,\le)$ if it is a subsemiautomaton of 
$\mathcal A^f$.
In other words, if
$P\subseteq Q$, the order $\preceq$ is the restriction 
of $\le$, and $\circ$ is a restriction of $\cdot^f$.
\end{definition}

If we consider $f=\id_A : A^*\rightarrow A^*$, we can see 
that $(Q,A,\cdot,\le)^{\id}=(Q,A,\cdot,\le)$ and  that
$\id$\=/subsemiautomata of $(Q,A,\cdot,\le)$ 
are exactly its subsemiautomata.

\begin{lemma}\label{l:f-automaton-preimage}
Consider a homomorphism $ f:B^*\rightarrow A^*$ of monoids. 

(i) Let $L$ be a regular language which 
is recognized by an ordered automaton 
$(Q,A,\cdot,\le, i, F)$. 
Then the automaton $\mathcal B= (Q,B,\circ,\le, i, F)$, 
where $q\circ b= q\cdot f(b)$ for every 
$q\in Q$, $b\in B$, recognizes the language  
$f^{-1}(L)$.

(ii) Let $\mathcal A=(Q,A,\cdot,\le)$ be an ordered semiautomaton.
If $K\subseteq B^*$ is recognized by the ordered semiautomaton 
$\mathcal A^f$,
then there exists a language $L\subseteq A^*$ recognized by 
$\mathcal A$ such that
$K=f^{-1}(L)$.
\end{lemma}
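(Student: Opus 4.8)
The plan is to prove the two parts separately, with part~(i) being a direct computation and part~(ii) reducing to~(i) by exhibiting the right final set.

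For part~(i), I would set $\mathcal B=(Q,B,\circ,\le,i,F)$ with $q\circ b=q\cdot f(b)$, and first verify that $\mathcal B$ is a legitimate ordered automaton over $B$: the order $\le$ and the set $F$ are unchanged, so $F$ is still upward closed, and each action $q\mapsto q\circ b=q\cdot f(b)$ is isotone because it is a composition of the isotone actions $q\mapsto q\cdot a$ along the letters of $f(b)$. The heart of the argument is the identity $i\circ v=i\cdot f(v)$ for every $v\in B^*$, which follows by a routine induction on the length of $v$ using that $f$ is a monoid homomorphism (so $f(vb)=f(v)f(b)$ and $i\circ(vb)=(i\circ v)\cdot f(b)$). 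Granting this, I would write the chain of equivalences $v\in\nL_{\mathcal B}\iff i\circ v\in F\iff i\cdot f(v)\in F\iff f(v)\in L\iff v\in f^{-1}(L)$, which gives $\nL_{\mathcal B}=f^{-1}(L)$ as required.

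For part~(ii), I would start from the hypothesis that $K\subseteq B^*$ is recognized by the ordered semiautomaton $\mathcal A^f=(Q,B,\cdot^f,\le)$, which means there are an initial state $i\in Q$ and an upward closed set $F\subseteq Q$ such that $K=\nL_{(Q,B,\cdot^f,\le,i,F)}$. The key observation is that the same data $i$ and $F$ determine an ordered automaton $(Q,A,\cdot,\le,i,F)$ over the original alphabet $A$, since $F$ is upward closed in $(Q,\le)$ independently of which alphabet acts on $Q$; let $L=\nL_{(Q,A,\cdot,\le,i,F)}$ be the language it recognizes, so that $L$ is recognized by $\mathcal A$. Then $\mathcal A^f$ with final set $F$ is exactly the automaton $\mathcal B$ from part~(i) applied to this $L$, so part~(i) yields $\nL_{\mathcal A^f\text{-with-}F}=f^{-1}(L)$, and hence $K=f^{-1}(L)$ with $L$ recognized by $\mathcal A$.

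I do not expect a genuine obstacle here; both parts are essentially bookkeeping once the induction $i\circ v=i\cdot f(v)$ is in place. The one point that needs care is making sure in~(ii) that the upward\=/closedness of $F$ is a property of $(Q,\le)$ alone and is therefore inherited by the $A$\=/automaton $(Q,A,\cdot,\le,i,F)$; this is immediate because the order $\le$ is shared by $\mathcal A$ and $\mathcal A^f$. The remaining verification that the actions stay isotone after passing through $f$ is the same compositional argument as in~(i), so the whole proof is short.
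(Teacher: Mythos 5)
Your proposal is correct and follows essentially the same route as the paper: part~(i) is the same chain of equivalences (the paper leaves the identity $i\circ u=i\cdot f(u)$ implicit, which you carefully establish by induction), and part~(ii) reuses the initial state $i$ and the upward closed set $F$ from the recognition by $\mathcal A^f$ to define $L$ as the language of $(Q,A,\cdot,\le,i,F)$ and then invokes part~(i), exactly as the paper does.
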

\begin{proof}

(i)
For every $u\in B^*$, 
we have the following chain of equivalent formulas:
$$u\in f^{-1}(L),\ f(u)\in L,\
i\cdot f(u)\in F,\ i\circ u\in F, \
\text{and} \ u\in \nL_{\mathcal B}\, .$$

(ii)
If $K\subseteq B^*$ is recognized by the ordered semiautomaton 
$\mathcal A^f$ then there is a state $i\in Q$
 and an upward closed subset $F\subseteq Q$ such that $K=L_{\mathcal{B}}$, where
 $\mathcal B=(Q,B,\cdot^f,\le,i,F)$. Now we consider 
 $L=L_{\mathcal A'}$ , where
 $\mathcal A'=(Q,A,\cdot,\le,i,F)$. 
 Now the equality $K=f^{-1}(L)$ follows from 
 Lemma~\ref{l:f-automaton-preimage} (i).
 \qed\end{proof}
 
\begin{lemma}\label{l:f-subautomaton-vs-HSP}
Let $f$ be an arbitrary homomorphism $f:B^*\rightarrow A^*$.

(i) If an ordered semiautomaton 
  $\mathcal B=(P,A,\circ,\preceq)$ is a homomorphic image of 
  an ordered semiautomaton $\mathcal A= (Q,A,\cdot,\le)$, then 
  $\mathcal B^f$ is a homomorphic image of the ordered 
  semiautomaton $\mathcal A^f$.

  (ii) If an ordered semiautomaton  $\mathcal B=
  (P,A,\circ,\preceq)$ is a subsemiautomaton of an ordered 
  semiautomaton $\mathcal A=(Q,A,\cdot,\le)$ 
  then $\mathcal B^f$ is a subsemiautomaton of $\mathcal A^f$.

(iii) 
  If an ordered semiautomaton 
  $\mathcal B=\mathcal A_1 \times \dots \times \mathcal A_n$ is a product of a family of ordered semiautomata 
  $\mathcal A_1$, \dots , $\mathcal A_n$,
  then $\mathcal B^f=\mathcal A_1^f \times \dots \times \mathcal A_n^f$.  
\end{lemma}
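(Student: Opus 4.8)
The plan is to exploit the fact that the $f$\=/renaming operation $\mathcal A \mapsto \mathcal A^f$ leaves both the state set and the order untouched, changing only the action: the new action by a single letter $b\in B$ is precisely the old action by the \emph{word} $f(b)\in A^*$. Consequently each of the three assertions reduces to the word\=/level version of the corresponding property, and these word\=/level versions are the routine inductive extensions of the letter\=/level definitions. I would open the proof by recording the three facts I need: that a homomorphism of ordered semiautomata satisfies $\varphi(q\cdot u)=\varphi(q)\circ u$ for all $u\in A^*$ (the remark following Definition~\ref{d:morphisms}); that a subsemiautomaton, being closed under the action by every letter, is automatically closed under the action by every word; and that in a product the action by a word is computed componentwise, $(q_1,\dots,q_n)\cdot u=(q_1\cdot_1 u,\dots,q_n\cdot_n u)$, again by induction on $|u|$.

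For part (i) I would take the given surjective homomorphism $\varphi:Q\to P$ from $\mathcal A$ to $\mathcal B$ and claim that the \emph{same} map is a surjective homomorphism from $\mathcal A^f$ to $\mathcal B^f$. Surjectivity and isotonicity are immediate, since neither the state sets nor the orders are changed by renaming. The only thing to verify is compatibility with the new action: for $q\in Q$, $b\in B$ one computes $\varphi(q\cdot^f b)=\varphi(q\cdot f(b))=\varphi(q)\circ f(b)=\varphi(q)\circ^f b$, where the middle equality is the word\=/level homomorphism property applied to $f(b)$. For part (ii) I would keep the same underlying set $P\subseteq Q$ and check that it is a subsemiautomaton of $\mathcal A^f$. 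Closure is the only issue: for $p\in P$ and $b\in B$ we have $p\cdot^f b=p\cdot f(b)$, and since $P$ is closed under the action by every letter of $A$ it is closed under the action by the word $f(b)$, whence $p\cdot^f b\in P$. That the order and the action on $P$ in $\mathcal B^f$ are the restrictions of $\le$ and $\cdot^f$ is inherited from the corresponding restrictions defining $\mathcal B$ inside $\mathcal A$.

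For part (iii) I would compare the two ordered semiautomata coordinate by coordinate. Both $\mathcal B^f$ and $\mathcal A_1^f\times\dots\times\mathcal A_n^f$ have state set $Q_1\times\dots\times Q_n$, alphabet $B$, and the same componentwise order (the orders on the factors being unchanged by renaming), so it remains only to match the actions. Using the componentwise formula for the action by a word, $(q_1,\dots,q_n)\cdot^f b=(q_1,\dots,q_n)\cdot f(b)=(q_1\cdot_1 f(b),\dots,q_n\cdot_n f(b))$, which is exactly $(q_1\cdot_1^f b,\dots,q_n\cdot_n^f b)$, the action in the product of the renamings; hence the two semiautomata coincide.

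There is no real obstacle here; the single point requiring attention throughout is that $f(b)$ is in general a word rather than a letter, and may be empty, so one must invoke the word\=/level forms of the three facts listed at the outset rather than their definitional letter\=/level counterparts. All of these follow by the standard induction on word length and involve no further difficulty.
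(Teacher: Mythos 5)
Your proof is correct and follows essentially the same route as the paper's: the identity map on states works for all three parts, with the key computation $\varphi(q\cdot^f b)=\varphi(q\cdot f(b))=\varphi(q)\circ f(b)=\varphi(q)\circ^f b$ in (i) and the componentwise calculation in (iii) matching the paper verbatim. The only difference is that you spell out part (ii) and the word-level preliminaries (including the empty-word case), where the paper simply says ``this is clear''; that extra care is harmless and arguably an improvement.
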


\begin{proof}
(i) Let $\varphi$ be a surjective homomorphism from an  
 ordered semiautomaton $(Q,A,\cdot,\le)$
 onto $(P,A,\circ,\preceq)$. Then $\varphi$ is a isotone mapping from 
 $(Q,\le)$ to $(P,\preceq)$.
 The states and the  order in the semiautomaton
 $(Q,A,\cdot,\le) ^f$ resp.  $(P,A,\circ,\preceq)^f$
 are unchanged and hence $\varphi$ is an isotone mapping from  
 the ordered semiautomaton $(Q,A,\cdot,\le) ^f$ onto 
  $(P,A,\circ,\preceq)^f$.
 Now let $b\in B$ be an arbitrary letter and $q\in Q$ be an arbitrary state. 
 Then $\varphi(q \cdot^f b)=\varphi(q\cdot f(b))=\varphi(q) \circ f(b)=
 \varphi(q)\circ^f b$. Therefore $\varphi$ is a surjective 
 homomorphism of ordered semiautomata.
 
(ii) This is clear.
 
(iii) Let 
 $\mathcal A_j=(Q_j,A,\cdot_j,\le_j)$ be an ordered semiautomaton for every 
 $j=1,\dots , n$. 
 Let $(P,A,\circ,\preceq)$ be the product
 $\mathcal A_1\times \dots \times \mathcal A_n$
 and $(R,B,\diamond,\sqsubseteq)$ be the product of ordered 
 semiautomata $\mathcal A_1 ^f\times \dots \times \mathcal A_n^f$.
 Directly from the definitions we have that $P=R=Q_1\times \dots \times Q_n$ 
 and $\preceq\ =\ \sqsubseteq$.
 Furthermore, for an arbitrary element $(q_1,\dots ,q_n)$ from the set $P=R$, we have
$$
(q_1,\dots ,q_n)  \circ^f b = (q_1,\dots ,q_n)  \circ f(b) = $$ 
$$ = (q_1\cdot_1 f(b), \dots , q_n\cdot_n f(b)) 
= (q_1\cdot_1^f b, \dots ,q_n\cdot_n^f b )$$
 which is equal to 
 $(q_1,\dots ,q_n) \diamond b$. 
 This means that the action by each letter $b$ is defined in 
the ordered semiautomaton $(P,A,\circ,\preceq)^f$ 
in the same way as in the ordered semiautomaton 
$(R,B,\diamond,\sqsubseteq)$. 
\qed\end{proof}


\section{Eilenberg Type Correspondence for 
Positive $\mathcal C$\=/Varieties of Ordered Semiautomata}
\label{s:eilenberg}


\begin{definition}\label{d:variety-automata}
Let $\mathcal C$ be a category of homomorphisms.
A positive $\mathcal C$\=/variety of ordered semiautomata $\mathbb{V}$
associates to every non\=/empty finite alphabet $A$ a
class $\mathbb{V}(A)$ of ordered semiautomata over 
$A$ in such a way  that
\begin{itemize}
\item
$\mathbb{V}(A)\not = \emptyset$ is closed under disjoint unions
and direct products of non\=/empty finite families, and homomorphic images,
\item  $\mathbb {V}$ is closed under $f$\=/subsemiautomata for
all $(f: B^* \rightarrow A^*)\in \mathcal C$.
\end{itemize}
\end{definition}

\begin{remark}\label{r:trivial-variety}
We define $\mathbb{T}(A)$ as a class of all trivial ordered
semiautomata over an alphabet $A$, i.e.
$\mathbb{T}(A)$ contains all semiautomata $\mathcal T_n(A)$ and all their 
isomorphic copies.
By Lemma~\ref{l:disjoint-via-product},
the first condition in the definition of positive $\mathcal C$\=/variety 
of ordered semiautomata
can be written equivalently in the following way: 
$\mathbb T (A) \subseteq \mathbb{V}(A)$ and 
$\mathbb{V}(A)$ is closed under direct products of non\=/empty finite
families and homomorphic images.
In particular, the class of all
trivial ordered semiautomata $\mathbb T$ forms the smallest positive 
$\mathcal C$\=/variety of ordered semiautomata
whenever the considered category $\mathcal C$ contains all isomorphisms.

As mentioned in the introduction, Theorem~\ref{t:eilenberg-ordered}
has already been proved in special cases.
The technical difference is that \'Esik and Ito in~\cite{esik-ito} used disjoint union 
of automata and Chaubard, Pin and Straubing~\cite{pin-str} 
did not use this construction because they used trivial automata instead of them.
\end{remark}

Now we are ready to state the 
Eilenberg type correspondence for positive $\mathcal C$\=/varieties of ordered
semiautomata.

For each positive $\mathcal C$\=/variety of ordered semiautomata $\mathbb V$, we denote by
$\alpha(\mathbb V)$ the class of regular languages given by the following formula
$$(\alpha(\mathbb V)) (A)=
\{\,L\subseteq A^* \mid \ol{\mathcal{D}_L}\in \mathbb V (A)\, \}\, .$$

For each positive $\mathcal C$\=/variety of regular languages $\mathcal L$,
we denote by $\beta(\mathcal L)$ the positive $\mathcal C$\=/variety of 
ordered semiautomata 
generated by all ordered semiautomata $\ol{\mathcal{D}_L}$, 
where $L\in \mathcal L (A)$ for some 
alphabet $A$.

The following result can be obtained as a combination of Theorem 5.1 of \cite{pin-straubing}
 and the main result of \cite{pin-str}. We show a direct and detailed proof here.

\begin{theorem}\label{t:eilenberg-ordered}
Let $\mathcal C$ be a category of homomorphisms.
The mappings $\alpha$ and $\beta$ are mutually inverse isomorphisms
between the lattice of all positive $\mathcal C$\=/varieties of 
ordered semiautomata and the lattice
of all positive $\mathcal C$\=/varieties of regular languages.
\end{theorem}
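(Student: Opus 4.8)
The plan is to verify that $\alpha$ and $\beta$ are well\=/defined and order\=/preserving, and then to establish the two equalities $\beta(\alpha(\mathbb{V}))=\mathbb{V}$ and $\alpha(\beta(\mathcal L))=\mathcal L$; since both maps are monotone with respect to inclusion, these equalities make them mutually inverse order\=/isomorphisms, and an order\=/isomorphism between lattices automatically preserves finite meets and joins, giving the claimed lattice isomorphisms. Monotonicity itself is routine: $\alpha$ preserves inclusion directly from its definition, while $\beta$ does so because enlarging $\mathcal L$ only enlarges its generating set of canonical semiautomata.

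First I would check that $\alpha(\mathbb{V})$ really is a positive $\mathcal C$\=/variety of languages. Closure under finite unions and intersections follows by recognizing $L_1\cap L_2$ and $L_1\cup L_2$ on the product $\ol{\mathcal D_{L_1}}\times\ol{\mathcal D_{L_2}}$ (Lemma~\ref{l:product-arbitrary-subset}) and then passing to the canonical semiautomaton via minimality (Lemma~\ref{l:minimality-canonical-dfa}), using that $\mathbb{V}$ is closed under products, subsemiautomata (the $\id$\=/case of $f$\=/subsemiautomata) and homomorphic images. Closure under quotients uses Lemma~\ref{l:quotients}, which realizes a quotient on the \emph{same} underlying semiautomaton $\ol{\mathcal D_L}$ by changing only the initial and final states, and closure under preimages uses Lemma~\ref{l:f-automaton-preimage}(i) together with the fact that $(\ol{\mathcal D_L})^f$ is itself an $f$\=/subsemiautomaton; in each case one finishes by invoking Lemma~\ref{l:minimality-canonical-dfa}. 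That $\beta(\mathcal L)$ is a positive $\mathcal C$\=/variety is immediate, since it is defined as a generated variety.

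For $\beta(\alpha(\mathbb{V}))=\mathbb{V}$, the inclusion $\subseteq$ is clear, as every generator $\ol{\mathcal D_L}$ of $\beta(\alpha(\mathbb{V}))$ already satisfies $\ol{\mathcal D_L}\in\mathbb{V}$ by definition of $\alpha$. For the reverse inclusion I would take $\mathcal A\in\mathbb{V}(A)$ and reconstruct it inside $\beta(\alpha(\mathbb{V}))$. The key observation is that any language $L$ recognized by a subsemiautomaton of $\mathcal A$ is also recognized by $\mathcal A$ itself (extend the final set to its upward closure in $\mathcal A$), so $\ol{\mathcal D_L}\in\mathbb{V}(A)$ by minimality, whence $L\in(\alpha(\mathbb{V}))(A)$ and $\ol{\mathcal D_L}$ is a generator of $\beta(\alpha(\mathbb{V}))$. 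Applying Lemma~\ref{l:reconstruction-product} to the $1$\=/generated subsemiautomata of $\mathcal A$ and then Lemma~\ref{l:reconstruction-union} exhibits $\mathcal A$ as a homomorphic image of a disjoint union of subsemiautomata of products of such canonical semiautomata, all of which remain inside $\beta(\alpha(\mathbb{V}))$.

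The hard part is $\alpha(\beta(\mathcal L))\subseteq\mathcal L$, where one must control \emph{all} languages recognized by members of the generated variety. The plan is to introduce the auxiliary class $\mathbb{W}$ with $\mathbb{W}(A)=\{\,\mathcal A\mid \text{every language over } A \text{ recognized by } \mathcal A \text{ lies in } \mathcal L(A)\,\}$ and to prove that $\mathbb{W}$ is a positive $\mathcal C$\=/variety containing all $\ol{\mathcal D_K}$ with $K\in\mathcal L$. Membership of the generators rests on Lemma~\ref{l:quotients-in-canonical} together with Lemma~\ref{l:quotients}: every language recognized by $\ol{\mathcal D_K}$ is a left quotient of a finite union of finite intersections of right quotients of $K$, hence lies in $\mathcal L$. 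Closure of $\mathbb{W}$ under the variety operations is then verified by tracking recognized languages — homomorphic images via Lemma~\ref{l:morphic-image-odfa}, products via Lemma~\ref{l:product-arbitrary-subset}(ii), subsemiautomata via the observation above, and $f$\=/subsemiautomata via Lemma~\ref{l:f-automaton-preimage}(ii) together with closure of $\mathcal L$ under preimages in $\mathcal C$. Once $\mathbb{W}$ is a variety containing the generators we get $\beta(\mathcal L)\subseteq\mathbb{W}$, so any $L$ with $\ol{\mathcal D_L}\in\beta(\mathcal L)(A)$ is recognized by a member of $\mathbb{W}(A)$ and therefore lies in $\mathcal L(A)$; the reverse inclusion $\mathcal L\subseteq\alpha(\beta(\mathcal L))$ is immediate from the definitions. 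I expect the main obstacle to be precisely this bookkeeping for $\mathbb{W}$, above all the verification of closure under $f$\=/subsemiautomata, since it is the only step that genuinely invokes the category $\mathcal C$ and forces one to pass between the alphabets $A$ and $B$.
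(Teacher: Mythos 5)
Your proposal is correct, and for three of the four tasks (well-definedness of $\alpha$, triviality of $\beta$, and $\beta\circ\alpha=\mathrm{id}$ via the reconstruction Lemmas~\ref{l:reconstruction-product} and~\ref{l:reconstruction-union}) it follows essentially the paper's own route. The genuine difference is in the direction you rightly identify as the hard one, $\alpha(\beta(\mathcal L))\subseteq\mathcal L$. The paper first establishes Lemma~\ref{l:hsp}, the explicit generation formula $(\beta(\mathcal L))(A)=\OH\OS\OP(\{\,\ol{\mathcal D_L}\mid L\in\mathcal L(A)\,\}\cup\mathbb T(A))$, whose proof requires the operator-commutation facts of Lemmas~\ref{l:operators-properties} and~\ref{l:f-subautomaton-vs-HSP} and a rather delicate argument that the class $\OH\OS\OP(\cdots)$ is closed under $f$-renaming; it then peels off $\OH$, $\OS$, $\OP$ from a recognizer of the given language. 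You instead bound the generated variety from above by the auxiliary class $\mathbb W$ of all ordered semiautomata whose every recognized language lies in $\mathcal L$; closure of $\mathbb W$ under each operation is nearly immediate from the same recognition lemmas (homomorphic images by Lemma~\ref{l:morphic-image-odfa}, products by Lemma~\ref{l:product-arbitrary-subset}, and crucially $f$-subsemiautomata by a one-line appeal to Lemma~\ref{l:f-automaton-preimage}(ii) plus closure of $\mathcal L$ under preimages). This eliminates the universal-algebra operator calculus from the proof entirely, at the price of losing the explicit description of $\beta(\mathcal L)$, which the paper reuses elsewhere (for instance in the selfduality argument of Proposition~\ref{p:variety-as-special-positive}). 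Two minor remarks: your upward-closure observation --- a language recognized by a subsemiautomaton is recognized by the ambient semiautomaton, by replacing the final set with its upward closure --- is exactly the fact the paper leaves implicit (``in the same way'') in Lemma~\ref{l:eilenberg-part4}, so stating it explicitly is a genuine improvement; and in checking that $\mathbb W$ is a variety you omit disjoint unions, but that case is trivial since a language recognized by a disjoint union is recognized by one of its components.
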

\begin{proof}
 First of all, we fix a category of homomorphism $\mathcal C$ for the whole proof.
The proof will be done when we show the following statements:
\begin{enumerate}
\item $\alpha$ is correctly defined,
i.e. for every positive $\mathcal C$\=/variety of ordered semiautomata
$\mathbb V$, the class 
$\alpha(\mathbb V)$ is a positive $\mathcal C$\=/variety of languages.
\item $\beta$  is correctly defined,  
i.e. for every positive $\mathcal C$\=/variety
of regular languages $\mathcal L$, the class
$\beta(\mathcal L)$ is a positive $\mathcal C$\=/variety of ordered semiautomata.
\item $\beta \circ \alpha = \mathrm{id}$, i.e.
for each positive $\mathcal C$\=/variety of ordered semiautomata $\mathbb V$ 
we have
$\beta (\alpha(\mathbb V)) = \mathbb V$.
\item $\alpha\circ \beta =\mathrm{id}$,  i.e. 
for each positive $\mathcal C$\=/variety of languages $\mathcal L$, we have
$\alpha (\beta(\mathcal L)) = \mathcal L$.
\end{enumerate}

We prove these facts in separate lemmas. 
The exception is the second item which trivially follows from
the definition of the mapping $\beta$. 
Before the formulation of these lemmas we prove some technicalities.

\begin{lemma}\label{l:formula-for-alpha}
For each positive $\mathcal C$\=/variety of ordered semiautomata 
$\mathbb V$ and an alphabet $A$ we have 
$$(\alpha(\mathbb V)) (A)=\{\,L\subseteq A^* \mid 
\exists\, \mathcal A=(Q,A,\cdot,\le, i, F) :  
L=\nL_{\mathcal A}\ \text{ and }\  \ol{\mathcal A} \in \mathbb V(A)\, \}\, .$$
\end{lemma}
\begin{proof}
The inclusion ``$\subseteq$'' is trivial, because one can take for the ordered 
automaton $\mathcal A$ the canonical automaton $\mathcal{D}_L$.
To prove the opposite inclusion, let $L=\nL_{\mathcal A}$, where
$\mathcal A=(Q,A,\cdot,\le, i, F)$ with $\ol{\mathcal{A}} \in \mathbb V(A)$.
By Lemma~\ref{l:minimality-canonical-dfa} and the assumption that 
$\mathbb V$ is closed under taking subsemiautomata and homomorphic images,
we have that $\ol{\mathcal{D}_L}\in \mathbb V(A)$.
Therefore $L\in (\alpha(\mathbb V)) (A)$.
\qed\end{proof}

\begin{lemma} If $\mathbb V$ is a positive $\mathcal C$\=/variety of ordered semiautomata, then
$\alpha(\mathbb V)$ is a positive $\mathcal C$\=/variety of regular languages.
\end{lemma}
\begin{proof}

We need to prove that $(\alpha(\mathbb{V}))(A)$ is closed under 
taking intersections, unions and quotients.
Secondly, we must show the closure property with respect 
to taking preimages in morphisms from the category
$\mathcal C$.

For each $A$, the class $(\alpha(\mathbb V)) (A)$
given by the formula from Lemma~\ref{l:formula-for-alpha}
is closed under unions and intersections of finite families,
since $\mathbb V(A)$ is closed under  products  of finite families
(see Lemma~\ref{l:product-arbitrary-subset}).
The class $(\alpha(\mathbb V)) (A)$ is also closed under 
quotients, since we can change initial and final states freely
(see Lemma~\ref{l:quotients}).

Furthermore, by  Lemma~\ref{l:f-automaton-preimage},
we see the following observation.
Since $\mathbb V$ is closed under taking 
$f$\=/subsemiautomata for each homomorphism  $f$ from $\mathcal C$, the class  
$\alpha(\mathbb V)$ is closed under preimages in the same homomorphisms.
\qed\end{proof}

All three constructions -- direct product, homomorphic image and 
subsemiautomaton --
are standard constructions of universal algebra.
From the general theory (see e.g. Burris and 
Sankappanavar~\cite{burris1981course}) 
we want to use only the fact that 
if one needs to generate the smallest class closed with respect to all three constructions together 
and containing a class
$\XX$, then it is enough to 
consider a homomorphic images of subalgebras in products of algebras form $\XX$.
Note that from this point of view, an alphabet $A$ is fixed, and $A$ serves as a set of unary function symbols.
Then a semiautomaton over $A$ is a unary algebra.

For a class of ordered semiautomata $\XX$ over a fixed alphabet $A$ we denote
by 
\begin{itemize}
\item $\OH\XX$  the class of all homomorphic images of ordered semiautomata from $\XX$,
\item $\OI\XX$  the class of all isomorphic copies of ordered semiautomata from $\XX$,
\item $\OS\XX$ the class of all subsemiautomata of ordered semiautomata from $\XX$,
\item $\OP\XX$ the class of all products of non\=/empty finite families of
ordered semiautomata from $\XX$,
\item $\OD\XX$ the class of all disjoint unions of non\=/empty finite families of
ordered semiautomata from $\XX$.
\end{itemize}
It is clear that the operators $\OH$, $\OI$ and $\OS$
are idempotent, i.e. for each class of ordered semiautomata 
$\XX$ we have $\OH\OH\XX=\OH\XX$ etc.
Furthermore,  $\OI\OP\OP=\OI\OP$ and $\OI\OD\OD=\OI\OD$.

\begin{lemma}\label{l:operators-properties}
 For each class $\XX$ of ordered 
 semiautomata over a fixed alphabet $A$, we have:
$$\OD\XX \subseteq \OH\OP(\XX\cup \mathbb{T}(A))$$
and 
$$\OP\OH\XX \subseteq \OH\OP\XX,\quad 
\OS\OH\XX \subseteq \OH\OS\XX, \quad \OP\OS
\XX \subseteq \OS\OP\XX\, .$$
\end{lemma}
\begin{proof} The first property follows from Lemma~\ref{l:disjoint-via-product}.
 The other properties are well\=/known facts from universal algebra, 
 see e.g.~\cite[Chapter II, Section 9]{burris1981course} --
 a modification for the ordered case is straightforward. 
\qed\end{proof}

\begin{lemma}\label{l:hsp}
For each positive $\mathcal C$\=/variety of regular languages 
$\mathcal L$ we have 
$$(\beta(\mathcal L))(A)=
\OH\OS\OP(\,\{\ol{\,\mathcal{D}_L}\mid L\in \mathcal{L}(A)\,\}
\cup\mathbb{T}(A)\,)\, .$$
\end{lemma}
\begin{proof}
 For every alphabet $A$, we denote $\XX= \{\,\ol{\mathcal{D}_L} 
\mid L\in \mathcal{L}(A)\,\}\cup\mathbb{T}(A)$ and  we denote
by $\mathbb{L}(A)$ the 
right\=/hand side of the formula in the statement, i.e. 
$\mathbb{L}(A)=\OH\OS\OP\XX$.

Since $\beta(\mathcal L)$ is a positive $\mathcal C$\=/variety of ordered semiautomata,
we have $\mathbb{T}(A) \subseteq (\beta(\mathcal L))(A)$ by Remark~\ref{r:trivial-variety}. 
Therefore $\XX \subseteq (\beta(\mathcal L))(A)$ and the inclusion 
$\mathbb{L}(A)\subseteq 
(\beta(\mathcal L))(A)$ follows from the fact that $\beta(\mathcal L)(A)$ is closed under
operators  $\OH$, $\OS$ and $\OP$. 

To prove the opposite inclusion 
$(\beta(\mathcal L))(A) \subseteq \mathbb{L}(A)$, we first  prove that $\mathbb L$ is a positive
$\mathcal C$\=/variety of ordered semiautomata. 

By the first property in 
Lemma~\ref{l:operators-properties}, we get 
$$\OD(\mathbb{L}(A))=\OD\OH\OS\OP\XX\subseteq 
\OH\OP(\OH\OS\OP\XX)\, .$$ 
By the other properties of Lemma~\ref{l:operators-properties} and idempotency of the operators
we get $\OH\OP\OH\OS\OP\XX\subseteq 
\OH\OS\OP\XX=\mathbb{L}(A)$. Thus  
$\OD(\mathbb{L}(A)) \subseteq \mathbb{L}(A)$.  
In the same way one can prove another inclusions 
$\OH(\mathbb{L}(A)) \subseteq \mathbb{L}(A)$,
$\OS(\mathbb{L}(A)) \subseteq \mathbb{L}(A)$, 
$\OP(\mathbb{L}(A)) \subseteq \mathbb{L}(A)$.
Therefore $\mathbb{L}(A)$ is closed under all four operators 
$\OH$, $\OS$, $\OP$ and $\OD$.

It remains to prove that $\mathbb{L}$ is closed under $f$\=/renaming.
So, let $f: B^* \rightarrow A^*$ belong to $\mathcal C$. 
We need to show that $(Q,A,\cdot,\le)^f$
belongs to $\mathbb{L}(B)$ whenever $(Q,A,\cdot,\le)$ is from $\mathbb{L}(A)$. 
For an arbitrary set $\YY$ of ordered semiautomata 
over the alphabet $A$, 
we denote 
${\YY}^f=\{\,(Q,A,\cdot,\le)^f 
\mid (Q,A,\cdot,\le)\in \YY\,\}$.
Using this notation,  we need to show that 
$(\mathbb{L}(A))^f\subseteq \mathbb{L}(B)$.

At first, we show a weaker inclusion ${\XX}^f\subseteq \mathbb L(B)$.
Trivially $(\mathbb{T}(A))^f=(\mathbb{T}(B))$.
Now let $L$ be an arbitrary language from $\mathcal {L}(A)$. 
We consider 
$(D_L,A,\cdot,\subseteq)^f$ which is an ordered semiautomaton over $B$.
By Lemma~\ref{l:f-automaton-preimage},
the set $Z$ of all regular languages which are recognized 
by the ordered semiautomaton 
$(D_L,A,\cdot,\subseteq)^f$ contains only languages 
of the form $f^{-1}(K)$, where $K$ is recognized by 
$(D_L,A,\cdot,\subseteq)$. 
Since $\mathcal L(A)$ is closed under unions, intersections 
and quotients, every such language $K$ belongs 
to $\mathcal L(A)$
by Lemmas~\ref{l:quotients} and~\ref{l:quotients-in-canonical}.
This means that the set $Z$ 
is a subset of $\mathcal L (B)$ because $\mathcal L$ is closed under 
preimages in the homomorphism $f$.
Therefore $\mathbb L(B)$ contains all canonical ordered semiautomata of languages from $Z$.
Finally, $(D_L,A,,\cdot,\subseteq)^f$ can be reconstructed from these canonical 
ordered semiautomata of languages from $Z$: 
by Lemma~\ref{l:reconstruction-union}, 
the ordered semiautomaton $(D_L,A,\cdot,\subseteq)^f$ is a 
homomorphic image of a disjoint union 
of certain subsemiautomata which are isomorphic, 
by Lemma~\ref{l:reconstruction-product},
to subsemiautomata of products
of the canonical ordered semiautomata of languages from $Z$.
Hence $(D_L,A,\cdot,\subseteq)^f$ belongs to 
$\mathbb L(B)$ which is closed under homomorphic images, 
subsemiautomata, products and disjoint unions
as we proved above.
So, we proved ${\XX}^f\subseteq \mathbb L(B)$.

Now Lemma~\ref{l:f-subautomaton-vs-HSP} has the following 
consequences
$(\OH\YY)^f\subseteq \OH({\YY}^f)$, 
$(\OS\YY)^f\subseteq \OS({\YY}^f)$ and
$(\OP\YY)^f\subseteq \OP({\YY}^f)$ for an 
arbitrary set of ordered semiautomata $\YY$ over the 
alphabet $A$.
If we use all these properties we get 
$(\mathbb L(A))^f=(\OH\OS\OP\XX)^f\subseteq 
\OH\OS\OP({\XX}^f)\subseteq 
\OH\OS\OP (\mathbb L(B))=\mathbb L(B)$
because $\mathbb{L}(B)$ is closed under all three operators.
Hence
$\mathbb L$ is closed under taking $f$\=/subsemiautomata and 
therefore
$\mathbb L$ is a positive $\mathcal C$\=/variety of ordered semiautomata.

The inclusion $(\beta(\mathcal L))(A)\subseteq  \mathbb{L}(A)$
follows from the definition of $\beta(\mathcal L)$ which is 
the smallest positive
$\mathcal C$\=/variety of ordered semiautomata containing 
$\XX$.
Since the opposite inclusion is also proved we have finish 
the proof of the lemma. 
\qed\end{proof}

\begin{lemma}\label{l:eilenberg-part3}
For each positive $\mathcal C$\=/variety of ordered semiautomata 
$\mathbb V$ we have
$\beta (\alpha(\mathbb V)) = \mathbb V$.
\end{lemma}
\begin{proof}
Let $A$ be an arbitrary alphabet.
By Lemma~\ref{l:hsp},
$$(\beta (\alpha(\mathbb V))) (A)=\OH\OS\OP 
(\,\{\ol{\mathcal{D}_L}\mid L\in (\alpha(\mathbb V))(A)\}
\cup\mathbb{T}(A)\,)\, .$$
We denote $\XX=
\{\,\ol{\mathcal{D}_L}\mid L\in (\alpha(\mathbb V))(A)\,\}$.
If we use the definition of the mapping $\alpha$ then we see 
that
$\XX=\{\,(Q,A,\cdot,\le) \in \mathbb{V}(A)\mid \exists 
L\subseteq A^* : 
\ol{\mathcal{D}_L}=(Q,A,\cdot,\le)\,\}$.
In particular $\XX\subseteq \mathbb V (A)$.
Since we also have $\mathbb T(A)\subseteq \mathbb V(A)$
we see that $\XX \cup\mathbb{T}(A)\subseteq 
\mathbb V(A)$.
Hence  
$$(\beta (\alpha(\mathbb V))) (A)=\OH\OS\OP(\, \XX 
\cup\mathbb{T}(A)\,)\subseteq \mathbb V(A)$$
because $\mathbb V(A)$ is closed under taking homomorphic images, 
subsemiautomata and products.

In the proof of Lemma~\ref{l:hsp} we already saw that 
every ordered semiautomaton $(Q,A,\cdot,\le)$ can be reconstructed 
from the canonical ordered automata 
of languages which are recognized by $(Q,A,\cdot,\le)$ by Lemma~\ref{l:reconstruction-product}
and~\ref{l:reconstruction-union}.
Therefore $\mathbb V(A) \subseteq 
\OH\OS\OP ( \XX \cup\mathbb{T}(A))$ and we proved the 
equality
$\mathbb V(A) = \OH\OS\OP ( \XX \cup\mathbb{T}(A))$,
which means that $\beta (\alpha(\mathbb V)) = \mathbb V$.
\qed\end{proof}

\begin{lemma}\label{l:eilenberg-part4}
Let  $\mathcal L$ be a positive $\mathcal C$\=/variety of languages $\mathcal L$.
Then $\alpha (\beta(\mathcal L)) = \mathcal L$.
\end{lemma}
\begin{proof}
We want to prove that for every $A$ the equality 
$(\alpha (\beta(\mathcal L))) (A) = \mathcal L (A)$ holds.
Let $L\in \mathcal L(A)$ be an arbitrary language.
By the definition of the mapping $\beta$, we have 
$\ol{\mathcal{D}_L}\in (\beta(\mathcal L)) (A)$.
Therefore, by definition of $\alpha$, we have 
$L\in \alpha (\beta(\mathcal L)) (A)$
and we have proved the inclusion ``$\supseteq$''.

To prove the opposite one, 
let $K\in \alpha ((\beta(\mathcal L))) (A)$ be an 
arbitrary language.
Then there is an ordered automaton 
$\mathcal A=(Q,A,\cdot ,\le, i, F)$
such that $K=\nL_{\mathcal A}$ 
and 
$\ol{\mathcal A}\in (\beta(\mathcal L))(A)
=\OH\OS\OP(\,\{\,\ol{\mathcal{D}_L}\mid L
\in \mathcal{L}(A)\,\}\cup\mathbb{T}(A)\,)$.
If $K$ is recognized by $\mathcal A$, where 
$\ol{\mathcal A}\in \OH\XX$ for the
class of ordered semiautomata 
$\XX=\OS\OP(\,\{\,\ol{\mathcal{D}_L}\mid L\in 
\mathcal{L}(A)\,\}\cup\mathbb{T}(A)\,)$, then
there is an ordered automaton $\mathcal B$ 
such that $\ol{\mathcal A}$ is a homomorphic image of 
$\ol{\mathcal B}\in\XX$.
By Lemma~\ref{l:morphic-image-odfa},
the language $K$ is recognized 
by $\ol{\mathcal B}$.
Thus we can assume that $\mathcal A$ belongs to 
$\OS\OP(\,\{\,\ol{\mathcal{D}_L}\mid L\in 
\mathcal{L}(A)\,\}\cup\mathbb{T}(A)\,)$.
In the same way we can also assume that 
$\ol{\mathcal A}$ belongs to 
$\OP(\,\{\,\ol{\mathcal{D}_L}\mid L\in \mathcal{L}(A)\,\}
\cup\mathbb{T}(A)\,)$.
By Lemma~\ref{l:product-arbitrary-subset}, we know that 
$K$ is a finite union of 
finite intersections of languages which are recognized by 
ordered semiautomata
from the class $\{\ol{\,\mathcal{D}_L}\mid 
L\in \mathcal{L}(A)\,\}\cup\mathbb{T}(A)$.
Furthermore, trivial semiautomata recognize only languages $\emptyset$ 
and $A^*$ which belong to every  $\mathcal L(A)$, hence 
we may consider $\{\,\ol{\mathcal{D}_L}\mid 
L\in \mathcal{L}(A)\,\}$ instead of $\{\,\ol{\mathcal{D}_L}\mid 
L\in \mathcal{L}(A)\,\}\cup\mathbb{T}(A)$ in the previous sentence. 
Since 
the canonical automaton $\mathcal{D}_L$
recognizes only finite unions of finite intersections of
quotients of the language $L$ 
(by Lemmas~\ref{l:quotients} 
and~\ref{l:quotients-in-canonical}), and since 
$\mathcal L(A)$ is closed under taking quotients, 
unions and intersections,
we see that $K$ belongs to $\mathcal L(A)$.
\qed\end{proof}

The previous lemma finishes the proof of 
Theorem~\ref{t:eilenberg-ordered}.
\qed\end{proof}


\section{$\mathcal C$\=/Varieties of Semiautomata}

For an ordered semiautomaton $(Q,A,\cdot,\le)$ we define its 
{\it dual}
$(Q,A,\cdot,\le)^{\Od}=(Q,A,\cdot, \le^{\Od})$ where 
$\le^{\Od}$ is the dual order to $\le$,
i.e. $p\le^{\Od} q$ if and only if $q\le p$.
Instead of the symbol $\le^{\Od}$ we usually use the symbol $\ge$.
Trivially, the resulting structure $(Q,A,\cdot,\ge)$ is also 
an ordered 
semiautomaton.
For a positive $\mathcal C$\=/variety of ordered semiautomata $\mathbb V$ 
we denote by $\mathbb V^{\Od}$ its dual, 
i.e for every alphabet $A$ we consider 
$\mathbb V^{\Od}(A)=\{\,(Q,A,\cdot,\le)^{\Od}\mid 
(Q,A,\cdot,\le)\in\mathbb V(A)\, \}$.
It is clear that $(\mathbb V^\Od)^{\Od}=\mathbb V$ and that 
$\mathbb V^{\Od}$ 
is a  positive $\mathcal C$\=/variety of ordered semiautomata.

We say that $\mathbb V$ is {\it selfdual} if 
$\mathbb V^{\Od}=\mathbb V$.
In other words, $\mathbb V$ is selfdual if and only if 
every $\mathbb V(A)$ is closed under taking duals of its members.
An alternative characterization follows.

\begin{lemma}\label{l:selfdual}
Let $\mathbb V$ be a positive $\mathcal C$\=/variety of ordered semiautomata.
Then $\mathbb V$ is selfdual if and only if for each alphabet $A$,
we have that:
$$(Q,A,\cdot,\le)\in \mathbb V (A) \text{ implies } 
(Q,A,\cdot,=)\in\mathbb V(A)\, .$$
\end{lemma}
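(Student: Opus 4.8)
The plan is to prove both implications by exploiting two elementary maps on the state set: the identity, which relates the discretely ordered version of a semiautomaton to any ordering, and the diagonal of a product, which intersects an order with its dual.

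For the implication from right to left, I first observe that for every ordered semiautomaton $(Q,A,\cdot,\le)$ the identity map on $Q$ is a surjective homomorphism from $(Q,A,\cdot,=)$ onto $(Q,A,\cdot,\ge)$: it trivially commutes with the actions by letters, and it is isotone because $p=q$ forces $\varphi(p)=\varphi(q)$, whence $\varphi(p)\ge\varphi(q)$ by reflexivity. Hence $(Q,A,\cdot,\ge)$ is a homomorphic image of $(Q,A,\cdot,=)$. Now assume the displayed condition holds. Given $(Q,A,\cdot,\le)\in\mathbb V(A)$, the hypothesis yields $(Q,A,\cdot,=)\in\mathbb V(A)$, and since $\mathbb V(A)$ is closed under homomorphic images we conclude $(Q,A,\cdot,\ge)\in\mathbb V(A)$. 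As this holds for every member of every $\mathbb V(A)$, the variety $\mathbb V$ is selfdual.

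For the converse, assume $\mathbb V$ is selfdual and let $(Q,A,\cdot,\le)\in\mathbb V(A)$. Then also $(Q,A,\cdot,\ge)\in\mathbb V(A)$, and by closure under products the product $\mathcal P=(Q,A,\cdot,\le)\times(Q,A,\cdot,\ge)$ lies in $\mathbb V(A)$. Consider the diagonal $\Delta=\{\,(q,q)\mid q\in Q\,\}$. Since $(q,q)\cdot a=(q\cdot a,q\cdot a)\in\Delta$, it is a subsemiautomaton of $\mathcal P$; because $\id_A\in\mathcal C$, the class $\mathbb V$ is closed under ordinary ($\id$\=/)subsemiautomata, so $\Delta\in\mathbb V(A)$. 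The key computation is that the order inherited by $\Delta$ is equality: $(p,p)\le(q,q)$ in $\mathcal P$ means $p\le q$ and $p\ge q$ simultaneously, that is $p=q$. Therefore the map $q\mapsto(q,q)$ is an isomorphism between $(Q,A,\cdot,=)$ and $\Delta$; in particular $(Q,A,\cdot,=)$ is a homomorphic image of $\Delta$, and closure under homomorphic images gives $(Q,A,\cdot,=)\in\mathbb V(A)$, as desired.

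The argument is essentially forced once the right maps are written down, so I expect no serious obstacle. The only points demanding care are verifying that the identity and diagonal maps are genuine homomorphisms of ordered semiautomata---namely that they are isotone and commute with the letter actions---and recording explicitly that $\id_A\in\mathcal C$, so that the closure under subsemiautomata needed in the converse is indeed available from the definition of a positive $\mathcal C$\=/variety of ordered semiautomata.
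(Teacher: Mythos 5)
Your proof is correct and follows essentially the same route as the paper: one direction uses the identity map as a surjective homomorphism from the discretely ordered semiautomaton onto a (dually) ordered one, and the other realizes $(Q,A,\cdot,=)$ as the diagonal subsemiautomaton of the product $(Q,A,\cdot,\le)\times(Q,A,\cdot,\ge)$. Your explicit remarks about $\id_A\in\mathcal C$ and the projection $(q,q)\mapsto q$ merely make precise what the paper leaves implicit.
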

\begin{proof}
If $\mathbb V^d=\mathbb V$ and
$(Q,A,\cdot,\le)\in\mathbb V(A)$ then we also have $(Q,A,\cdot,\ge)\in \mathbb V(A)$.
Now the ordered semiautomaton $(Q,A,\cdot,=)$ is 
isomorphic to a subsemiautomaton 
of the product of the ordered semiautomata
$(Q,A,\cdot,\le)$ and $(Q,A,\cdot,\ge)$, namely the 
subsemiautomaton with the set of states
$\{\,(q,q)\mid q\in Q\,\}$.

To prove the converse, it is enough to see that an arbitrary ordered 
semiautomaton 
$(Q,A,\cdot,\le)$ is a homomorphic image of the semiautomaton 
$(Q,A,\cdot,=)$: the identity mapping is 
a homomorphism of the considered order semiautomata.
\qed\end{proof}

Recall that a {\em $\mathcal C$\=/variety of regular languages} is 
a positive $\mathcal C$\=/variety of languages which is closed under 
taking complements.
The canonical ordered semiautomaton 
of the complement of a regular language $L$ is the
dual of the canonical ordered semiautomaton of $L$, 
i.e
$\overline{\mathcal D_{L^c}}= (\overline{\mathcal D_{L}})^\Od$. 
This easy observation helps to prove the following statement.

\begin{proposition}\label{p:variety-as-special-positive}
There is one to one correspondence between 
$\mathcal C$\=/varieties of regular languages and selfdual positive
$\mathcal C$\=/varieties of ordered semiautomata. 
\end{proposition}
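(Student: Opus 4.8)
The plan is to derive the proposition directly from the Eilenberg correspondence established in Theorem~\ref{t:eilenberg-ordered}. There the mappings $\alpha$ and $\beta$ are mutually inverse isomorphisms between the lattice of all positive $\mathcal C$\=/varieties of ordered semiautomata and the lattice of all positive $\mathcal C$\=/varieties of languages. Since a $\mathcal C$\=/variety of languages is precisely a positive $\mathcal C$\=/variety closed under complementation, it suffices to prove that a positive $\mathcal C$\=/variety of ordered semiautomata $\mathbb V$ is selfdual if and only if the positive $\mathcal C$\=/variety of languages $\alpha(\mathbb V)$ is closed under complements. Once this equivalence is in hand, $\alpha$ and $\beta$ restrict to a one\=/to\=/one correspondence between the two distinguished subclasses, which is exactly what is claimed.

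The heart of the argument is a single identity relating duality and complementation through $\alpha$. For a class of languages $\mathcal L$, write $\mathcal L^c$ for the class defined by $\mathcal L^c(A)=\{\,L^c\mid L\in\mathcal L(A)\,\}$. I claim that $\alpha(\mathbb V^\Od)=(\alpha(\mathbb V))^c$ for every positive $\mathcal C$\=/variety of ordered semiautomata $\mathbb V$. To see this, fix an alphabet $A$ and unwind the definitions: a language $L$ lies in $(\alpha(\mathbb V^\Od))(A)$ exactly when $\ol{\mathcal D_L}\in\mathbb V^\Od(A)$, i.e. when $(\ol{\mathcal D_L})^\Od\in\mathbb V(A)$. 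By the observation preceding the proposition, $(\ol{\mathcal D_L})^\Od=\ol{\mathcal D_{L^c}}$, so this condition becomes $\ol{\mathcal D_{L^c}}\in\mathbb V(A)$, which by the definition of $\alpha$ says precisely that $L^c\in(\alpha(\mathbb V))(A)$. Hence $L\in(\alpha(\mathbb V^\Od))(A)$ if and only if $L^c\in(\alpha(\mathbb V))(A)$, and since complementation is an involution this is the asserted equality.

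From this identity the proposition follows quickly. Because $\alpha$ is a bijection, $\mathbb V$ is selfdual, i.e. $\mathbb V^\Od=\mathbb V$, if and only if $\alpha(\mathbb V^\Od)=\alpha(\mathbb V)$, that is $(\alpha(\mathbb V))^c=\alpha(\mathbb V)$. As complementation is an involution, the equality $(\alpha(\mathbb V))^c=\alpha(\mathbb V)$ holds exactly when $\alpha(\mathbb V)$ is closed under complements, i.e. when $\alpha(\mathbb V)$ is a $\mathcal C$\=/variety of languages. Consequently $\alpha$ carries selfdual positive $\mathcal C$\=/varieties of ordered semiautomata onto $\mathcal C$\=/varieties of languages, with $\beta$ giving the inverse assignment; this is the desired one\=/to\=/one correspondence. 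I expect the only delicate point to be the careful bookkeeping in the chain of equivalences above, but it reduces immediately to the displayed relation $(\ol{\mathcal D_L})^\Od=\ol{\mathcal D_{L^c}}$, so no genuine obstacle arises.
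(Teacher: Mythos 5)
Your proof is correct, but it takes a genuinely different route from the paper's in the harder direction. Both arguments rest on the same two ingredients: the observation $(\ol{\mathcal D_L})^\Od=\ol{\mathcal D_{L^c}}$ and Theorem~\ref{t:eilenberg-ordered}. The paper splits the claim into two halves: for a selfdual $\mathbb V$ it checks directly (essentially as you do) that $\alpha(\mathbb V)$ is closed under complements; but for the converse it works with the generated variety $\beta(\mathcal L)$ concretely, invoking the description $(\beta(\mathcal L))(A)=\OH\OS\OP(\,\{\ol{\mathcal D_L}\mid L\in\mathcal L(A)\}\cup\mathbb T(A)\,)$ from Lemma~\ref{l:hsp} together with the fact that the operators $\OH$, $\OS$, $\OP$ send selfdual classes to selfdual classes. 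You avoid this structural argument entirely: from the single identity $\alpha(\mathbb V^\Od)=(\alpha(\mathbb V))^c$ and the injectivity of $\alpha$ on the lattice of positive $\mathcal C$\=/varieties (which Theorem~\ref{t:eilenberg-ordered} supplies, since $\beta\circ\alpha=\mathrm{id}$) you obtain both directions at once, and surjectivity onto $\mathcal C$\=/varieties of languages follows by applying your equivalence to $\mathbb V=\beta(\mathcal L)$. Your version is shorter and purely formal---it never revisits how $\beta(\mathcal L)$ is generated---whereas the paper's argument yields the extra structural information that $\OH$, $\OS$, $\OP$ preserve selfduality. One point you use silently but legitimately: for $\alpha$'s injectivity to apply to $\mathbb V^\Od$, this dual class must itself be a positive $\mathcal C$\=/variety of ordered semiautomata; that is exactly the remark the paper makes in the paragraph introducing duals, so your appeal to it is sound.
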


\begin{proof}
 The mentioned correspondence is given by the pairs of the mappings 
 $\alpha$ and $\beta$ from Theorem~\ref{t:eilenberg-ordered}.
 For a selfdual positive
$\mathcal C$\=/variety of ordered semiautomata $\mathbb V$, we know that 
$(\alpha(\mathbb V))(A)$ is closed under 
complements. This means that $\alpha(\mathbb V)$ is a $\mathcal C$\=/variety 
of regular languages.
Therefore, it remains to show that, for an arbitrary  $\mathcal C$\=/variety 
of regular languages $\mathcal L$,
 the positive $\mathcal C$\=/variety of ordered semiautomata 
 $\beta(\mathcal L)$ is selfdual.
 By Lemma~\ref{l:hsp}, $(\beta(\mathcal L))(A)=
\OH\OS\OP(\,\{\,\ol{\mathcal{D}_L}\mid L\in \mathcal{L}(A)\,\}
\cup\mathbb{T}(A)\,)$, where the set 
$\{\,\ol{\mathcal{D}_L}\mid L\in \mathcal{L}(A)\,\}
\cup\mathbb{T}(A)$ is selfdual. 
However, for every selfdual class of semiautomata $\XX$, the classes 
of ordered semiautomata
$\OP\XX$, $\OS\XX$ and $\OH\XX$ are selfdual again.
Hence $\beta(\mathcal L)$ is selfdual.
\qed\end{proof}

Since every ordered semiautomaton $(Q,A,\cdot,\le)$ is a homomorphic image
of the ordered semiautomaton $(Q,A,\cdot,=)$ we can consider the notion of 
$\mathcal C$\=/varieties of semiautomata
instead of selfdual positive $\mathcal C$\=/varieties of ordered 
semiautomata: 
$\mathcal C$\=/varieties of semiautomata are classes
of semiautomata which are 
closed under taking  $f$\=/subsemiautomata, homomorphic images, disjoint unions 
and finite products.

Let $\mathbb A(A)$ be the class of all ordered semiautomata
over the alphabeth $A$. Notice that $\mathbb A$ forms 
the greatest positive $\mathcal C$\=/variety of ordered semiautomata
for each category $\mathcal C$.

If we have $\mathcal C$\=/variety of semiautomata $\mathbb V$
then we can consider all possible compatible orderings on these 
semiautomata and 
define the positive $\mathcal C$\=/variety of ordered semiautomata
$\mathbb V^{\Oo}$ in the following sense
$$\mathbb V^{\Oo}(A)=\{\, (Q,A,\cdot,\le)\in \mathbb A(A)
\mid (Q,A,\cdot)\in\mathbb V(A)\, \}\, .$$
Clearly, $\mathbb V^{\Oo}$ is selfdual.
Conversely, for a selfdual positive $\mathcal C$\=/variety of ordered 
semiautomata 
$\mathbb V$, we can consider 
$$\mathbb V^{\Ou} (A)=\{\, (Q,A,\cdot) \mid 
\text{ there is an order } 
\le 
\text{ such that } (Q,A,\cdot,\le) \in \mathbb V (A)\,\}\, .$$
Now two mappings $\mathbb V \mapsto \mathbb V^{\Oo}
$ and 
$\mathbb V \mapsto \mathbb V^{\Ou}$ are mutually inverse mappings 
between
$\mathcal C$\=/varieties of semiautomata and selfdual positive
$\mathcal C$\=/varieties of ordered semiautomata.

Using this easy correspondence, we obtain the following result
as the consequence of Proposition~\ref{p:variety-as-special-positive}.

\begin{theorem}\label{t:variety-version}
There is one to one correspondence between 
$\mathcal C$\=/varieties of regular languages and 
$\mathcal C$\=/varieties of semiautomata. 
\end{theorem}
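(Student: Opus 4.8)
The plan is to obtain Theorem~\ref{t:variety-version} as a direct corollary of the results already assembled, by composing two bijections. First I would invoke Proposition~\ref{p:variety-as-special-positive}, which gives a one-to-one correspondence between $\mathcal C$\=/varieties of regular languages and \emph{selfdual} positive $\mathcal C$\=/varieties of ordered semiautomata. The remaining task is to establish a one-to-one correspondence between selfdual positive $\mathcal C$\=/varieties of ordered semiautomata and $\mathcal C$\=/varieties of (unordered) semiautomata, and then simply compose the two correspondences.

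The second correspondence is exactly what the maps $\mathbb V\mapsto\mathbb V^{\Oo}$ and $\mathbb V\mapsto\mathbb V^{\Ou}$, introduced just before the theorem, are designed to provide. So the core of the argument is to verify that these two maps are mutually inverse and that each lands in the correct class. I would proceed as follows. For a $\mathcal C$\=/variety of semiautomata $\mathbb W$, one checks that $\mathbb W^{\Oo}$ is a positive $\mathcal C$\=/variety of ordered semiautomata that is selfdual; selfduality is clear because the defining condition $(Q,A,\cdot)\in\mathbb W(A)$ ignores the order, so the dual order automatically stays inside. For a selfdual positive $\mathcal C$\=/variety $\mathbb V$, one checks that $\mathbb V^{\Ou}$ is closed under the four semiautomaton operations ($f$\=/subsemiautomata, homomorphic images, disjoint unions, finite products), using the corresponding closure properties of $\mathbb V$ together with the fact, recorded in Lemma~\ref{l:selfdual}, that selfduality is equivalent to $(Q,A,\cdot,\le)\in\mathbb V(A)$ implying $(Q,A,\cdot,=)\in\mathbb V(A)$.

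The heart of the verification is that the two maps are mutually inverse. For $\mathbb W^{\Oo\Ou}=\mathbb W$, the inclusion $\mathbb W\subseteq\mathbb W^{\Oo\Ou}$ follows by equipping each semiautomaton with the equality order, and the reverse inclusion follows because a semiautomaton lies in $\mathbb W^{\Oo\Ou}$ only if some compatible order places the ordered version in $\mathbb W^{\Oo}$, which by definition forces the unordered semiautomaton into $\mathbb W$. For $\mathbb V^{\Ou\Oo}=\mathbb V$, the nontrivial direction uses selfduality: if $(Q,A,\cdot)\in\mathbb V^{\Ou}(A)$, witnessed by some order $\le$ with $(Q,A,\cdot,\le)\in\mathbb V(A)$, then for an \emph{arbitrary} compatible order $\preceq$ one must show $(Q,A,\cdot,\preceq)\in\mathbb V(A)$ as well. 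This is where I expect the main obstacle to lie, and it is resolved exactly as in the proof of Lemma~\ref{l:selfdual}: from $(Q,A,\cdot,\le)\in\mathbb V(A)$ and selfduality one obtains $(Q,A,\cdot,=)\in\mathbb V(A)$, and then every $(Q,A,\cdot,\preceq)$ is a homomorphic image of $(Q,A,\cdot,=)$ via the identity map, so it too lies in $\mathbb V(A)$ by closure under homomorphic images.

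Having verified that $\mathbb V\mapsto\mathbb V^{\Oo}$ and $\mathbb V\mapsto\mathbb V^{\Ou}$ are mutually inverse bijections between $\mathcal C$\=/varieties of semiautomata and selfdual positive $\mathcal C$\=/varieties of ordered semiautomata, I would conclude by composing with the bijection of Proposition~\ref{p:variety-as-special-positive}. The composite yields a one-to-one correspondence between $\mathcal C$\=/varieties of regular languages and $\mathcal C$\=/varieties of semiautomata, which is precisely the statement of the theorem.
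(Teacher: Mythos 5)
Your proof is correct and follows essentially the same route as the paper: composing Proposition~\ref{p:variety-as-special-positive} with the mutually inverse maps $\mathbb V\mapsto\mathbb V^{\Oo}$ and $\mathbb V\mapsto\mathbb V^{\Ou}$ between $\mathcal C$\=/varieties of semiautomata and selfdual positive $\mathcal C$\=/varieties of ordered semiautomata. The only difference is that you spell out the verification (via Lemma~\ref{l:selfdual} and the homomorphic image $(Q,A,\cdot,=)\twoheadrightarrow(Q,A,\cdot,\preceq)$) that the paper leaves as an ``easy correspondence.''
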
 

Note that this results can be also obtained by composing the results
by Pin, Straubing~\cite{pin-straubing} with those of 
Chaubard, Pin and Straubing \cite{pin-str}


\section{Examples}

In this section we present several instances of Eilenberg type correspondence. 
Some of them are just reformulations of examples already mentioned in existing literature.
In particular, the first three subsections correspond to pseudovarieties
of aperiodic, $\mathcal R$\=/trivial and $\mathcal J$\=/trivial monoids, 
respectively.
Also Subsection~\ref{s:ord-increasing} has a natural 
counterpart in pseudovarieties of ordered monoids satisfying the inequality $1\le x$.  
In all these cases, $\mathcal C$ is the category of all homomorphisms denoted by $\mathcal C_{all}$.
Nevertheless, we believe that these correspondences viewed from the perspective of varieties of (ordered)
semiautomata are of some interest. Another four subsections works with different categories $\mathcal C$ and 
Subsections~\ref{subs:synchro} and~\ref{subs:inserting} 
bring new examples of (positive) $\mathcal C$\=/varieties of (ordered) automata.

\subsection{Counter\=/Free Automata}

The star free languages were characterized by 
Sch{\"u}tzenberger~\cite{schutz} as languages having aperiodic 
syntactic monoids. 
Here we recall the subsequent characterization 
of McNaughton and Papert~\cite{McNaughton} by counter\=/free automata.

\begin{definition}
We say that a semiautomaton
$(Q,A,\cdot)$ is {\em counter\=/free} if,
for each $u\in A^*$, $q\in Q$ and $n\in\mathbb N$ such that $q\cdot u^n =q$, we have 
$q\cdot u = q$.
\end{definition} 

\begin{proposition}\label{p:counter-free}
The class of all counter\=/free semiautomata forms a variety of semiautomata.
\end{proposition}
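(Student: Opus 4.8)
I must show that the class of all counter-free semiautomata is a variety of semiautomata, i.e. (following the characterization right after Theorem~\ref{t:variety-version}) that it is closed under $f$-subsemiautomata, homomorphic images, disjoint unions, and finite direct products. Since a $\mathcal C$-variety of semiautomata must be closed under these four operations, the plan is simply to verify each closure property directly from the defining condition $q\cdot u^n=q\implies q\cdot u=q$.

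**The routine closures.** First I would dispatch the three easy constructions. For a direct product $\mathcal O=\mathcal O_1\times\dots\times\mathcal O_n$, a word $u$ acts coordinatewise, so $q\cdot u^n=q$ means $q_j\cdot_j u^n=q_j$ for every coordinate $j$; applying counter-freeness in each factor gives $q_j\cdot_j u=q_j$, hence $q\cdot u=q$. For a disjoint union the transition stays inside one component, so the condition reduces to the condition in that single component. For a subsemiautomaton the transition function is just a restriction, so any witnessing equation $q\cdot u^n=q$ already lives in the larger semiautomaton, which is counter-free, giving $q\cdot u=q$; this handles $f$-subsemiautomata once I also understand $f$-renaming.

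**The two genuinely substantive steps.** The first is $f$-renaming: given $f:B^*\to A^*$ and a counter-free $(Q,A,\cdot)$, I must check $(Q,B,\cdot^f)$ is counter-free, where $q\cdot^f v=q\cdot f(v)$ for $v\in B^*$. If $q\cdot^f u^n=q$ for some $u\in B^*$, then $q\cdot f(u)^n=q\cdot f(u^n)=q$ in the original semiautomaton, so counter-freeness there yields $q\cdot f(u)=q$, i.e. $q\cdot^f u=q$; combined with the restriction argument above this gives closure under $f$-subsemiautomata. The second, and the step I expect to be the main obstacle, is homomorphic images. Here the defining condition runs the ``wrong way'' for a surjection: a loop $p\cdot u^n=p$ upstairs pushes forward to $\varphi(p)\cdot u^n=\varphi(p)$ downstairs, but I am \emph{given} a loop $\bar q\cdot u^n=\bar q$ downstairs and must produce $\bar q\cdot u=\bar q$, so I cannot simply lift. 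The trick I would use is finiteness: since $Q$ is finite, for a fixed $u$ the map $x\mapsto x\cdot u$ is eventually periodic, so there exist $k\ge 0$ and $m\ge 1$ with $x\cdot u^{k}=x\cdot u^{k+m}$ for every state $x$ simultaneously (take $k,m$ from the common cycle structure). Choosing any preimage $p$ of $\bar q$ under the surjective homomorphism $\varphi$, counter-freeness applied upstairs to the loop $p\cdot u^k=(p\cdot u^k)\cdot u^{m}$ gives $p\cdot u^{k}=p\cdot u^{k+1}$, and hence inductively $p\cdot u^{k}=p\cdot u^{k+j}$ for all $j\ge 0$.

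**Closing the homomorphic-image case.** From $p\cdot u^{k}=p\cdot u^{k+1}$ upstairs I apply $\varphi$ to obtain $\bar q\cdot u^{k}=\bar q\cdot u^{k+1}$ downstairs. It then remains to convert this stabilization into the equality $\bar q\cdot u=\bar q$ using the hypothesis $\bar q\cdot u^{n}=\bar q$. Since $\bar q$ lies on a $u$-cycle (the hypothesis says $u^n$ fixes it), the eventual-constant behaviour $\bar q\cdot u^k=\bar q\cdot u^{k+1}$ forces that cycle to have length $1$: iterating $\bar q\cdot u^{n}=\bar q$ we may take $k$ to be a multiple of $n$, whence $\bar q=\bar q\cdot u^{k}=\bar q\cdot u^{k+1}=\bar q\cdot u$. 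I would phrase this cleanly by first reducing to the case where $\varphi$ itself is counter-free-reflecting, or more simply by working directly with the period argument just sketched. This finite-periodicity manipulation is the one place demanding care; the remaining verifications are mechanical. Having established all four closures, the class is a $\mathcal C$-variety of semiautomata, proving Proposition~\ref{p:counter-free}.
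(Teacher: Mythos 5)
Your proof is correct and follows essentially the same route as the paper: the routine closures are dispatched identically, and for the homomorphic image case both arguments lift the looping state to a preimage, use finiteness to produce a genuine cycle in the counter\-/free semiautomaton upstairs, apply counter\-/freeness there, and push the resulting fixed point back down through $\varphi$. The only difference is cosmetic: the paper iterates $u^n$ rather than $u$ (considering the sequence $q, q\cdot u^{n}, q\cdot u^{2n},\dots$, so the cycle it finds lies over $p$ itself because $\varphi(q\cdot u^{jn})=p\circ u^{jn}=p$), which makes the final projection immediate and avoids your closing modular\-/arithmetic step of choosing the stabilization exponent to be a multiple of $n$.
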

\begin{proof}
It is easy to see that 
disjoint unions, subsemiautomata, products and $f$\=/renamings
of a counter\=/free semiautomata are again counter\=/free.

Let $\varphi : (Q,A,\cdot) \rightarrow (P,A,\circ)$ 
be a surjective homomorphism of semiautomata and let $(Q,A,\cdot)$ be counter\=/free.
We prove that also $(P,A,\circ)$ is a counter\=/free semiautomaton.

Take $p\in P, u\in A^*$ and $n\in\mathbb N$ 
such that  $p\circ u^n= p$.
Let $q\in Q$ be an arbitrary state such that $\varphi (q)=p$.
Then, for each $j\in\mathbb N$, we have $\varphi(q\cdot u^{jn}) 
=p\circ u^{jn} = p$.
Since the set
$\{q, q\cdot u^n, q\cdot u^{2n}, \dots \}$ is finite,
there exist $k,\ell \in\mathbb N$ such that $q\cdot u^{kn} = q \cdot u^{(k+\ell)n}$.
If we take $r=q \cdot u^{kn}$, then $r\cdot u^{\ell n}=r$.
Since $(Q,A,\cdot)$ is counter\=/free, we get
$r\cdot u=r$. 
Consequently, $p\circ u=\varphi(r)\circ u=\varphi(r\cdot u)=\varphi(r)=p$.
\qed\end{proof}

The promised link between languages and automata follows.

\begin{proposition}[McNaughton, Papert~\cite{McNaughton}]
 Star free languages are exactly the languages
 recognized by counter\=/free semiautomata.
\end{proposition}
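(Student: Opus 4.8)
The plan is to prove both inclusions by combining the variety-level correspondence already established with the classical Schützenberger theorem, using the transition monoid as a bridge. The key algebraic fact I would invoke is that a semiautomaton $(Q,A,\cdot)$ is counter-free precisely when its transition monoid $M(Q,A,\cdot)$ is aperiodic, i.e.\ contains no nontrivial groups. Indeed, the counter-free condition ``$q\cdot u^n=q$ implies $q\cdot u=q$'' is exactly the statement that in the transition monoid the cyclic submonoid generated by the action of $u$ stabilizes to an idempotent rather than cycling, which is the combinatorial form of aperiodicity ($m^{n}=m^{n+1}$ for some $n$, for every element $m$).

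First I would treat the easier direction, that every language recognized by a counter-free semiautomaton is star free. If $L=\nL_{\mathcal A}$ for some $\mathcal A=(Q,A,\cdot,i,F)$ with $(Q,A,\cdot)$ counter-free, then by Lemma~\ref{l:minimality-canonical-dfa} the canonical semiautomaton $\ol{\mathcal D_L}$ is a homomorphic image of a subsemiautomaton of $(Q,A,\cdot)$; by Proposition~\ref{p:counter-free} the class of counter-free semiautomata is closed under these operations, so $\ol{\mathcal D_L}$ is counter-free. Hence its transition monoid is aperiodic, and by the result recalled in the lemma of Pin (that the transition monoid of the minimal automaton equals the syntactic monoid), the syntactic monoid of $L$ is aperiodic, so $L$ is star free by Schützenberger's theorem.

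For the converse, suppose $L$ is star free, so its syntactic monoid $M(L)$ is aperiodic. The syntactic monoid is isomorphic to the transition monoid of the minimal automaton $\mathcal D_L$, and a monoid is aperiodic if and only if it contains no nontrivial subgroup, which translates back into the counter-free condition on $\ol{\mathcal D_L}$. Therefore the canonical semiautomaton $\ol{\mathcal D_L}$ is counter-free, and since $L$ is recognized by $\ol{\mathcal D_L}$ (take the initial state $L$ and the upward-closed final set $F_L$), $L$ is recognized by a counter-free semiautomaton.

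\textbf{Main obstacle.} The crux is cleanly establishing the equivalence between the combinatorial counter-free condition and the algebraic aperiodicity of the transition monoid, and identifying the transition monoid of $\mathcal D_L$ with the syntactic monoid of $L$. The latter identification is cited earlier in the excerpt, so I would lean on it directly; the former is a short but essential translation that must be stated carefully, since it is the hinge allowing Schützenberger's theorem to be imported. Once that equivalence is in place, both inclusions reduce to routine applications of the closure properties from Proposition~\ref{p:counter-free} and the minimality of $\mathcal D_L$, so the proof is genuinely short and I would not expect further difficulty.
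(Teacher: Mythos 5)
Your proof is correct; note, however, that the paper offers no proof of this proposition at all --- it is stated as a classical result with a citation to McNaughton and Papert~\cite{McNaughton}. So the comparison is between your derivation and a bare citation. Your route reduces the statement to Sch\"utzenberger's theorem~\cite{schutz} (star-free $\Leftrightarrow$ aperiodic syntactic monoid, which the paper likewise only cites), bridged by two facts: (a) a semiautomaton is counter-free if and only if its transition monoid is aperiodic, and (b) the transition monoid of the minimal automaton is the syntactic monoid (the unnumbered lemma of Pin in Section~\ref{sec:ordered-automata}). Your use of the paper's own machinery --- Proposition~\ref{p:counter-free} together with Lemma~\ref{l:minimality-canonical-dfa} (applied with the equality order, or directly via Lemma~\ref{l:canonical-dfa}(iii)) to transfer counter-freeness from an arbitrary recognizer down to $\ol{\mathcal{D}_L}$ --- is exactly the right way to get one direction, and the reduction is not circular, since Sch\"utzenberger's theorem is independent of, and historically prior to, the McNaughton--Papert characterization. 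What your route buys is a genuine proof modulo a purely algebraic black box; what it costs is the detour through syntactic monoids, which this paper deliberately avoids. The one step you should write out in full is (a), which you flag yourself: given $m_u^k=m_u^{k+p}$ in the finite transition monoid, apply counter-freeness at each state $r=q\cdot u^k$ (which satisfies $r\cdot u^p=r$) to get $m_u^{k+1}=m_u^k$; conversely, if $q\cdot u^n=q$ with $n\ge 1$, iterate to a multiple of $n$ beyond the stabilization index and peel off one $u$. This is routine (provided the counter-free definition is read with $n\ge 1$, as it must be to be non-vacuous), so there is no gap.
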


Note that this characterization is effective, although  
testing whether a regular language given by a DFA is aperiodic is
even PSPACE\=/complete problem by Cho and Huynh~\cite{cho}.

\subsection{Acyclic Automata}

The {\it content} $\mathrm{c}(u)$ of a word $u\in A^*$ is the 
set of all letters occurring in $u$. 

\begin{definition}
We say that a semiautomaton $(Q,A,\cdot)$ is {\em acyclic} if,
for every $u\in A^+$  and $q\in Q$ such that $q\cdot u =q$,
we have $q\cdot a = q$ for every $a\in \mathrm{c}(u)$.
\end{definition}

Note that one of the conditions in Simon's characterization of piecewise testable 
languages is that 
a minimal DFA is acyclic -- see \cite{simon-pw}.

One can prove the following proposition in a similar way as in the case of counter\=/free 
semiautomata.

\begin{proposition}
The class of all acyclic semiautomata forms a variety of semiautomata.
\end{proposition}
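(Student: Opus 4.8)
The plan is to mirror the proof of Proposition~\ref{p:counter-free} exactly, since the statement and the structure of the acyclicity condition are closely analogous to counter\=/freeness. I need to verify closure under five operations: disjoint unions, subsemiautomata, products, $f$\=/renamings, and homomorphic images. For the first four, the acyclicity condition is preserved in a completely routine way, since each construction either restricts the state set, takes a componentwise action, or relabels letters. If $q\cdot u=q$ holds in a product, it holds in each component; the letters in $\mathrm c(u)$ act trivially there and hence trivially on the tuple. For $f$\=/renamings one notes that $q\cdot^f b = q\cdot f(b)$, so a loop on a word $u\in B^+$ becomes a loop on $f(u)\in A^*$, and the letters of $\mathrm c(f(u))$ fix $q$, which forces every $b\in\mathrm c(u)$ to act trivially as well; one has to be mildly careful here only in that $f(u)$ could in principle be empty, but since $u\in B^+$ and the condition is vacuous when $f(u)=\lambda$ anyway, this causes no difficulty.

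The only step that requires genuine argument is closure under homomorphic images, and I expect this to be the main obstacle, just as it was in Proposition~\ref{p:counter-free}. Let $\varphi:(Q,A,\cdot)\to(P,A,\circ)$ be a surjective homomorphism with $(Q,A,\cdot)$ acyclic, and suppose $p\circ u=p$ for some $p\in P$, $u\in A^+$. I would fix a preimage $q\in Q$ with $\varphi(q)=p$. As in the counter\=/free case, since $Q$ is finite the forward orbit $\{q,q\cdot u,q\cdot u^2,\dots\}$ eventually enters a cycle, so there are $k,\ell$ with $\ell\ge 1$ such that $r=q\cdot u^k$ satisfies $r\cdot u^\ell=r$. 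Acyclicity of $Q$ applied to the word $u^\ell\in A^+$ gives $r\cdot a=r$ for every $a\in\mathrm c(u^\ell)=\mathrm c(u)$.

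It then remains to transport this conclusion from $r$ to $p$. Since $\varphi(r)=\varphi(q\cdot u^k)=p\circ u^k=p$ (using $p\circ u=p$), and $\varphi$ is a homomorphism, for each $a\in\mathrm c(u)$ we get $p\circ a=\varphi(r)\circ a=\varphi(r\cdot a)=\varphi(r)=p$. This verifies the acyclicity condition for $(P,A,\circ)$ and completes that case. The one point worth flagging, which distinguishes this argument slightly from the counter\=/free one, is that the conclusion is quantified over all $a\in\mathrm c(u)$ rather than being a single equation $q\cdot u=q$; but since acyclicity of $Q$ already delivers the per\=/letter statement and $\varphi$ preserves each of those equations individually, the quantifier presents no real complication. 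Having checked all five closure properties, the class is a $\mathcal C_{all}$\=/variety of semiautomata in the sense defined after Theorem~\ref{t:variety-version}.
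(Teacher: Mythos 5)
Your proof is correct and is exactly what the paper intends: the paper gives no details here, stating only that the proposition ``can be proved in a similar way as in the case of counter\=/free semiautomata,'' and your argument is precisely that adaptation (finite-orbit argument to find a state $r$ with $r\cdot u^{\ell}=r$, acyclicity to get $r\cdot a=r$ for $a\in\mathrm{c}(u)$, then pushing forward through $\varphi$). The routine closure checks, including the $f$\=/renaming case with the $f(u)=\lambda$ caveat, are also handled correctly.
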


According to Pin~\cite[Chapter 4, Section 3]{pin1986varieties}, 
a semiautomaton  $(Q,A,\cdot)$ is called {\em extensive} if there  
exists a linear order $\, \preceq\, $ on 
$Q$ such that
$(\,\forall\, q\in Q,\, a\in A\,) \ q\preceq q\cdot a$.
Note that such an order need not to be compatible with actions of letters.
One can easily show that a semiautomaton is acyclic if and only if it is extensive.
We prefer to use the term acyclic, since we consider extensive
actions by letters (compatible with ordering of a semiautomaton) later in the paper.
Anyway, testing whether a given semiautomaton is acyclic can be decided
using the breadth\=/first search algorithm.

\begin{proposition}[Pin \cite{pin1986varieties}]
 The languages over the alphabet $A$ accepted by acyclic semiautomata
 are exactly 
 disjoint unions
of the languages of the form
$$A_0^*a_1A_1^*a_2 A_2^* \dots A_{n-1}^*a_n A_n^*\ 
\text{ where  }\ a_i\not\in A_{i-1}\subseteq A\,\text{ for }\ i=1,\dots,n\,.$$
\end{proposition}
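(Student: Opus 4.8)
The plan is to prove the two inclusions separately, where ``accepted by an acyclic semiautomaton'' means recognized by some automaton whose underlying semiautomaton is acyclic.

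For the inclusion ``$\supseteq$'' I would first show that a single language $P=A_0^*a_1A_1^* \dots A_{n-1}^*a_n A_n^*$ with $a_i\notin A_{i-1}$ is recognized by an acyclic semiautomaton built explicitly. Take states $0,1,\dots,n$ together with a sink $\top$, initial state $0$ and final set $\{n\}$; from state $j<n$ let every $a\in A_j$ loop, let $a_{j+1}$ go to $j+1$, and send every other letter to $\top$; from $n$ let $A_n$ loop and all other letters go to $\top$; and let $\top$ loop on all letters. Determinism holds precisely because $a_{j+1}\notin A_j$. This semiautomaton is acyclic: if $q\cdot u=q$ for $u\in A^+$ then $u\in A_q^*$ (no edge ever returns to a left state), so $\mathrm{c}(u)\subseteq A_q$ and every letter of $u$ fixes $q$; the sink satisfies the condition trivially. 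A finite disjoint union of such products is then recognized by the product of the corresponding acyclic semiautomata, using that acyclic semiautomata are closed under products (being a variety) together with Lemma~\ref{l:product-arbitrary-subset}(i).

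For the inclusion ``$\subseteq$'' let $L=\nL_{\mathcal A}$ for an acyclic automaton $\mathcal A=(Q,A,\cdot,i,F)$. The key consequence of acyclicity is a \emph{no-return} property: along the run $q_0,\dots,q_m$ of a word $w=b_1\cdots b_m$, if $q_a=q_b$ with $a<b$ then $q_a\cdot(b_{a+1}\cdots b_b)=q_a$, whence by acyclicity each $b_\ell$ fixes $q_a$ and $q_a=q_{a+1}=\dots=q_b$. Thus the distinct states of a run occur in consecutive blocks and never recur. Writing $A_p=\{\,a\in A\mid p\cdot a=p\,\}$ for the stabilizer of a state $p$, I would associate to each \emph{simple accepting path} $\pi=(p_0,a_1,p_1,\dots,a_k,p_k)$, with $p_0=i$, the $p_j$ pairwise distinct, $p_{j-1}\cdot a_j=p_j\neq p_{j-1}$ and $p_k\in F$, the product $P_\pi=A_{p_0}^*a_1A_{p_1}^*\dots a_kA_{p_k}^*$. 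Since $a_j$ moves $p_{j-1}$ off itself we have $a_j\notin A_{p_{j-1}}$, so $P_\pi$ has the required shape.

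I would then verify that a word lies in $P_\pi$ exactly when its run has block structure $\pi$: reading stabilizer letters keeps the current state and reading $a_j$ performs the $j$-th jump, so on one hand $P_\pi\subseteq L$, and on the other every accepted word belongs to the $P_\pi$ read off from the blocks of its run. As there are only finitely many simple paths, this yields $L=\bigcup_\pi P_\pi$, a finite union. Because $\mathcal A$ is deterministic, membership of a word in $P_\pi$ reconstructs its entire run, hence $\pi$ itself, so distinct paths give disjoint products and the union is disjoint. \textbf{The main obstacle} is exactly this bookkeeping in the ``$\subseteq$'' direction: showing that lying in $P_\pi$ pins down the complete run (both the visited states and the jump letters $a_j$), since this single fact simultaneously delivers $P_\pi\subseteq L$, the covering $L\subseteq\bigcup_\pi P_\pi$, and the pairwise disjointness; the no-return lemma is what makes the block decomposition well defined, and the remaining work is checking that the stabilizers $A_{p_j}$ and the jump letters assemble into precisely the claimed product form.
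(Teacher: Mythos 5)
Your proof is correct and complete. There is, however, no in-paper proof to compare it against: the proposition is stated as a citation of Pin~\cite[Chapter 4, Section 3]{pin1986varieties}, where it appears as the classical language-theoretic characterization of $\mathcal R$\=/trivial monoids (as the paper remarks immediately after the statement), and Pin's route goes through the algebraic side, relating extensive (= acyclic) automata to $\mathcal R$\=/trivial transition monoids. Your argument stays entirely on the automaton side, which fits the spirit of this paper. In the direction ``$\supseteq$'' your greedy automaton is well defined and deterministic exactly because $a_{j+1}\notin A_j$, it traces every witnessing factorization faithfully (again thanks to $a_{j+1}\notin A_j$, which keeps the jump letters out of the loops), its only cycles are loops, so it is acyclic, and finite unions then follow from closure of acyclic semiautomata under products together with Lemma~\ref{l:product-arbitrary-subset}(i). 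In the direction ``$\subseteq$'' your no-return lemma is precisely the acyclicity condition applied to the factor of the run between two equal states, and it gives the block decomposition of every run along a simple path; taking for the $j$-th alphabet the \emph{full} stabilizer of $p_j$ is what makes membership in $P_\pi$ equivalent to ``the run of $w$ has block structure $\pi$'', and this single equivalence simultaneously yields $P_\pi\subseteq L$, the covering $L\subseteq\bigcup_\pi P_\pi$, and, since a word has a unique run in a deterministic automaton, the pairwise disjointness. Two cosmetic remarks: Lemma~\ref{l:product-arbitrary-subset} is stated for ordered semiautomata, so you should either apply it with the equality order or simply invoke the standard product construction for DFAs; and note that the ``$\supseteq$'' direction does not need disjointness of the union at all, so your proof in fact shows that arbitrary finite unions of such products are again (disjoint) unions of this form.
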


Note that the languages above are exactly those
having $\mathcal R$\=/trivial syntactic monoids

\subsection{Acyclic Confluent Automata}

In our paper~\cite{dlt13-klima} concerning piecewise testable languages, 
we introduced a certain condition on automata 
being motivated by the terminology 
from the theory of rewriting systems. 

\begin{definition}
We say that a semiautomaton $(Q,A,\cdot)$ is {\em confluent}, 
if for each state $q\in Q$ and every
pair of words $u,v\in A^*$,
there is a word $w\in A^*$ such that 
$\mathrm{c}(w)\subseteq \mathrm{c}(uv)$ and
$(q\cdot u)\cdot 
w=(q\cdot v)\cdot w$.
\end{definition}

In~\cite{dlt13-klima}, this definition was studied in the context of 
acyclic (semi)automata, in which case several equivalent conditions were described.
One of them can be rephrased in the following way.
\begin{lemma}\label{l:acyclic-confluent}
 Let $(Q,A,\cdot)$ be an acyclic semiautomaton. Then $(Q,A,\cdot)$ is 
 confluent if and only if, 
 for each $q\in Q$, $u,v\in A^*$, we have $q\cdot u\cdot (uv)^{|Q|}=q\cdot v\cdot (uv)^{|Q|}$.
\end{lemma}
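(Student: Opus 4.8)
The statement is an equivalence, so I would prove the two implications separately. The easier direction is ``confluent $\Rightarrow$ the stated equality''. Given $q\in Q$ and $u,v\in A^*$, confluence supplies a word $w$ with $\mathrm{c}(w)\subseteq\mathrm{c}(uv)$ such that $q\cdot u\cdot w=q\cdot v\cdot w$. The plan is to replace this unknown $w$ by the canonical long word $(uv)^{|Q|}$. The key observation is that in an acyclic semiautomaton, once we apply a word whose content is contained in $\mathrm{c}(uv)$, appending further letters from $\mathrm{c}(uv)$ eventually stabilizes each state; more precisely, acyclicity forces that reading $(uv)^{|Q|}$ from any state lands in a state fixed by every letter of $\mathrm{c}(uv)$. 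So I would first establish the following claim: for every state $p$, the state $p\cdot(uv)^{|Q|}$ satisfies $p\cdot(uv)^{|Q|}\cdot a=p\cdot(uv)^{|Q|}$ for each $a\in\mathrm{c}(uv)$, hence is fixed by every word with content in $\mathrm{c}(uv)$.

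To prove this stabilization claim, consider the sequence $p,\,p\cdot(uv),\,p\cdot(uv)^2,\dots$; since $|Q|$ is the number of states, after at most $|Q|$ steps the orbit enters a cycle, i.e. $p\cdot(uv)^{k}=p\cdot(uv)^{k+\ell}$ for some $k\le|Q|$ and $\ell\ge1$, which by acyclicity (applied to the word $(uv)^\ell$, whose content is $\mathrm{c}(uv)$) forces $p\cdot(uv)^{k}\cdot a=p\cdot(uv)^{k}$ for all $a\in\mathrm{c}(uv)$. From $k\le|Q|$ it follows that $p\cdot(uv)^{|Q|}=p\cdot(uv)^{k}$ and this common state is fixed by each letter of $\mathrm{c}(uv)$. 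With this in hand, the confluent direction is immediate: starting from the confluence word $w$, we have $q\cdot u\cdot w=q\cdot v\cdot w$, and appending $(uv)^{|Q|}$ and using that both $q\cdot u\cdot(uv)^{|Q|}$ and $q\cdot v\cdot(uv)^{|Q|}$ are the fixed states reached from $q\cdot u$ and $q\cdot v$ respectively, one identifies them via the common word $w$; alternatively, apply the stabilization claim directly to $p=q\cdot u$ and $p=q\cdot v$ and use that $w$ maps both to the same fixed point.

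For the converse ``stated equality $\Rightarrow$ confluent'', I would simply take $w=(uv)^{|Q|}$ as the confluence witness. It has content $\mathrm{c}(w)=\mathrm{c}((uv)^{|Q|})=\mathrm{c}(uv)$, so the content condition $\mathrm{c}(w)\subseteq\mathrm{c}(uv)$ holds, and the hypothesis gives exactly $q\cdot u\cdot w=q\cdot v\cdot w$. Thus confluence holds with this explicit choice of $w$, completing this direction with essentially no work.

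\textbf{Main obstacle.} The only genuine content lies in the stabilization claim, namely that the bound $|Q|$ is large enough to drive every orbit under $(uv)$ into an acyclic fixed point. The delicate point is to apply acyclicity correctly: acyclicity is stated for words $u\in A^+$ with $q\cdot u=q$, yielding $q\cdot a=q$ for $a\in\mathrm{c}(u)$, so I must extract a genuine cycle $p\cdot(uv)^{k}=p\cdot(uv)^{k+\ell}$ with $\ell\ge1$ from the pigeonhole argument and then invoke acyclicity with the word $(uv)^{\ell}$ to kill the cycle. Care is needed to confirm $\mathrm{c}((uv)^{\ell})=\mathrm{c}(uv)$ (which holds as $\ell\ge1$) so that the fixed-letter conclusion covers all of $\mathrm{c}(uv)$, and to verify that $k\le|Q|$ guarantees $(uv)^{|Q|}$ already reaches the stabilized state.
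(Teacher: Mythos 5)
Your stabilization claim is correct and is exactly the engine of the paper's own argument: the pigeonhole on the orbit $p,\,p\cdot(uv),\dots,p\cdot(uv)^{|Q|}$, acyclicity applied to the word $(uv)^{\ell}$ to kill the cycle, and the conclusion that $p\cdot(uv)^{|Q|}$ is fixed by every word with content in $\mathrm{c}(uv)$. The converse direction (take $w=(uv)^{|Q|}$) is also fine. The gap is in the last step of the forward direction. You invoke confluence only once, at the triple $(q,u,v)$, obtaining a word $w$ with $q\cdot u\cdot w=q\cdot v\cdot w$, and then assert that the states $p_1=q\cdot u\cdot(uv)^{|Q|}$ and $p_2=q\cdot v\cdot(uv)^{|Q|}$ can be ``identified via the common word $w$''. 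But the stabilization claim only says that $p_1$ and $p_2$ are each fixed by words with content in $\mathrm{c}(uv)$; it does \emph{not} say that the fixed state reachable from $q\cdot u$ by such words is unique. Reading $(uv)^{|Q|}$ from $q\cdot u$ and reading $w(uv)^{|Q|}$ from $q\cdot u$ may a priori land in two different fixed states, and nothing you have established rules that out; likewise your alternative phrasing fails because $q\cdot u\cdot w$ is not shown to be a fixed point, and even after appending $(uv)^{|Q|}$ to it you get a third fixed state not yet related to $p_1$ or $p_2$.

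The inference is genuinely invalid, not just under-explained. Consider the acyclic semiautomaton with states $q,1,2,P,R,Z$ over $A=\{a,b\}$, where $q\cdot a=1$, $q\cdot b=2$, $1\cdot a=P$, $2\cdot a=R$, $1\cdot b=2\cdot b=Z$, and $P,R,Z$ are absorbing. It is acyclic, so your stabilization claim holds at every state; with $u=a$, $v=b$ the word $w=b$ satisfies $\mathrm{c}(w)\subseteq\mathrm{c}(uv)$ and $q\cdot u\cdot w=Z=q\cdot v\cdot w$; yet $q\cdot u\cdot(ab)^{6}=P\neq R=q\cdot v\cdot(ab)^{6}$. (This semiautomaton is not confluent, so it does not contradict the lemma; it shows that the facts you actually use -- stabilization plus that single witness $w$ -- do not imply the conclusion.) The repair is what the paper does: apply confluence at $q$ to the \emph{pair of words} $u(uv)^{|Q|}$ and $v(uv)^{|Q|}$, whose combined content is still $\mathrm{c}(uv)$. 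This yields $w'$ with $\mathrm{c}(w')\subseteq\mathrm{c}(uv)$ and $p_1\cdot w'=p_2\cdot w'$, and then your stabilization claim finishes in one line: $p_1=p_1\cdot w'=p_2\cdot w'=p_2$.
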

\begin{proof}
Assume that  $(Q,A,\cdot)$ is a confluent  acyclic semiautomaton and let 
$q\in Q$, $u,v\in A^*$ be arbitrary. 
We consider the sequence of states 
$$q\cdot u,\ q\cdot u\cdot (uv),\ q\cdot u\cdot (uv)^2,\ \dots,\ 
q\cdot u\cdot (uv)^{|Q|}\,.$$ 
 Since the sequence 
 contains more members than $|Q|$, we have $p=q\cdot u\cdot (uv)^k=q\cdot u\cdot (uv)^\ell$
 for some $0\le k<\ell\le |Q|$. Since $(Q,A,\cdot)$ is acyclic, 
 we have  $p \cdot a =p$ for every $a\in \mathrm{c}(uv)$.
 Therefore, $p=q\cdot u\cdot (uv)^k=q\cdot u\cdot (uv)^{k+1}= \dots =q\cdot u\cdot (uv)^{|Q|}$ 
 and we have $p\cdot w=p$ for every 
 $w\in A^*$ such that $\mathrm{c}(w)\subseteq \mathrm{c}(uv)$. 
 Similarly, for $r=q\cdot v\cdot (uv)^{|Q|}$,  we obtain the same property
 $r\cdot w=r$ for the same words $w$. Taking into account that $(Q,A,\cdot)$ 
 is confluent we obtain the existence of a word $w$ such that 
 $\mathrm{c}(w)\subseteq \mathrm{c}(uv)$ and $p\cdot w=r\cdot w$. 
 Hence $p=r$ and the first implication is proved.
 The second implication is evident.
\qed\end{proof}

Using the condition from Lemma~\ref{l:acyclic-confluent}, one 
can prove that the class of all acyclic confluent semiautomata is a variety of 
semiautomata similarly as in Proposition~\ref{p:counter-free}.
Finally, the main result from~\cite{dlt13-klima} can be formulated in the 
following way. 
It is mentioned in~\cite{dlt13-klima} that the defining condition is testable in a polynomial time.

\begin{proposition}[Klíma and Polák~\cite{dlt13-klima}]
The variety of all acyclic confluent semiautomata  corresponds
to the variety of all piecewise testable languages.
\end{proposition}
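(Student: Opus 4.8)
The plan is to invoke the variety version of the Eilenberg correspondence, Theorem~\ref{t:variety-version}, which reduces the proposition to a purely structural statement about minimal automata. Under that correspondence the variety of languages associated with a variety of semiautomata $\mathbb V$ is $\{\,L\mid \ol{\mathcal D_L}\in\mathbb V\,\}$, so it suffices to prove that a regular language $L$ is piecewise testable if and only if its canonical semiautomaton $\ol{\mathcal D_L}$ is acyclic and confluent. Before doing so I would record that the class of acyclic confluent semiautomata is genuinely a variety of semiautomata: closure under $f$\=/renamings, subsemiautomata, products and disjoint unions is routine, and closure under homomorphic images is handled exactly as in Proposition~\ref{p:counter-free}, using the length\=/uniform reformulation of confluence from Lemma~\ref{l:acyclic-confluent}.

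For the central equivalence I would go through the transition monoid. By Simon's theorem~\cite{simon-pw}, $L$ is piecewise testable precisely when its syntactic monoid is $\mathcal J$\=/trivial, and by the cited lemma of Pin the syntactic monoid of $L$ is isomorphic to the transition monoid of $\ol{\mathcal D_L}$. The task is therefore to show that the transition monoid of a semiautomaton is $\mathcal J$\=/trivial if and only if the semiautomaton is acyclic and confluent. Acyclicity corresponds to $\mathcal R$\=/triviality: a nontrivial cycle $q\cdot u=q$ with $q\cdot a\ne q$ for some $a\in\mathrm c(u)$ witnesses a failure of $\mathcal R$\=/triviality, exactly as in the subsection on acyclic automata, which matches $\mathcal R$\=/trivial monoids. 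Confluence supplies the complementary $\mathcal L$\=/trivial part: the word $w$ in the definition forces the two one\=/sided extensions of a state to be merged, which on the level of the transition monoid yields the collapse needed for full $\mathcal J$\=/triviality.

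The main obstacle is precisely this last translation --- deriving confluence from $\mathcal J$\=/triviality and, conversely, $\mathcal J$\=/triviality from acyclicity together with confluence. This is the technical heart of the characterization and was carried out in~\cite{dlt13-klima}; within the present framework one may either cite it directly or reprove it by reading the $\mathcal J$\=/order on the transition monoid off the reachability structure of $\ol{\mathcal D_L}$ and using Lemma~\ref{l:acyclic-confluent} to make the confluence condition uniform in the number of states. Once the equivalence is established, Theorem~\ref{t:variety-version} immediately identifies the variety of acyclic confluent semiautomata with the variety of piecewise testable languages, completing the proof.
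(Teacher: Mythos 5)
Your outline is correct, but it is worth stressing that the paper itself contains no proof of this proposition: it proves Lemma~\ref{l:acyclic-confluent}, remarks that with that lemma the class of acyclic confluent semiautomata can be shown to be a variety by the same pattern as Proposition~\ref{p:counter-free}, and then imports the equivalence ``$L$ is piecewise testable if and only if its minimal automaton is acyclic and confluent'' wholesale from~\cite{dlt13-klima}. Your skeleton --- variety-ness via Lemma~\ref{l:acyclic-confluent}, reduction through Theorem~\ref{t:variety-version} to a statement about canonical semiautomata, then citation of~\cite{dlt13-klima} --- is exactly the intended argument, so in that sense you match the paper. Where you genuinely diverge is the detour through syntactic monoids: you interpose Simon's theorem~\cite{simon-pw} and Pin's lemma, and reformulate the remaining task as ``the transition monoid of a semiautomaton is $\mathcal J$-trivial if and only if the semiautomaton is acyclic and confluent,'' quantified over \emph{arbitrary} semiautomata. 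That general statement happens to be true (acyclicity is equivalent to $\mathcal R$-triviality of the transition monoid by a pigeonhole argument with the idempotent $(xy)^\omega x=(xy)^\omega=y(xy)^\omega$ identities, and $\mathcal L$-triviality then follows from acyclicity \emph{together with} confluence via the uniform condition of Lemma~\ref{l:acyclic-confluent}), but it is strictly stronger than what the reduction needs and than what the cited characterization literally supplies, which concerns minimal automata; since $\alpha$ evaluates a variety only on canonical semiautomata, the minimal-automaton equivalence suffices verbatim, and the monoid layer buys you nothing except an extra proof obligation that your proposal only gestures at. One factual caution on your heuristic split: confluence alone does not ``supply the $\mathcal L$-trivial part'' --- there are confluent, non-acyclic semiautomata (e.g.\ three states with $a:1\mapsto 2\mapsto 3\mapsto 3$ and $b:2\mapsto 1$, $b$ fixing $1$ and $3$) whose transition monoids are neither $\mathcal R$- nor $\mathcal L$-trivial, so the conjunction with acyclicity is essential, exactly as Lemma~\ref{l:acyclic-confluent} already presupposes.
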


\subsection{Ordered Automata with Extensive Actions}\label{s:ord-increasing}

We say that an ordered semiautomaton $(Q,A,\cdot ,\le)$ 
 has \emph{extensive actions} if, for every  
$q\in Q$, $a\in A$, we have  $q\le q\cdot a$.
Clearly, the defining condition is testable in a polynomial time.
The transition ordered monoids of such ordered semiautomata are characterized
by the inequality $1\le x$.
It is known~\cite[Proposition 8.4]{pin-handbook} that the last inequality
characterizes the positive variety of all finite unions of 
languages of the form
$$A^*a_1A^*a_2A^*\dots A^*a_\ell A^* \, , \  \text{ where }\  a_1,\dots ,a_\ell\in A,\
\ell\ge 0\, .$$
Therefore we call them \emph{positive piecewise testable languages}.
In this way one can obtain the following statement, which we prove directly using the theory 
presented in this paper.

\begin{proposition}
The class of all ordered semiautomata with extensive actions 
is a positive variety of ordered semiautomata and  corresponds to
the positive variety of all positive piecewise testable languages.
\end{proposition}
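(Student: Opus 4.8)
The plan is to prove the proposition in two parts, matching the two claims in the statement: first that the class is a positive variety of ordered semiautomata, and second that it corresponds—via the mappings $\alpha$ and $\beta$ of Theorem~\ref{t:eilenberg-ordered}—to the positive variety of positive piecewise testable languages. Let me denote by $\mathbb{E}$ the class assigning to each alphabet $A$ the collection $\mathbb{E}(A)$ of all ordered semiautomata over $A$ with extensive actions.

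For the first part, I would verify directly that $\mathbb{E}$ satisfies Definition~\ref{d:variety-automata}. The defining inequality $q \le q\cdot a$ is a universally quantified Horn-type condition, so closure is routine for most constructions. First I would check that each $\mathcal{T}_n(A)$ lies in $\mathbb{E}(A)$ (trivially, since $q\cdot a = q$ gives $q\le q\cdot a$ under equality), so $\mathbb{T}(A)\subseteq \mathbb{E}(A)$. Then, using the equivalent formulation from Remark~\ref{r:trivial-variety}, it suffices to check closure under direct products, homomorphic images, and $f$-subsemiautomata. For products the inequality is tested coordinatewise; for $f$-renamings I would observe that $q \cdot^f b = q\cdot f(b) \ge q$ follows by iterating extensivity along the word $f(b)$, since the action by any word is isotone and extensivity propagates (if $q\le q\cdot a_1 \le q\cdot a_1 a_2 \le \cdots$); subsemiautomata inherit the property immediately. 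The only genuinely careful step is homomorphic images: given a surjective homomorphism $\varphi$ with source extensive, for $p = \varphi(q)$ one has $p = \varphi(q) \le \varphi(q\cdot a) = \varphi(q)\cdot a = p\cdot a$, using isotonicity of $\varphi$ and the extensivity of the source—so this too goes through cleanly.

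For the correspondence, the strategy is to apply Theorem~\ref{t:eilenberg-ordered} and identify $\alpha(\mathbb{E})$ with the positive variety $\mathcal{PT}^+$ of positive piecewise testable languages. I would show both inclusions at the level of languages. For $\alpha(\mathbb{E}) \subseteq \mathcal{PT}^+$, I would take $L$ with $\overline{\mathcal{D}_L}\in \mathbb{E}(A)$, meaning the canonical ordered automaton has extensive actions, i.e.\ $u^{-1}L \subseteq (ua)^{-1}L$ for all $u,a$; from this containment property of left quotients one deduces that $L$ is upward closed under the subword ordering in the appropriate sense, which characterizes finite unions of the languages $A^*a_1A^*\cdots A^*a_\ell A^*$. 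Conversely, for each such generating language I would exhibit directly that its canonical ordered automaton has extensive actions, and then invoke that $\mathbb{E}$, being a positive variety, recognizes exactly a positive variety of languages closed under the needed operations.

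The main obstacle I anticipate is the language-theoretic identification in the correspondence part: translating the automaton-side condition ``extensive actions'' (equivalently $w^{-1}L \subseteq (wa)^{-1}L$ on the canonical automaton) into the precise syntactic description of positive piecewise testable languages. Rather than reprove the monoid-level fact, I would lean on the cited result~\cite[Proposition 8.4]{pin-handbook} that the inequality $1\le x$ in the ordered transition monoid characterizes exactly $\mathcal{PT}^+$, combined with the earlier-stated correspondence between the ordered transition monoid of the minimal ordered automaton and the syntactic ordered monoid. The delicate point is to confirm that the semiautomaton having extensive actions is genuinely equivalent to its transition monoid satisfying $1\le x$—that is, that $q\le q\cdot a$ for all generators $a$ is the same as $\mathrm{id} \le \text{action of } a$ in the ordered transition monoid—and to handle the quantifier carefully, since the monoid inequality is about the identity action being below every element, which must be matched against the pointwise extensivity on states.
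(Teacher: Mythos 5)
Your proposal is correct, but it handles the substantive direction by a genuinely different route than the paper. The variety\-/closure part and the inclusion of positive piecewise testable languages into $\alpha(\mathbb{E})$ agree with the paper: both reduce to the generators $A^*a_1A^*\cdots a_\ell A^*$ via the observation $K\subseteq b^{-1}K$ and then use closure under finite unions. For the converse inclusion you delegate to external results: either the Haines/Higman fact that a language closed under single\-/letter insertions (equivalently, upward closed in the scattered\-/subword order) is a finite union of languages $A^*a_1A^*\cdots a_\ell A^*$ --- note that the finiteness of that union is exactly where Higman's well\-/quasi\-/order lemma is needed, a step you assert rather than justify --- or Proposition 8.4 of~\cite{pin-handbook} together with the isomorphism between the ordered transition monoid of the minimal ordered automaton and the syntactic ordered monoid; your ``delicate point'' is indeed unproblematic, since extensivity on letters propagates to all words by isotonicity and transitivity, so extensive actions is precisely the inequality $1\le x$ in the ordered transition monoid. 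The paper acknowledges this monoid route just before the proposition and then deliberately avoids it, giving a self\-/contained automaton argument instead: extensive actions plus upward\-/closedness of $F_L$ give $p\cdot a\in F_L$ for every final $p$, hence $\nL_p=A^*$; since $\nL_p=p$ by Lemma~\ref{l:canonical-dfa}, the canonical automaton has the single absorbing final state $A^*$, and $L$ is then shown to equal the union of $A^*a_1A^*\cdots a_nA^*$ over the labels $a_1\cdots a_n$ of simple paths from the initial state $L$ to $A^*$ --- which produces the required finite union with no appeal to Higman's lemma or to syntactic ordered monoids. Your route buys brevity and a link to known structural results; the paper's buys self\-/containment within its automaton framework, which is one of its stated aims. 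If you keep your subword\-/order variant, you should make the use of Higman's lemma explicit.
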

\begin{proof}
It is a routine to check that the class of all ordered semiautomata with extensive actions 
is a positive variety of ordered semiautomata. 
Using Theorem~\ref{t:eilenberg-ordered}, we need to show, that a language $L$ is positive piecewise
testable if and only if its canonical ordered semiautomaton has extensive actions.
To prove that the canonical semiautomaton of a positive piecewise testable language has  extensive
 actions, it is enough to prove this fact 
for languages of the form 
$A^*a_1A^*a_2A^*\dots A^*a_\ell A^*$ with  
$a_1,\dots ,a_\ell\in A,\ \ell\ge 0$. This observation
follows from the description of the canonical ordered
(semi)automata of a language given in Section~\ref{sec:ordered-automata}. 
Indeed, for every language $K=A^*b_1A^*b_2A^*\dots A^*b_k A^*$ we have 
$K\subseteq b^{-1}K$, because $b^{-1}K=K$ or
$b^{-1}K=A^*b_2A^*\dots A^*b_k A^*$ depending on the fact whether 
$b\not=b_1$ or $b=b_1$.

Assume that the canonical automaton 
$\mathcal O_L=(D_L,A,\cdot,\subseteq,L,F_L)$
of a language $L$ has extensive actions;
consequently $(D_L,A,\cdot)$ is an acyclic semiautomaton.
Since $F$ is upward closed,
for every $p\in F$ and $a\in A$, we have $p\cdot a \in F$. 
In other words, for every $p\in F$, we have
$\nL_p=A^*$. 
However by Lemma~\ref{l:canonical-dfa} we have $\nL_p=p$, so 
we get that $F$ contains just one final state $p=A^*$.
Now we consider a simple path in $\mathcal O_L$ from $i$ to $p$ labeled 
by a word $u=a_1a_2\dots a_n$ with $a_k\in A$, 
i.e $i\not= i\cdot a_1\not= i\cdot a_1a_2 \not= \dots \not= i\cdot u=p$.
If we consider a word $w$ such that $w=w_0a_1w_1a_1\dots a_nw_n$, 
where $w_0,w_1,\dots w_n\in A^*$, then
one can easily prove by an induction with respect to $k$ that 
$i\cdot a_1\dots a_k \le i\cdot w_0a_1w_1a_1\dots a_kw_k$.
For $k=n$, we get $p\le i\cdot w\in F$, thus $i\cdot w=p$. Hence 
we can conclude with $A^*a_1A^*a_2 \dots a_n A^* \subseteq L$.
We can consider the language $K$, which is 
the union of such languages $A^*a_1A^*a_2 \dots a_n A^* $ for all possible 
simple paths in $\mathcal O_L$ from $i$ to $p$.  
Now $K\subseteq L$ follows from the previous argument and $L\subseteq K$ is clear, 
because every $w\in L$ describes a unique simple path from $i$ to $p$.
\qed\end{proof}

Note that a usual characterization of the class of positive piecewise testable 
languages is given by a forbidden pattern for DFA (see e.g.~\cite[page 531]{straubing-weil-handbook}).
This pattern consists of two words $v, w\in A^*$ and two  
states $p$ and $q=p\cdot v$ such that $p\cdot w \in F$ and $q\cdot w\not\in F$. 
In view of~(\ref{eq:hopcroft}) from Section~\ref{sec:ordered-automata},
the presence of the pattern is equivalent
to the existence of two states $[p]_\rho \not\le [q]_\rho$, such that $[p]_\rho \cdot_\rho v= [q]_\rho$ 
in the minimal automaton of the language.
The membership for the class of positive piecewise testable languages is decidable in polynomial time --
see~\cite[Corollary 8.5]{pin-handbook} or~\cite[Theorem 2.20]{straubing-weil-handbook}.

\subsection{Autonomous Automata}
We recall examples from the paper~\cite{esik-ito}.
We call a semiautomaton $(Q,A,\cdot)$ {\em autonomous}
if for each state $q\in Q$ and every
pair of letters $a,b\in A$, we have $q\cdot a = q \cdot b$.
For a positive integer $d$, let $\mathbb{V}_d$ be the class of all autonomous semiautomata
being disjoint unions of cycles whose lengths divide $d$. 
Clearly, the defining conditions are testable in a linear time.

\begin{proposition}[\'Esik and Ito~\cite{esik-ito}]
(i) All autonomous semiautomata form a $\mathcal C_l$\=/variety of semiautomata and
the corresponding $\mathcal C_l$\=/variety of languages consists of regular languages
$L$ such that, for all $u,v\in A^*$, if $u\in L$,  
$|u|=|v|$ then $v\in L$.

(ii) The class $\mathbb{V}_d$ 
forms a $\mathcal C_l$\=/variety of semiautomata and
the corresponding $\mathcal C_l$\=/variety of languages consists of all unions
of $(A^d)^*A^i$, $i\in \{0,\dots ,d-1\}$.  
\end{proposition}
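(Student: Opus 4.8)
The plan is to invoke Theorem~\ref{t:variety-version}, which splits each part into two tasks: checking that the given class of semiautomata is a $\mathcal{C}_l$\=/variety, and then identifying the languages recognized by its members. The decisive feature throughout is that in an autonomous semiautomaton the action $q\cdot u$ depends only on the length $|u|$; writing $\sigma$ for the common length\=/one action, one has $q\cdot u=\sigma^{|u|}(q)$, so the whole semiautomaton is governed by the single map $\sigma$. I will use that a morphism $f\in\mathcal{C}_l$ sends every letter to a word of one fixed length $k_f$; this is precisely the hypothesis that keeps the length\=/based conditions stable under $f$\=/renaming.

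For part (i), closure of the class of autonomous semiautomata under disjoint unions, products, and homomorphic images is immediate from the defining equation $q\cdot a=q\cdot b$, which passes to components, to coordinates, and to surjective images. For $f$\=/subsemiautomata I would argue that if $f\in\mathcal{C}_l$ with $|f(b)|=k_f$ for every letter $b$, then in $\mathcal{A}^f$ one has $q\cdot^f b=q\cdot f(b)=\sigma^{k_f}(q)$, independent of $b$, so $\mathcal{A}^f$ is again autonomous and hence so is every subsemiautomaton. On the language side I would inspect the canonical automaton $\mathcal{D}_L$: its states are the quotients $u^{-1}L$, and $(ua)^{-1}L=(ub)^{-1}L$ holds for all $u$ and all letters $a,b$ exactly when $uaw\in L\iff ubw\in L$ for every $w$; since $|uaw|=|ubw|$, this is equivalent to the stated length condition on $L$. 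Conversely, recognition by an autonomous semiautomaton forces $i\cdot u$ to depend only on $|u|$, so membership $u\in L$ depends only on $|u|$.

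For part (ii), I would first reformulate membership in $\mathbb{V}_d$ as the single condition $\sigma^d=\id$, that is $q\cdot w=q$ for every state $q$ and every word $w$ with $|w|=d$; this says exactly that each component is a cycle whose length divides $d$, matching the description of $\mathbb{V}_d$. Closure under disjoint unions, products, and homomorphic images follows as before. The one point needing care is $f$\=/renaming: a word $w\in B^*$ with $|w|=d$ acts in $\mathcal{A}^f$ as $q\cdot^f w=q\cdot f(w)=\sigma^{k_f d}(q)$, and because $d\mid k_f d$ the hypothesis $\sigma^d=\id$ in $\mathcal{A}$ gives $\sigma^{k_f d}=\id$, whence $\mathcal{A}^f\in\mathbb{V}_d$. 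For the languages, part (i) already restricts to length\=/determined languages; imposing $\sigma^d=\id$ makes acceptance depend only on $|u|\bmod d$, so $L=\bigcup_{i\in S}(A^d)^*A^i$ for some $S\subseteq\{0,\dots,d-1\}$, and conversely each such union has a canonical automaton with at most $d$ states forming a single cycle of length dividing $d$, hence lying in $\mathbb{V}_d$.

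The main obstacle is not any single computation but the verification of closure under $f$\=/renaming: this is the only place where the choice of category enters, and it is exactly the length\=/multiplying property of morphisms in $\mathcal{C}_l$ that prevents an $f$\=/renaming from mixing words of different lengths and thereby breaking either the autonomy equation or the cycle equation $\sigma^d=\id$. Once this closure is secured, both language characterizations reduce to a direct inspection of the quotients $u^{-1}L$, showing that the canonical automaton of a length\=/determined (respectively length\=/mod\=/$d$\=/determined) language is autonomous (respectively lies in $\mathbb{V}_d$).
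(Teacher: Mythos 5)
Your proposal is correct, but a direct comparison with the paper is moot here: the paper states this proposition without proof, attributing it to \'Esik and Ito, so you are supplying an argument the paper omits. Your route --- invoke Theorem~\ref{t:variety-version} (via Lemma~\ref{l:formula-for-alpha}), verify the closure properties of the semiautomaton class, then read off the languages from the canonical automaton --- is exactly the pattern the paper follows in its other worked examples (the extensive-actions and $n$\=/insertions propositions), and all your individual steps check out: the reduction of everything to the single map $\sigma$ with $q\cdot u=\sigma^{|u|}(q)$, the equivalence of ``$\mathcal D_L$ autonomous'' with the length condition via $(ua)^{-1}L=(ub)^{-1}L$, the reformulation of $\mathbb{V}_d$ as $\sigma^d=\id$ (correct, since $\sigma^d=\id$ makes $\sigma$ a permutation whose orbits are precisely cycles of length dividing $d$), and the observation that the canonical automaton of $\bigcup_{i\in S}(A^d)^*A^i$ is a single cycle whose length divides $d$. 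One terminological slip is worth flagging: you describe $\mathcal C_l$ as the morphisms with $|f(b)|=k_f$ constant, but that is the paper's length\=/multiplying category $\mathcal C_{lm}$; the category $\mathcal C_l$ of \'Esik--Ito and Straubing is the length\=/preserving one, i.e.\ the special case $k_f=1$. This does no damage --- your computation $\sigma^{k_f d}=(\sigma^d)^{k_f}=\id$ proves closure under the larger category $\mathcal C_{lm}$, hence a fortiori under $\mathcal C_l$, and it shows in passing that both classes are even positive $\mathcal C_{lm}$\=/varieties --- but the statement as given concerns $\mathcal C_l$, and you should say explicitly that length\=/preserving renamings are the case $k_f=1$ of your argument.
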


\subsection{Synchronizing and Weakly Confluent Automata}
\label{subs:synchro}  
 
Synchronizing automata are 
intensively studied in the literature. 
A semiautomaton $(Q,A,\cdot)$ is {\em synchronizing} if there is a word
$w\in A^*$ such that the set $Q\cdot w$ is a one\=/element set. 
We use an equivalent condition, namely, 
for each pair of states $p,q\in Q$, there exists a word $w\in A^*$ 
such that $p\cdot w=q\cdot w$ (see e.g. Volkov~\cite[Proposition 1]{volkov}).
In this paper we consider the classes of semiautomata
which are closed for taking disjoint unions. So, we need to study 
disjoint unions of synchronizing semiautomata. 
Those automata can be equivalently characterized by the following 
weaker version of confluence. 
We say that a semiautomaton $(Q,A,\cdot)$ is {\em weakly confluent}
if, for each state $q\in Q$ and every
pair of words $u,v\in A^*$,
there is a word $w\in A^*$ such that $(q\cdot u)\cdot w=(q\cdot v)\cdot w$.

\begin{proposition}\label{l:synchronizing}
A semiautomaton is weakly confluent if and only if it is a disjoint 
union of synchronizing semiautomata.
\end{proposition}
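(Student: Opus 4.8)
The plan is to reduce the statement to a claim about the connected components of the semiautomaton and to treat the two implications separately. First I would note that any semiautomaton $(Q,A,\cdot)$ decomposes canonically as a disjoint union, in the sense of Definition~\ref{d:disjoint-union}, of its connected components, where $p$ and $q$ lie in the same component whenever they are joined by a zig\=/zag of transitions; each such component is closed under the action of every letter, hence is a genuine subsemiautomaton. Since a synchronizing semiautomaton is necessarily connected (if $p\cdot w=q\cdot w$ then $p$ and $q$ have a common descendant, so they are connected), being a disjoint union of synchronizing semiautomata is the same as having every connected component synchronizing. Thus the proposition is equivalent to: $(Q,A,\cdot)$ is weakly confluent if and only if each of its connected components is synchronizing. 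The easy implication is the backward one: given a state $q$ and words $u,v$, the states $q\cdot u$ and $q\cdot v$ lie in the connected component of $q$, which is synchronizing, so a reset word $w$ for that component collapses them to one state and gives $(q\cdot u)\cdot w=(q\cdot v)\cdot w$, which is exactly weak confluence.

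For the forward implication I would fix a connected component $C$ and prove it is synchronizing. The first step introduces the relation $p\Join q$, meaning $p\cdot u=q\cdot v$ for some $u,v\in A^*$, i.e.\ that $p$ and $q$ admit a common descendant. Weak confluence is precisely what makes $\Join$ transitive: if $p\cdot u=q\cdot v$ and $q\cdot u'=r\cdot v'$, then applying weak confluence to the state $q$ and the words $v,u'$ yields $w$ with $(q\cdot v)\cdot w=(q\cdot u')\cdot w$, and composing produces a common descendant of $p$ and $r$. Hence $\Join$ is an equivalence relation coinciding with connectedness, and a short induction shows that the finitely many states of $C$ share a single common descendant $d^{*}$. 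The second step uses the structure of the finite action digraph: $C$ has a unique terminal strongly connected component $T$ (two such components would both contain the common descendant of a representative pair, forcing them to coincide), and $d^{*}\in T$. Inside $T$ any two states are already equal\=/word synchronizable, since if $r'=r\cdot v$ then weak confluence applied to $r$ with the words $\lambda$ and $v$ gives $w$ with $r\cdot w=r'\cdot w$; thus the finite automaton $T$ is synchronizing. Finally, as $T$ is absorbing and reachable from every state of $C$, a standard composition argument yields one word $y$ with $C\cdot y\subseteq T$, and combining $y$ with a reset word of $T$ collapses all of $C$ to a single state.

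The main obstacle is the gap between the two flavours of synchronization at play. Weak confluence, and the derived relation $\Join$, only ever deliver a \emph{common descendant} $p\cdot u=q\cdot v$ reached by possibly different words $u\neq v$, whereas being synchronizing demands collapsing states along a single common word. Bridging this gap is the crux: the key observation is that inside a strongly connected piece reachability lets one take one side to be trivial (writing $r'=r\cdot v$ and synchronizing $r$ with itself along $\lambda$ and $v$), which upgrades common descendants to genuine equal\=/word synchronization, after which finiteness and the absorbing terminal component propagate this to all of $C$. By contrast, verifying transitivity of $\Join$, the uniqueness of $T$, and the routing step are routine, so I expect the real work to lie entirely in this upgrade from common descendants to a uniform reset word.
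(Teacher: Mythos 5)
Your proof is correct, but it takes a genuinely different route from the paper's. Both arguments reduce the statement to showing that each connected component of a weakly confluent semiautomaton is synchronizing, and both rely on the cited equivalence between pairwise collapsibility and the existence of a reset word; the paper, however, gets there by a single direct induction along a zig\=/zag path: given states $p,q$ in one component joined by $p_1=p,\dots,p_n=q$ with $p_i\cdot a_i=p_{i+1}$ or $p_{i+1}\cdot a_i=p_i$, it maintains a word $w_i$ with $p_1\cdot w_i=p_i\cdot w_i$ and extends it to $w_{i+1}=w_iu$ by one application of weak confluence to the two words $w_i$ and $a_iw_i$ at the state $p_i$ (resp.\ $p_{i+1}$). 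You instead analyse the transition digraph: weak confluence makes the common\=/descendant relation transitive, hence every component has a unique terminal strongly connected component $T$; inside $T$ your trick of applying weak confluence to the pair $(\lambda,v)$ upgrades reachability $r'=r\cdot v$ to equal\=/word collapsibility, so $T$ is synchronizing; routing everything into $T$ and appending a reset word of $T$ finishes. The paper's induction is shorter and avoids any structure theory, while your proof isolates where synchronization actually lives (the unique sink component) and factors the reset word as routing plus a reset inside $T$, which connects more directly to the standard theory of synchronizing automata. One small imprecision: the interpolated claim $d^*\in T$ does not follow from your induction as stated, since the common descendant produced there need not lie in $T$; this is harmless and not load\=/bearing (any descendant of $d^*$ is again a common descendant, and every state reaches the unique terminal component), but you should either replace $d^*$ by one of its descendants inside $T$ or simply drop the claim, as the routing step only needs uniqueness of $T$.
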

\begin{proof}
It is clear that a disjoint union of synchronizing semiautomata is weakly
confluent. 
 
To prove the opposite implication, assume that $(Q,A,\cdot)$ is a weakly confluent semiautomaton.
We consider one connected component and an arbitrary pair $p,q$ of its states. Then there exist
states $p_1, p_2, \dots , p_n$ and letters $a_1, \dots , a_{n-1}$
such that $p_1=p$, $p_n=q$ and for each $i\in\{1, \dots , n-1 \}$
we have $p_i\cdot a_i=p_{i+1}$ or $p_{i+1}\cdot a_i=p_{i}$.  
We claim, for each $i\in\{1,\dots , n\}$, 
the existence of a word $w_i$ such that $p_1\cdot w_i= p_i\cdot w_i$.
This claim gives, in the case $i=n$, that $p\cdot w_n=q\cdot w_n$, which concludes the proof. 
In the rest of the proof we show the claim by the induction 
on $i$.
For $i=1$ one can take any word for $w_1$. Now, assume that the claim is true for $i$, i.e. 
there is a word $w_i$ and state $r_1$ such that 
$r_1=p_1\cdot w_i= p_i\cdot w_i$. Furthermore, we denote $r_2=p_{i+1}\cdot w_i$.
In the case $p_i\cdot a_i=p_{i+1}$, we denote $r_0=p_i$ and we have $r_0 \cdot w_i =r_1$ 
and $r_0\cdot a_iw_i=r_2$. In the case $p_{i+1}\cdot a_i=p_{i}$, we denote
$r_0=p_{i+1}$ and we have  $r_0\cdot a_iw_i=r_1$ and $r_0 \cdot w_i =r_2$. In both cases, 
since the semiautomaton is weakly confluent there exists $u\in A^*$ such that 
$r_1\cdot u=r_2\cdot u$.
Now for $w_{i+1}=w_iu$ we have $p_1\cdot w_{i+1}=(p_1\cdot w_i)\cdot u=r_1\cdot u=r_2\cdot u=
(p_{i+1}\cdot w_i)\cdot u=p_{i+1} \cdot w_{i+1}$. 
\qed\end{proof}

Since the synchronization property can be tested in the polynomial time (see~\cite{volkov}), 
Proposition~\ref{l:synchronizing} 
implies that the weak confluence of a semiautomaton can be tested  in the polynomial 
time, as well.

In the next result we use the category $\mathcal C_s$ of all surjective homomorphisms.
Note that  $f: B^* \rightarrow A^*$ is a surjective homomorphism if and only if 
$A\subseteq f(B)$.
 
\begin{proposition}
The class of all weakly confluent semiautomata is a $\mathcal C_s$\=/variety of semiautomata.
\end{proposition}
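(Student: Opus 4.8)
The plan is to verify directly the four closure conditions in the definition of a $\mathcal{C}_s$\=/variety of semiautomata, arguing throughout from the definition of weak confluence itself rather than through Proposition~\ref{l:synchronizing}. Two of the conditions are essentially immediate. For closure under \emph{disjoint unions}, weak confluence is a condition on a single state $q$ and the words read from it, and in a disjoint union every transition stays inside the component of $q$; hence the property for the union reduces to the property for each summand. For closure under \emph{homomorphic images}, if $\varphi\colon (Q,A,\cdot)\to(P,A,\circ)$ is a surjective homomorphism, then given $p\in P$ and $u,v\in A^*$ I would choose $q\in Q$ with $\varphi(q)=p$, take a word $w$ witnessing weak confluence for $q,u,v$ in $(Q,A,\cdot)$, and apply $\varphi$ to the equality $(q\cdot u)\cdot w=(q\cdot v)\cdot w$; since $\varphi$ commutes with the actions, the same $w$ witnesses weak confluence for $p,u,v$ in $(P,A,\circ)$.

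The genuinely combinatorial case is closure under \emph{finite products}, and by a straightforward induction it suffices to treat two factors $\mathcal{A}_1=(Q_1,A,\cdot_1)$ and $\mathcal{A}_2=(Q_2,A,\cdot_2)$. The naive attempt of synchronizing each coordinate by its own word fails, since the two witnesses need not agree. Instead I would synchronize the coordinates one after another: given $(q_1,q_2)$ and $u,v$, first pick $w_1$ with $q_1\cdot_1 uw_1=q_1\cdot_1 vw_1$ in $\mathcal{A}_1$; then, using that two coinciding states of $\mathcal{A}_1$ stay equal under any further input, apply weak confluence of $\mathcal{A}_2$ to the state $q_2$ and the words $uw_1,vw_1$ to obtain $w_2$ with $q_2\cdot_2 uw_1w_2=q_2\cdot_2 vw_1w_2$. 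Putting $w=w_1w_2$ then synchronizes both coordinates at once, which is precisely weak confluence of the product. For $n$ factors the same stepwise construction appends one block $w_k$ per coordinate, synchronizing coordinate $k$ while preserving the agreements already achieved in coordinates $1,\dots,k-1$.

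Finally I would treat closure under \emph{$f$\=/subsemiautomata} for a surjective $f\colon B^*\to A^*$, and this is the step where the restriction to $\mathcal{C}_s$ is really used. Let $(P,B,\circ)$ be a subsemiautomaton of $\mathcal{A}^f$ with $\mathcal{A}=(Q,A,\cdot)$ weakly confluent. Given $p\in P$ and $u,v\in B^*$, the actions satisfy $p\circ u=p\cdot f(u)$ and $p\circ v=p\cdot f(v)$, so weak confluence of $\mathcal{A}$ applied to $p,f(u),f(v)$ supplies a word $w'\in A^*$ with $(p\cdot f(u))\cdot w'=(p\cdot f(v))\cdot w'$. The main obstacle is that this witness lives in $A^*$, whereas I need one in $B^*$; this is exactly where surjectivity enters. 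Since $A\subseteq f(B)$, every letter of $A$ has a preimage letter in $B$, so I can choose $w\in B^*$ with $f(w)=w'$, and then $w$ witnesses weak confluence for $p,u,v$ in $(P,B,\circ)$, the intermediate states remaining in $P$ because $P$ is closed under $\circ$. For a non\=/surjective $f$ such a lift need not exist, which is the reason weak confluence forms only a $\mathcal{C}_s$\=/variety and not a full variety of semiautomata.
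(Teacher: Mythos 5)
Your proof is correct and follows essentially the same route as the paper: the product case is handled by the same sequential synchronization (first a witness $w_1$ for the first coordinate, then applying weak confluence of the second factor to the words $uw_1$ and $vw_1$), and the $f$\=/subsemiautomaton case by the same lifting of the witness through the surjective homomorphism via $A\subseteq f(B)$. The only difference is cosmetic: you spell out the disjoint\=/union and homomorphic\=/image cases that the paper dismisses as clear.
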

\begin{proof}
Clearly, the class of all weakly confluent semiautomata $\mathbb V$ 
is closed under disjoint unions, subsemiautomata and homomorphic images.
 We need to check that $\mathbb V$ is closed under 
direct products of non\=/empty finite families. 

Let $\mathcal Q=(Q,A,\cdot,\le)$ and $\mathcal P=(P,A,\circ,\preceq)$ be a pair of weakly confluent semiautomata.
Take a state $(q,p)$ in the product $\mathcal Q\times \mathcal P$ and let $u,v\in A^*$ be words.
Since $(Q,A,\cdot,\le)$ is weakly confluent, there is $w\in A^*$ such that $q\cdot u\cdot w=q\cdot v\cdot w$.
Now we consider the words $uw$ and $vw$. Since  $\mathcal P$ is weakly confluent, there is $z\in A^*$ such that 
$p\cdot uw \cdot z=p\cdot vw \cdot z$. Hence $(q,p)\cdot u \cdot wz =  (q,p)\cdot v \cdot wz$ and we proved that
$\mathcal Q\times \mathcal P$ is weakly confluent. The general case for a direct product of a non\=/empty finite family
of ordered semiautomata can be proved in the same way.

To finish the proof, assume that $f\in \mathcal C_s(B^*,A^*)$
is a surjective homomorphism.  Let $\mathcal A = (Q,A,\cdot)$ be a weakly confluent semiautomaton
and $\mathcal A^f=(Q,B,\cdot^f)$ is its $f$\=/renaming.
Taking $q\in Q$ and $u,v\in B^*$, we have $q\cdot^f u =q\cdot f(u)$ and  
$q\cdot^f v=q\cdot f(v)$. Since $\mathcal A = (Q,A,\cdot)$ is weakly confluent, there is 
$w\in A^*$ such that $q\cdot f(u) \cdot w= q\cdot f(v) \cdot w$. 
Now we can consider a preimige $w'\in B^*$ of the word $w$ in the surjective homomorphism $f$.
Finally, we can conclude that $(q\cdot^f u) \cdot^f w'=(q\cdot^f v)\cdot^f w'$. 
\qed\end{proof}

\subsection{Automata for Finite Languages}

Finite languages do not form a variety, because their complements, 
the so\=/called {\em cofinite languages}, are not finite. 
Moreover, the class of all finite languages is not closed for taking preimages under 
all homomorphisms.
However, one can restrict the category of homomorphisms to the so\=/called {\em non\=/erasing} ones: 
we say that a homomorphism $f: B^* \rightarrow A^*$ is  
{\em non\=/erasing} if $f^{-1} (\lambda)=\{\lambda\}$. 
The class of all non\=/erasing homomorphisms
is denoted by $\mathcal{C}_{ne}$.
Note that  $\mathcal{C}_{ne}$\=/varieties of languages correspond
to $+$\=/varieties of languages (see~\cite{straubing}).

We use certain technical terminology for states of a given semiautomaton $(Q,A,\cdot)$:
we say that a state $q\in Q$ {\em has a cycle}, if there is a word $u\in A^+$ such that $q\cdot u=q$
and we say that the state $q$ is {\em absorbing} if for each letter $a\in A$ we have $q\cdot a = q$. 

\begin{definition}
We call a semiautomaton $(Q,A,\cdot)$ {\em strongly acyclic}, 
if each state which has a cycle is absorbing.
\end{definition}

It is evident that every strongly acyclic semiautomaton is acyclic.

\begin{proposition}

(i) The class of all strongly acyclic semiautomata forms a $\mathcal{C}_{ne}$\=/variety.

(ii) The class of  all strongly acyclic confluent semiautomata 
forms a $\mathcal{C}_{ne}$\=/variety.
\end{proposition}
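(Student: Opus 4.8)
The plan is to verify directly that the class of strongly acyclic semiautomata is closed under each operation defining a $\mathcal{C}_{ne}$\=/variety of semiautomata — disjoint unions, homomorphic images, finite products and $f$\=/renamings (hence $f$\=/subsemiautomata) for non\=/erasing $f$ — and then to deduce (ii) from (i) by an intersection argument. The class is nonempty since every trivial semiautomaton $\mathcal{T}_n(A)$ has all states absorbing and is therefore strongly acyclic. The closures under disjoint unions, subsemiautomata and direct products are routine: a state $(q_1,\dots,q_n)$ of a product $\mathcal{A}_1\times\dots\times\mathcal{A}_n$ has a cycle with a word $u\in A^+$ precisely when every coordinate $q_j$ has a cycle with the same $u$, so strong acyclicity of the factors forces each $q_j$ to be absorbing and hence $(q_1,\dots,q_n)$ to be absorbing as well; the arguments for disjoint unions and subsemiautomata are even simpler, since transitions never leave a single component, respectively a subsemiautomaton.

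The homomorphic\=/image case is the analogue of the corresponding step in Proposition~\ref{p:counter-free}. Given a surjective homomorphism $\varphi\colon (Q,A,\cdot)\to(P,A,\circ)$ with $(Q,A,\cdot)$ strongly acyclic and a state $p\in P$ with $p\circ u=p$ for some $u\in A^+$, I would pick $q\in Q$ with $\varphi(q)=p$ and use finiteness of $Q$ to find $k<\ell$ with $q\cdot u^k=q\cdot u^\ell$. Then $r=q\cdot u^k$ satisfies $r\cdot u^{\ell-k}=r$ with $u^{\ell-k}\in A^+$, so $r$ is absorbing in $Q$; since $\varphi(r)=p\circ u^k=p$, applying $\varphi$ to the identities $r\cdot a=r$ yields $p\circ a=p$ for every $a\in A$. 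Thus $p$ is absorbing and $(P,A,\circ)$ is strongly acyclic.

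The step that genuinely uses the category $\mathcal{C}_{ne}$ is closure under $f$\=/renaming, and I expect it to be the conceptual heart of (i). For $f\colon B^*\to A^*$ non\=/erasing and $(Q,A,\cdot)$ strongly acyclic, suppose a state $q$ has a cycle in $\mathcal{A}^f$, say $q\cdot^f u=q\cdot f(u)=q$ with $u\in B^+$. Non\=/erasingness is exactly what guarantees $f(u)\in A^+$, so $q$ has a genuine cycle in $(Q,A,\cdot)$ and is therefore absorbing there; consequently $q\cdot w=q$ for all $w\in A^*$, in particular $q\cdot^f b=q\cdot f(b)=q$ for every $b\in B$, so $q$ is absorbing in $\mathcal{A}^f$. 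Closure under $f$\=/subsemiautomata then follows by combining this with closure under subsemiautomata. Note that an erasing letter $b$ with $f(b)=\lambda$ would create spurious cycles in $\mathcal{A}^f$ having no counterpart in the original semiautomaton, which is precisely why the full category of all homomorphisms cannot be used.

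For (ii), I would avoid re\=/checking confluence under all the operations and instead observe that, since every strongly acyclic semiautomaton is acyclic, the class of strongly acyclic confluent semiautomata equals the intersection of the class of strongly acyclic semiautomata with the class of acyclic confluent semiautomata. The former is a $\mathcal{C}_{ne}$\=/variety by (i); the latter is a variety of semiautomata, as recalled earlier for acyclic confluent automata, and hence a $\mathcal{C}_{ne}$\=/variety, being in particular closed under preimages in all non\=/erasing homomorphisms. Since the intersection of two $\mathcal{C}_{ne}$\=/varieties is again a $\mathcal{C}_{ne}$\=/variety — each defining closure operation is inherited coordinatewise by the intersection of the corresponding classes over every alphabet — part (ii) follows.
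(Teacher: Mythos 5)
Your proposal is correct and follows essentially the same route as the paper: closure under products, disjoint unions, subsemiautomata and non\-/erasing $f$\-/renamings checked directly, homomorphic images handled by the same finiteness argument as for counter\-/free semiautomata (Proposition~\ref{p:counter-free}), and part (ii) obtained exactly as in the paper, namely as the intersection of the $\mathcal{C}_{ne}$\-/variety from (i) with the variety of acyclic confluent semiautomata. The paper merely sketches these steps, and your write\-/up supplies the details correctly, including the observation that non\-/erasingness is precisely what prevents spurious cycles under $f$\-/renaming.
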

\begin{proof}
(i) It is easy to see that the class $\mathbb V$
of all strongly acyclic semiautomata is closed under finite products, 
disjoint unions and subsemiautomata. Also the property of  $f$\=/renaming is clear
whenever we consider 
a non\=/erasing homomorphism $f: B^* \rightarrow A^*$.
Finally, one can prove that the class $\mathbb V$
is closed under homomorphisms in a similar way as in the case of counter\=/free semiautomata.

(ii) By the first part we know that all strongly acyclic semiautomata form a $\mathcal{C}_{ne}$\=/variety.
We also know that all acyclic confluent semiautomata form a variety of semiautomata, and hence they 
form also a $\mathcal{C}_{ne}$\=/variety of semiautomata.
Therefore all strongly acyclic confluent semiautomata, as an intersection 
of two $\mathcal{C}_{ne}$\=/varieties, form a $\mathcal{C}_{ne}$\=/variety again.
\qed\end{proof}

\begin{proposition}\label{p:co-finite-languages}
The  $\mathcal{C}_{ne}$\=/variety of all finite and all cofinite languages  
corresponds to the $\mathcal{C}_{ne}$\=/variety of all strongly acyclic confluent semiautomata.
\end{proposition}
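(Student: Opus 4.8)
The plan is to apply Theorem~\ref{t:eilenberg-ordered} (in its unordered $\mathcal{C}$\=/variety form, Theorem~\ref{t:variety-version}) and reduce the statement to a purely automata\=/theoretic characterization: a regular language $L$ is finite or cofinite if and only if its canonical semiautomaton $(D_L,A,\cdot)$ is strongly acyclic and confluent. Since both sides of the correspondence have already been shown to be $\mathcal{C}_{ne}$\=/varieties (the language side is assumed to be one here, the automaton side by the preceding Proposition), it suffices to verify this equivalence on canonical semiautomata. First I would recall that $L$ is cofinite exactly when $L^c$ is finite, and that by the duality $\overline{\mathcal{D}_{L^c}}=(\overline{\mathcal{D}_L})^{\Od}$ the strong acyclicity and confluence conditions — which ignore the order entirely and concern only the transition structure $(Q,A,\cdot)$ — are invariant under passing to $L^c$. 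Thus it is enough to characterize the transition structure of $\mathcal{D}_L$ for $L$ finite.

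For the forward direction, suppose $L$ is finite. Then there is a length bound $N$ with $L\subseteq A^{\le N}$, so for any $u$ with $|u|>N$ the left quotient $u^{-1}L$ is either empty or consists only of the suffixes completing $u$ to a word of $L$; in particular the states $u^{-1}L$ for long $u$ collapse to just two absorbing possibilities, namely $\emptyset$ (the sink, with $\nL_q=\emptyset$) and $A^*$ cannot occur unless $L$ is cofinite. The key observation is that in $\mathcal{D}_L$ the only state possessing a cycle is the empty\=/language sink $\emptyset=\{\,u^{-1}L\mid uA^*\cap L=\emptyset\,\}$, which is absorbing; every other state lies on a finite acyclic ``spine'' of length at most $N$. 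This gives strong acyclicity. Confluence is then immediate because any two states can be driven by a sufficiently long word into the unique sink $\emptyset$, so the synchronizing word $w$ exists with $(q\cdot u)\cdot w=(q\cdot v)\cdot w=\emptyset$.

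For the converse, suppose $(D_L,A,\cdot)$ is strongly acyclic and confluent. Strong acyclicity forces every cyclic state to be absorbing, and confluence forces all absorbing states to be ``merged'' in the sense that any two states are synchronized; combined with Lemma~\ref{l:canonical-dfa}(i), which identifies $\nL_q=q$, an absorbing state $q$ satisfies either $q=A^*$ or $q=\emptyset$, and confluence rules out the coexistence of both a nonempty and an empty absorbing reachable state unless exactly one such sink is present. If the unique sink is $\emptyset$, then every accepting path has bounded length (the acyclic spine has at most $|D_L|$ states), so $L$ is finite; if the unique sink is $A^*$, the dual argument applied to $L^c$ shows $L^c$ is finite, i.e.\ $L$ is cofinite. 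The main obstacle I expect is the bookkeeping in this converse step: precisely showing that strong acyclicity together with confluence leaves exactly one absorbing state and that this state is either the full\=/acceptance sink $A^*$ or the dead sink $\emptyset$, since a priori a canonical automaton could have several absorbing states that are not yet ruled out by acyclicity alone — it is confluence that collapses them, and stating this cleanly via the synchronizing\=/word reformulation of Proposition~\ref{l:synchronizing} is where the argument must be made airtight.
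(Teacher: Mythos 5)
Your proposal is correct and follows essentially the same route as the paper's own proof: the forward direction rests on the observation that for finite $L$ the only cyclic state of $\mathcal{D}_L$ is the absorbing sink $\emptyset$, and the converse locates the unique absorbing sink (strong acyclicity makes every cyclic state absorbing, confluence makes the reachable sink unique) and deduces finiteness or cofiniteness from the bounded length of all paths avoiding the sink. The only structural difference is that the paper runs the converse on an \emph{arbitrary} strongly acyclic confluent semiautomaton with arbitrary initial and final states, whereas you specialize to the canonical semiautomaton and pin down the sink as $\emptyset$ or $A^*$ via Lemma~\ref{l:canonical-dfa}(i); both reductions are legitimate given the correspondence theorem, and neither buys much over the other.

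One step does need tightening. In this paper \emph{confluent} requires the synchronizing word $w$ to satisfy $\mathrm{c}(w)\subseteq\mathrm{c}(uv)$ — this content restriction is exactly what separates confluence from the \emph{weakly confluent} notion of Subsection~\ref{subs:synchro} — so your forward-direction argument (``any two states can be driven by a sufficiently long word into the unique sink $\emptyset$'') as written establishes only weak confluence. The repair is immediate: if $uv\neq\lambda$, choose a letter $a\in\mathrm{c}(uv)$ and put $w=a^{k}$ with $k$ exceeding the length of the longest word of $L$; since every state of $\mathcal{D}_L$ is a quotient of $L$ and hence itself a finite language, $(q\cdot u)\cdot w=(q\cdot v)\cdot w=\emptyset$, and the degenerate case $u=v=\lambda$ is handled by $w=\lambda$. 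With this one-line fix your argument is complete; note that your \emph{uses} of confluence in the converse are unaffected, since there you only invoke the weaker consequence that reachable states can be synchronized.
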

\begin{proof}
At first, consider an arbitrary finite language $L\subseteq A^*$ and its canonical automaton
$(D_L,A,\cdot,L,F_L)$. Since $L$ is finite, 
there is only one  state in $D_L$  which has a cycle, namely the state $\emptyset$. 
Moreover, this state is absorbing and it is reachable from all other states, 
because quotients of finite languages are finite.
Therefore the semiautomaton $(D_L,A,\cdot)$ is 
strongly acyclic and confluent at the same time.
Of course, if we start with the complement of a finite language $L$, 
the canonical semiautomaton is the same as for $L$. 

Conversely,
let $\mathcal A=(Q,A,\cdot)$ be a strongly acyclic confluent semiautomaton. 
For an arbitrary state $q\in Q$, we take 
some path starting in $q$ of length $|Q|$.
On that path there is a state $q'$ which has a cycle, i.e.
$q\cdot u=q'=q'\cdot v$ for some $u\in A^*$, $v\in A^+$.
Since $\mathcal A$ is strongly acyclic, $q'$ is an absorbing state.
Since $\mathcal A$ is confluent, there is at most one such absorbing state $q'$ reachable from $q$.
Now we choose $i\in Q$ and $F\subseteq Q$ arbitrarily and we consider
the automaton $(Q,A,\cdot,i,F)$.
By the previous considerations there is just one state reachable from $i$ which has 
a cycle.  
We denote it by $f$. Note that it is an absorbing state.
One can see that, for each state $q\not= f$, the set
$\{\, u \mid i\cdot u =q\, \}$ is finite and therefore $\{ u \mid i\cdot u=f\}$ is a complement of the finite 
language. Thus depending on the fact $f\in F$, the language recognized by 
$(Q,A,\cdot,i,F)$ is cofinite or finite.
\qed\end{proof}

Naturally, one can try to describe the 
corresponding $\mathcal{C}_{ne}$\=/variety of 
languages for the $\mathcal{C}_{ne}$\=/variety of  
strongly acyclic semiautomata.   
Following Pin~\cite[Section 5.3]{pin-handbook}, we call
$L\subseteq A^*$ a {\em prefix\=/testable} language if $L$ is a finite union of a finite language
and languages of the form $uA^*$, with $u\in A^*$. One can prove
the following statement in a similar way as Proposition~\ref{p:co-finite-languages}. Note that 
one can find also a characterization via syntactic semigroups
in~\cite[Section 5.3]{pin-handbook}.
\begin{proposition}
The $\mathcal{C}_{ne}$\=/variety of all  prefix\=/testable  languages
corresponds to the $\mathcal{C}_{ne}$\=/variety of all strongly acyclic semiautomata.
\end{proposition}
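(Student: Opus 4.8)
The plan is to mirror closely the proof of Proposition~\ref{p:co-finite-languages}, since prefix\=/testable languages bear the same relationship to strongly acyclic semiautomata that finite/cofinite languages bear to strongly acyclic confluent semiautomata: dropping confluence exactly corresponds to allowing several distinct absorbing ``sink'' states reachable from the initial state, and each such sink contributes a tail of the form $uA^*$. By Theorem~\ref{t:variety-version} it suffices to show that a language $L$ is prefix\=/testable if and only if its canonical semiautomaton $(D_L,A,\cdot)$ is strongly acyclic.

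For the direction from languages to automata, I would first argue that it is enough to treat the two basic building blocks, namely a single finite language and a single language $uA^*$, because strongly acyclic semiautomata form a variety (closed under the relevant constructions) and prefix\=/testable languages are finite unions thereof; the canonical semiautomaton of the union embeds into the product of the canonical semiautomata of the pieces. For a finite language the canonical semiautomaton is even strongly acyclic confluent by Proposition~\ref{p:co-finite-languages}. For $L=uA^*$ the quotients $w^{-1}L$ are easily enumerated: they are the ``remaining suffixes'' of $u$ together with the language $A^*$ itself and the empty language $\emptyset$; one checks that the only states carrying a cycle are $A^*$ and $\emptyset$, and both are absorbing. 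Hence each building block yields a strongly acyclic canonical semiautomaton, and so does any finite union.

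For the converse I would start from a strongly acyclic semiautomaton $\mathcal A=(Q,A,\cdot)$, fix an initial state $i$ and final set $F$, and analyse the language recognized by $(Q,A,\cdot,i,F)$. As in Proposition~\ref{p:co-finite-languages}, every sufficiently long path reaches a state which has a cycle, and by strong acyclicity such a state is absorbing; without confluence there may be several absorbing states $f_1,\dots,f_k$ reachable from $i$. For each absorbing $f_j\in F$, the set $\{\,u\mid i\cdot u=f_j\,\}$ is of the form (finite set of ``entry words'') followed by $A^*$, contributing finitely many languages $uA^*$; the words that reach a non\=/absorbing final state form a finite language, since the set $\{\,u\mid i\cdot u=q\,\}$ is finite for every non\=/absorbing $q$. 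Summing these contributions exhibits the recognized language as a finite union of a finite language with languages $uA^*$, i.e. as prefix\=/testable.

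The main obstacle I anticipate is the bookkeeping in the converse direction: with confluence gone, one can no longer say ``there is at most one absorbing state reachable from $i$'', so the clean argument of Proposition~\ref{p:co-finite-languages} splinters into a case analysis over the several reachable sinks, and one must verify that the preimage $\{\,u\mid i\cdot u=f_j\,\}$ of each absorbing state is genuinely a finite union of languages of the form $uA^*$ rather than something more complicated. The key point making this work is that once a word reaches the absorbing state $f_j$, every extension stays there, so the preimage is upward closed under concatenation and is generated by its finitely many minimal entry words; establishing this finiteness of minimal entry words (using $|Q|<\infty$ and strong acyclicity to bound path lengths before absorption) is where the real care is needed.
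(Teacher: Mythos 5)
Your proposal is correct and takes essentially the approach the paper intends: the paper gives no separate argument for this proposition, saying only that it can be proved ``in a similar way as Proposition~\ref{p:co-finite-languages}'', which is precisely your plan, and your key step is the right one --- before a path reaches an absorbing state no state can repeat (a repeated state would carry a cycle, hence be absorbing, trapping the path), so non-absorbing states have finite preimages under $u\mapsto i\cdot u$ while each absorbing sink is entered through finitely many minimal entry words, contributing the languages $wA^*$. One small imprecision: by Lemma~\ref{l:minimality-canonical-dfa} the canonical semiautomaton of a finite union is a homomorphic image of a subsemiautomaton of the product of the canonical semiautomata of the pieces, not an embedded subsemiautomaton, but this changes nothing since the class of strongly acyclic semiautomata is closed under both operations.
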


The characterization from Proposition~\ref{p:co-finite-languages} 
can be modified for a positive $\mathcal{C}_{ne}$\=/variety of finite languages
$\mathcal F$: where $\mathcal F(A)$ consists from  $A^*$ and all finite 
languages over $A$. 
To make the 
characterizing condition
more readable, for a given strongly acyclic confluent 
semiautomaton and its state $q$,
we call the uniquely determined state $q'$, mentioned in the proof 
 of Proposition~\ref{p:co-finite-languages},
as a {\em main follower} of the state $q$. 

\begin{proposition}
The positive $\mathcal{C}_{ne}$\=/variety of  all finite languages corresponds to
the positive $\mathcal{C}_{ne}$\=/variety of all strongly acyclic confluent ordered semiautomata
satisfying $q'\le q$ for each state $q$ and its main follower $q'$.
\end{proposition}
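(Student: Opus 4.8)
The plan is to invoke Theorem~\ref{t:eilenberg-ordered} and reduce the claim to a statement about the canonical ordered semiautomaton of each finite language. First I would verify that the described class of ordered semiautomata is indeed a positive $\mathcal{C}_{ne}$\=/variety. The underlying unordered class (strongly acyclic confluent) is already known to form a $\mathcal{C}_{ne}$\=/variety, so the new ingredient is the order condition $q'\le q$ relating each state to its main follower $q'$. I would check closure under products (the main follower of a tuple is the tuple of main followers, so the pointwise order inequality is preserved), under subsemiautomata (restriction keeps the order and keeps main followers, provided the subsemiautomaton is closed under actions so that main followers stay inside), under homomorphic images (an isotone surjection sends a main follower to a main follower, preserving the inequality), under disjoint unions (trivial, since each component keeps its condition), and under $f$\=/renaming for non\=/erasing $f$ (the state set and order are unchanged, and one checks the main follower is unaffected because non\=/erasing homomorphisms cannot collapse cycles).

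Next, by Theorem~\ref{t:eilenberg-ordered} it suffices to show that a language $L$ is finite (or equals $A^*$) if and only if its canonical ordered semiautomaton $\mathcal O_L=(D_L,A,\cdot,\subseteq,L,F_L)$ is strongly acyclic, confluent, and satisfies $q'\le q$ for every state $q$ with main follower $q'$. For the forward direction, I would take a finite language $L$; Proposition~\ref{p:co-finite-languages} already gives that $\ol{\mathcal D_L}$ is strongly acyclic and confluent, with the unique absorbing state being $\emptyset=\varnothing^{-1}$-type quotient, i.e. the empty language $\emptyset\in D_L$. The main follower of every state is this absorbing state $\emptyset$. Since $\emptyset\subseteq q$ for every state $q\in D_L$ (every quotient contains $\emptyset$ as a subset in the inclusion order), the inequality $q'=\emptyset\subseteq q$ holds automatically. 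The case $L=A^*$ gives the one\=/state automaton, which trivially satisfies everything.

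For the converse, I would assume $\mathcal O_L$ is strongly acyclic confluent with $q'\le q$ for each $q$. Strong acyclicity plus confluence forces a unique absorbing state reachable from the initial state $L$; call it $f$. The order condition $q'\subseteq q$ applied with $q'=f$ the main follower of every state says $f\subseteq q$ for all $q$, hence $f$ is the least element of $D_L$ under inclusion, so $f=\bigcap_{q}q$. Since $f$ is absorbing it satisfies $\nL_f=f$, and being least and absorbing forces $f=\emptyset$ (an absorbing state $q$ has $\nL_q=q$ either $\emptyset$ or $A^*$; as the least element it must be $\emptyset$, unless $L=A^*$). If $f=\emptyset$ is the final absorbing state then $f\notin F_L$, and as in Proposition~\ref{p:co-finite-languages} the set $\{u\mid L\cdot u=q\}$ is finite for every $q\neq f$; since $F_L$ is upward closed and $\emptyset\notin F_L$, the accepted language is a finite union of such finite sets, hence finite. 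The remaining possibility $L=A^*$ is handled separately.

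The main obstacle I expect is the converse direction's identification of the absorbing state as $\emptyset$ and the careful handling of the degenerate case $L=A^*$: one must use the order condition precisely to rule out an absorbing state that is a final state other than in the $A^*$ case, and to convert ``$F_L$ upward closed together with the least state excluded'' into genuine finiteness rather than cofiniteness. The positivity (dropping closure under complement) is exactly what the single inequality $q'\le q$ encodes, pinning down which of finite versus cofinite we land in, so the delicate point is verifying that this inequality forces $\emptyset\notin F_L$ and thereby selects the finite languages among the finite\=/or\=/cofinite ones.
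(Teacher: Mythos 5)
Your proposal is correct and follows essentially the same route as the paper's proof: reduce via Theorem~\ref{t:eilenberg-ordered} to the canonical ordered semiautomaton, use Proposition~\ref{p:co-finite-languages} to get the unique absorbing main follower (which for a finite language is $\emptyset$, hence the least state), and in the converse use the order condition together with upward-closedness of the final states to obtain the dichotomy $A^*$ (main follower final) versus finite (main follower non-final). You are in fact somewhat more thorough than the paper, which leaves implicit the verification that the condition $q'\le q$ is preserved by products, subsemiautomata, homomorphic images, disjoint unions and non-erasing $f$-renamings; your sketched arguments for these closure properties are sound (in particular, strong acyclicity plus non-erasingness ensures the absorbing states, and hence main followers, are unchanged under $f$-renaming).
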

\begin{proof}
By the first paragraph of the proof of Proposition~\ref{p:co-finite-languages}, every 
canonical ordered automaton of a finite language satisfies the additional condition 
$q'\le q$ for each state $q$, because the main follower of $q$ is $\emptyset$.

Similarly, in the second part of the proof: 
Let $f$ be the considered main follower of $i$. Since it is also main follower of all reachable 
states from the initial state $i$, we see that $f$ is the minimal state among all reachable states from $i$.
Now if $f$ is final, then all states are final, because the final states form upward closed subset.
Consequently the language accepted by the ordered automaton is $A^*$ in this case. 
If $f$ is not final, then
the language accepted by the ordered automaton is finite.
\qed\end{proof}

Note that, all considered conditions on semiautomata discussed in this subsection 
can be checked in polynomial time.

\subsection{Automata for Languages Closed under Inserting Segments} 
\label{subs:inserting} 

We know that a language $L\subseteq A^*$ is positive piecewise testable if, for every pair of 
words $u,w\in A^*$ such that $uw\in L$ and for a letter $a\in A$, we have $uaw\in L$.
So, we can add an arbitrary letter into each word from the language (at an arbitrary position) 
and the resulting word stays in the language.
Now we consider an analogue, where we put into the word not only a letter but a word of 
a given fixed length. 
The length of a word $v\in A^*$ is denoted by $|v|$ as usually.

For each positive integer 
$n$, we consider the following property of a given regular language $L\subseteq A^*$:
$$\text{ for every } u,v,w \in A^*, \text{ if }  uw\in L
\text{ and } |v|=n,  \text{ then }  uvw \in L \, . $$
We say that $L$ is closed under $n$\=/{\em insertions} whenever $L$ satisfies this property.
We show that the class of all regular languages  closed under $n$\=/insertions form a positive 
$\mathcal C$\=/variety of languages
by describing the corresponding positive $\mathcal C$\=/variety of ordered semiautomata.

At first, we need to describe an appropriate category of homomorphisms.
Let $\mathcal C_{lm}$ be the category 
consisting of the so\=/called {\em length\=/multiplying} (see~\cite{straubing}) homomorphisms:
$f\in \mathcal C_{lm} (B^*, A^*)$
if there exists a positive integer $k$ such that $|f(b)|=k$ for every $b\in B$.

\begin{definition}
Let $n$ be a positive integer and $\mathcal Q=(Q,A,\cdot,\le)$ be an ordered  semiautomaton. 
We say that  $\mathcal Q$ has $n$\=/extensive actions if, for every $q\in Q$ and $u\in A^*$ such that $|u|=n$,
we have $q\le q\cdot u $.
\end{definition}

Note that ordered semiautomata from Subsection~\ref{s:ord-increasing} are ordered semiautomata which have  
$1$\=/extensive actions. Of course, these ordered semiautomata have  
$n$\=/extensive actions for every $n$. More generally,
if $n$ divides $m$ and  an ordered semiautomaton 
$\mathcal Q$ has $n$\=/extensive actions, then $\mathcal Q$  has $m$\=/extensive actions.

\begin{proposition}
Let $n$ be a positive integer.
The class of all ordered semiautomata which have  
$n$\=/extensive actions form a positive $\mathcal C_{lm}$\=/variety of ordered semiautomata.
The corresponding positive $\mathcal C_{lm}$\=/variety of languages consists
of all regular languages closed under $n$\=/insertions.
\end{proposition}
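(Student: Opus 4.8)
The plan is to prove the two claims of the proposition separately, following the template established in Subsection~\ref{s:ord-increasing}. For the first claim, I would verify that the class $\mathbb{V}$ of all ordered semiautomata with $n$\=/extensive actions satisfies the closure properties from Definition~\ref{d:variety-automata} relative to the category $\mathcal C_{lm}$. Disjoint unions, products, subsemiautomata and homomorphic images preserve the defining inequality $q\le q\cdot u$ (for $|u|=n$) by routine arguments: products and subsemiautomata inherit the order coordinatewise, and for a surjective homomorphism $\varphi$ one uses isotonicity together with $\varphi(q\cdot u)=\varphi(q)\circ u$. The crucial closure to check carefully is closure under $f$\=/subsemiautomata for $f\in\mathcal C_{lm}$. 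Here is exactly where length\=/multiplication matters: if $f\colon B^*\to A^*$ multiplies lengths by $k$, then for $u\in B^*$ with $|u|=n$ we have $|f(u)|=kn$, and since $\mathcal Q$ has $n$\=/extensive actions it also has $kn$\=/extensive actions (as noted, $n$ divides $kn$), so $q\le q\cdot f(u)=q\cdot^f u$. Thus $\mathcal Q^f$ has $n$\=/extensive actions.

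For the second claim I would invoke Theorem~\ref{t:eilenberg-ordered} and show that a language $L$ is closed under $n$\=/insertions if and only if its canonical ordered semiautomaton $\ol{\mathcal{D}_L}$ has $n$\=/extensive actions. One direction is immediate from the description of $\ol{\mathcal D_L}$: the states are the quotients $u^{-1}L$ ordered by inclusion, and the action of a word $v$ sends $u^{-1}L$ to $(uv)^{-1}L$. For the state $w^{-1}L$ and a word $v$ with $|v|=n$, the inequality $w^{-1}L\subseteq (wv)^{-1}L$ says precisely that $wu\in L$ implies $wvu\in L$ for all $u$; quantifying over all $w$ and all such $v$ recovers exactly the $n$\=/insertion property (renaming the middle word appropriately). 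So the equivalence is essentially a direct translation between the combinatorial closure condition on $L$ and the order\=/theoretic condition on $\ol{\mathcal D_L}$.

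The step I expect to need the most care is the precise matching of quantifiers in this translation. The $n$\=/insertion property inserts $v$ \emph{between} a prefix $u$ and a suffix $w$, whereas the extensive\=/action condition compares a state $q=p^{-1}L$ with $q\cdot v$; I must confirm that ranging over all states $p$ (equivalently, all prefixes realized as quotients) and all length\=/$n$ words $v$ captures the full quantification over $u,v,w$ in the definition of $n$\=/insertions, with no gap arising from quotients that are not of the form $u^{-1}L$. This is handled by Lemma~\ref{l:canonical-dfa}(i), which gives $\nL_{\mathcal D_L,q}=q$, so that membership questions $wu\in L$ translate faithfully into reachability in the canonical automaton. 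Once this correspondence is set up cleanly, the equivalence follows, and Theorem~\ref{t:eilenberg-ordered} delivers the stated correspondence of varieties.
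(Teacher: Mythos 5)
Your proposal is correct and follows essentially the same route as the paper: verify the closure properties (with length\-/multiplication being exactly what makes $f$\-/renaming work), then show that $L$ is closed under $n$\-/insertions if and only if $\ol{\mathcal{D}_L}$ has $n$\-/extensive actions, and invoke Theorem~\ref{t:eilenberg-ordered}. The only cosmetic difference is that the paper proves the converse direction for an arbitrary ordered automaton with $n$\-/extensive actions recognizing $L$ (via $i\cdot uvw = q\cdot vw \ge q\cdot w$), whereas you argue directly on the canonical automaton; both are equally valid, and your quantifier\-/matching concern is indeed resolved by the fact that every state of $\mathcal D_L$ is by definition a quotient $u^{-1}L$.
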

\begin{proof}
The first part of the statement is easy to show.
To establish the second part,
let $L$ be a regular language over $A$ 
closed  under $n$\=/insertions. 
For $u\in A^*$, we consider the state $K=u^{-1}L$ 
in the canonical ordered semiautomaton of $L$. Now 
we show that
for every $v\in A^*$ such that $|v|=n$, we have $K\subseteq v^{-1}K$.
Indeed, if $w\in K=u^{-1}L$ then $uw\in L$
and since $L$ is closed under $n$\=/insertions we get $uvw\in L$.
Hence $vw\in K=u^{-1}L$, which implies $w\in  v^{-1}K$.
Therefor the canonical ordered semiautomaton of $L$
has $n$\=/extensive actions. 

On contrary, let $L$ be recognized by $\mathcal Q=(Q,A,\cdot,\le,i,F)$
with  $n$\=/extensive actions. For every $u,v,w \in A^*$ such that  $uw\in L$ and  $|v|=n$,
we can consider the state $q=i\cdot u$ in $\mathcal Q$. Since  $\mathcal Q$ has 
$n$\=/extensive actions
we have $q\cdot v \ge q$. Hence $i\cdot uvw =q\cdot vw\ \ge q\cdot w =i\cdot uw \in F$
and we can conclude
that $uvw\in L$. Thus $L$ is closed under $n$\=/insertions.
\qed\end{proof}

For a fixed $n$, it is decidable in polynomial time whether 
a given ordered semiautomaton has $n$\=/extensive actions, because 
the relation $q\le q\cdot u $ has to be checked only for polynomially many words $u$. 


\section{Membership Problem for $\mathcal C$\=/Varieties of Semiautomata}
\label{s:membership}

In the previous section, the membership problem for (positive) $\mathcal C$\=/varieties of semiautomata 
was always solved by an ad hoc argument. Here we discuss 
whether it is possible 
to give a general result in this direction. 
For that purpose, recall that $\omega$\=/identity is a pair of $\omega$\=/terms, which are 
constructed from variables by (repeated) successive application of
concatenation and the unary operation $u \mapsto u^\omega$. 
In a particular monoid, the interpretation of this unary operation 
assigns to each element $s$ its
uniquely determined power which is idempotent. 

In the case of $\mathcal C_{all}$ consisting of all homomorphisms, 
we mention Theorem 2.19 from~\cite{straubing-weil-handbook}
which states the following result: if the corresponding pseudovariety of monoids is 
defined by a finite set of $\omega$\=/identities 
then the membership problem of the corresponding
variety of languages is decidable by a polynomial 
space algorithm in the size of the input automaton.
Thus, Theorem 2.19 slightly extends the case when the pseudovariety of monoids 
is defined by a finite set of identities. 
The algorithm checks the defining $\omega$\=/identities in the syntactic monoid $M_L$
of a language $L$ and uses the basic fact that $M_L$ is the transition monoid 
of the minimal automaton of $L$.
This extension is possible, because
the unary operation $(\phantom{u})^\omega$ can be effectively computed 
from the input automaton.

We should mention that checking a fixed identity in an input semiautomaton
can be done in a better way. 
Such a (NL) algorithm (a folklore algorithm in the theory) 
guesses a pair of finite sequences of states for two sides of a given identity $u=v$ 
which are visited during reading the word $u$ (and $v$ respectively) letter by letter.
These sequences have the same first states and distinct last states.  
Then the algorithm checks whether for each variable, there is  
a transition of the automaton given by a word, which transforms all states
in the sequence in the right way, when every occurrence of the variable is considered.
If, for every used variable, there is such a word, we obtained a counterexample disproving
the identity $u=v$.

Whichever algorithm is used, we  can immediately get the generalization to the case of
positive varieties of languages, because checking inequalities can be done in 
the same manner as checking identities.  
However, we want to use the mentioned algorithms to obtain a corresponding result for
positive $\mathcal C$\=/varieties of ordered semiautomata for the categories
used in this paper.  
For such a result we need the following formal definition.
An $\omega$\=/inequality $u\le v$ \emph{holds in an ordered 
semiautomaton $\mathcal O = (Q,A,\cdot, \le)$
with respect to a category $\mathcal C$} if, 
for every $f\in \mathcal C (X^*,A^*)$ with $X$ being the set of variables
occurring in $uv$, and for every $p\in Q$, we have $p\cdot f(u)\le p\cdot f(v)$.
Here $f(u)$ is equal to $f(u')$,
where $u'$ is a word obtained from $u$ if all occurrences of $\omega$ are replaced by 
an exponent $n$ satisfying the equality $s^\omega=s^n$ in the transition monoid 
of $\mathcal O$ for its arbitrary element $s$.

\begin{theorem}\label{t:membership-problem} 
Let $\mathcal O = (Q,A,\cdot, \le)$ be an ordered semiautomaton, 
let $u\le v$ be an $\omega$\=/inequality and $\mathcal C$ be 
one of the categories $\mathcal C_{ne}$, $\mathcal C_{l}$, $\mathcal C_{s}$ and $\mathcal C_{lm}$. 
The problem whether $u\le v$ holds in $\mathcal O$
with respect to $\mathcal C$ is decidable.
\end{theorem}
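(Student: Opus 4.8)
The plan is to reduce the condition, which quantifies over the infinitely many morphisms $f\in\mathcal C(X^*,A^*)$, to a finite check inside the (finite) ordered transition monoid of $\mathcal O$. First I would write $M$ for the transition monoid of $\mathcal O$ and $t\colon A^*\to M$ for the canonical surjective homomorphism sending a word to its induced transformation of $Q$; recall that $\tau\le\tau'$ in $M$ means $p\cdot\tau\le p\cdot\tau'$ for every $p\in Q$. For a morphism $f$ and a variable $x\in X$ set $m_x=t(f(x))\in M$. Since $t$ is a homomorphism and, by the definition preceding the theorem, every occurrence of $\omega$ in $u$ may be replaced by one fixed exponent $n$ (computable from $M$, as $M$ is finite) realising the idempotent power, the transformation induced by $f(u)$ equals the value $\bar u(m)$ obtained by interpreting the $\omega$-term $u$ in $M$ under the assignment $x\mapsto m_x$; likewise for $v$. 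Hence $u\le v$ holds in $\mathcal O$ with respect to $\mathcal C$ if and only if $\bar u(m)\le\bar v(m)$ in $M$ for every tuple $m=(m_x)_{x\in X}$ lying in the achievable set $S_{\mathcal C}=\{\,(t(f(x)))_{x\in X}\mid f\in\mathcal C(X^*,A^*)\,\}\subseteq M^X$. As $M^X$ is finite and $\bar u,\bar v$ are computable, the whole problem reduces to deciding, for a given tuple $m\in M^X$, whether $m\in S_{\mathcal C}$.

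For three of the four categories this membership is immediate, because the constraint on $f$ acts on each variable separately. For $\mathcal C_l$ (length\=/preserving morphisms, each letter going to a single letter) one gets $S_{\mathcal C_l}=(t(A))^X$, where $t(A)=\{\,t(a)\mid a\in A\,\}$. For $\mathcal C_{ne}$ (non\=/erasing morphisms) the value $f(x)$ ranges over $A^+$, so $S_{\mathcal C_{ne}}=(t(A^+))^X$, where $t(A^+)$ is the subsemigroup of $M$ generated by the letter transformations. For $\mathcal C_{lm}$ (length\=/multiplying morphisms) all variables must be sent to words of one common length $k$, whence $S_{\mathcal C_{lm}}=\bigcup_{k\ge 1}(t(A^k))^X$ with $t(A^k)=(t(A))^k$; because the iterated products $(t(A))^k$ take only finitely many values as subsets of the finite monoid $M$ (the sequence is eventually periodic), this union is a finite, computable set. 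In each of these cases $S_{\mathcal C}$ is an explicitly computable subset of $M^X$, and membership, hence the whole decision, is settled by enumeration.

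The remaining and genuinely harder case is $\mathcal C_s$, the surjective morphisms, where surjectivity $A\subseteq f(X)$ couples the variables, so that $S_{\mathcal C_s}$ is no longer a direct power; I expect this to be the main obstacle. The decisive observation is that $A^*$ is free: a single letter $a$ can belong to $f(X)$ only as $f(x)=a$ for some $x$, and distinct letters force distinct variables. Consequently a tuple $(m_x)_{x\in X}$ is $\mathcal C_s$\=/achievable precisely when there is an injective assignment $\sigma\colon A\to X$ with $m_{\sigma(a)}=t(a)$ for every $a\in A$; the variables outside the image of $\sigma$ may then be realised by arbitrary words, which exist since $t$ is onto $M$, and the letters $a=f(\sigma(a))$ already guarantee surjectivity. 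Deciding the existence of such a $\sigma$ is exactly a bipartite matching (system of distinct representatives) problem in the graph joining $a\in A$ to those $x\in X$ with $m_x=t(a)$, and is solvable by Hall's criterion; in particular, if $|X|<|A|$ no surjective morphism exists and the inequality holds vacuously. Thus membership in $S_{\mathcal C_s}$ is decidable as well, and combining this with the reduction of the first paragraph proves the theorem.
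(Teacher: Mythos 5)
Your proposal is correct, but it takes a genuinely different route from the paper. The paper works directly on the semiautomaton: it gives a nondeterministic algorithm that guesses sequences of states along the two sides of the inequality, handles $\omega$\=/terms by a top\=/down decomposition with guessed intermediate states (guessing the exponent $n\le|Q|$ realizing the idempotent power), and reduces everything to simultaneous reachability questions in products of copies of $\mathcal O$; the category enters only as a constraint on the reachability check (non\=/empty paths for $\mathcal C_{ne}$, one\=/step paths for $\mathcal C_l$, additionally guessing for each letter $a$ a variable with $f(x)=a$ for $\mathcal C_s$). This yields a polynomial\=/space bound for those three categories, while $\mathcal C_{lm}$ is treated by a separate, purely decidability argument: the candidate images $f(x)$ form regular languages $L_x$, and the common\=/length requirement becomes non\=/emptiness of $\bigcap_x\psi(L_x)$ over a unary alphabet. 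You instead pass to the finite ordered transition monoid $M$, evaluate the $\omega$\=/terms there, and reduce the whole problem to computing the set $S_{\mathcal C}\subseteq M^X$ of achievable tuples: direct powers $(t(A))^X$ and $(t(A^+))^X$ for $\mathcal C_l$ and $\mathcal C_{ne}$, the eventually periodic union $\bigcup_k((t(A))^k)^X$ for $\mathcal C_{lm}$, and a bipartite\=/matching criterion for $\mathcal C_s$ (your injectivity argument via freeness of $A^*$ is the same insight the paper packages into its guessing step, and into its remark that $f$ is surjective iff $A\subseteq f(B)$). The trade\=/off: the paper's approach buys complexity bounds (PSPACE) that your monoid\=/level brute force cannot, since $|M|$ may be exponential in $|Q|$; your approach buys uniformity and, interestingly, a cleaner and fully effective treatment of $\mathcal C_{lm}$ via eventual periodicity of the subsets $(t(A))^k$ of $M$ --- precisely the case where the paper concedes it sees no effective construction and leaves the complexity undiscussed. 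Since the theorem claims only decidability, your proof is complete as stated.
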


\begin{proof}
The result is a consequence of the following propositions.

\begin{proposition}  
Let $\mathcal O = (Q,A,\cdot, \le)$ be an ordered semiautomaton, 
$u\le v$ be an $\omega$\=/inequality and $\mathcal C$ be 
one of the categories $\mathcal C_{ne}$, $\mathcal C_{l}$ and $\mathcal C_{s}$. 
The problem of deciding
whether $u\le v$ holds in $\mathcal O$
with respect to $\mathcal C$ can be solved by a polynomial space algorithm.
\end{proposition}

\begin{proof}
First of all, we prove the statement formally
for $\mathcal C=\mathcal C_{all}$.
We start with the case when $\omega$ operation is not used.
It is mentioned in Section~\ref{s:membership} that such an algorithm is a
folklore in the theory.

Let $y_1 \ldots y_s \le z_1 \ldots z_t$ be an inequality,
where $y_1, \dots, y_s, z_1, \dots, z_t$ are variables from $X$.
Recall, that this inequality holds in the transition ordered monoid of~$\mathcal{A}$
if and only if for every homomorphism $f: X^* \rightarrow A^*$, the inequality of transformations
$f(y_1) \circ \dots \circ f(y_s) \le f(z_1) \circ \dots \circ f(z_t)$
is satisfied.
This requirement can be reformulated as the inequality
$q \cdot f(y_1) \cdots f(y_s) \le q \cdot f(z_1) \cdots f(z_t)$
of states of~$\mathcal{A}$, for every state $q \in Q$
and every homomorphism $f: X^* \rightarrow A^*$.
This means that the inequality is not valid if and only if
there exist such $f$ and states
$p_0, p_1, \dots, p_s, q_0, q_1, \dots, q_t \in Q$,
with $p_0 = q_0$ and $p_s \not\le q_t$, which satisfy
$p_{i-1} \cdot f(y_i) = p_i$ and $q_{j-1} \cdot f(z_j) = q_j$
for every $i \in \{1,\dots,s\}$ and $j \in \{1,\dots,t\}$.
Since the numbers $s$ and $t$ are constants,
one can non\=/deterministically choose all these states,
and then decide whether for this choice of states
the required homomorphism $f$ exists.
For every variable~$x$,
denote by $I_x$ the set of all $i \in \{1,\dots,s\}$
such that $y_i = x$,
and by $J_x$ the set of all $j \in \{1,\dots,t\}$
such that $z_j = x$.
In order to decide existence of~$f$,
one has to check whether for every variable~$x$
there exists a word $f(x) \in A^*$ such that
$p_{i-1} \cdot f(x) = p_i$ for every $i \in I_x$
and $q_{j-1} \cdot f(x) = q_j$ for every $j \in J_x$.
However, the existence of such a word $f(x)$ can be expressed
as a condition on the product automaton of
$|I_x| + |J_x|$ copies of the automaton~$\mathcal{A}$;
namely, it is equivalent to reachability of the state with
components $p_i$, for $i \in I_x$, and $q_j$, for $j \in J_x$,
from the state with components $p_{i-1}$, for $i \in I_x$,
and $q_{j-1}$, for $j \in J_x$.
Recall that $|I_x| + |J_x|$  is a constant.

Now assume that $u$ and $v$ are $\omega$\=/terms.
We are guessing the states as in the previous simple case, but we 
do this inductively with respect to the structure of the 
$\omega$\=/terms $u$ and $v$ from top to down.
In this way we obtain a more complicated system of states comparing the sequences 
in the case of (linear) words.
To explain the inductive construction, assume that  
we have guessed states $p$ and $q$ for a certain $\omega$\=/subterm $w$ 
assuming that $p\cdot f(w)=q$.
If $w=w_1w_2$ for $\omega$\=/terms $w_1$ and $w_2$, then we simply guess a state $r$ and assume
that $p\cdot f(w_1)=r$ and $r\cdot f(w_2)=p$. 
The case $w=z^\omega$, with a subterm $z$, is more complicated. 
It is well known that, for every element $s$ in the transition monoid of the given automaton,   
the element $s^\omega$ is equal to $s^n$ for some  $n\le |Q|$. In particular, $s^n\cdot s^n=s^n$ holds
for this $n$. 
So, we guess $n\le |Q| $ and states $r_0,r_1,r_2,\dots,  r_{2n}$ such that
$r_0=p$, $r_n=r_{2n}=q$ and we assume that $r_{i-1}\cdot f(w)= r_{i}$ for every $i=1,\dots, 2n$.
In this way, when we decompose all subterms, we obtain a system of states equipped with 
assumptions of the form $p\cdot f(x)=q$, where $p$ and $q$ are states and $x$ is a variable.
Since the $\omega$\=/terms $u$ and $v$ are not part of the input, there are only 
constantly many steps of the algorithm decomposing the terms $u$ and $v$. 
Thus, at the end, the number of conditions is polynomial with respect the size of the input automaton. 
(In fact, the number of the conditions 
can be bounded by the number of all states in $Q$, which is linear.)  
The final part of the algorithm  is the same: 
we just check, for each variable $x$, whether it is possible to satisfy 
all the conditions concerning $f(x)$ at the same time.
Point out, that the number of conditions was constant in the case of identity $u=v$ in the first part, 
which gives $\log$ space algorithm in the that case.

Now we are ready to discuss another categories, where we search for 
$f\in \mathcal C(X^*,A^*)$.
The case $\mathcal C=\mathcal C_{ne}$ is trivial. 
When we test reachability in the product of certain number of copies of $\mathcal O$, 
we are looking for a non\=/empty path in the graph.
The case $\mathcal C=\mathcal C_{l}$ is even easier, because we test reachability in one step.
Seeing this case from another point of view,
this case is easy, because there are only polynomially many homomorphisms
in $\mathcal C_{l}(X^*,A^*)$ for fixed $X$ and $A$
where only $A$ is a part of the input.
The case $\mathcal C=\mathcal C_{s}$ is also easy. We are looking for $f\in \mathcal C(X^*,A^*)$
such that $A\subseteq f(X)$. So, we can additionally guess, for each letter $a\in A$, a variable $x\in X$ such that
$f(x)=a$.

We could conclude with the remark, that is well known
that nondeterministic polynomial space
is equivalent to deterministic polynomial space.
\qed
\end{proof}

\begin{proposition}  
Let $\mathcal O = (Q,A,\cdot, \le)$ be an ordered semiautomaton, 
$u\le v$ be an $\omega$\=/inequality. 
The problem 
whether $u\le v$ holds in $\mathcal O$
with respect to $\mathcal C_{lm}$ is decidable.
\end{proposition}

\begin{proof}
We proceed as in the general case up to the place where the existence of $f(x)$ is discussed. 
We do not decide whether there is  $f(x)\in A^*$ satisfying all conditions before
we first complete the conditions in such a way that, for every $q\in Q$, the condition
on $q\cdot f(x)$ is present. This is made by guessing missing pairs $q\cdot f(x)$ for all $q$ and $x$. 
Just now we test whether there are words $f(x)$ satisfying the conditions. 

Only if there are such words, we continue. Next we try to describe all of them.
It is possible, because, for every $x$, we know how $f(x)$ transform the semiautomaton $\mathcal O$.
So, the language of all words which are considered as a potential words $f(x)$ is a regular language
which is recognized by the transition monoid of the semiautomaton $\mathcal O$. 
We denote it as $L_x$.
Furthermore, we are able to compute a regular expression $r_x$ describing $L_x$.  
We need to decide whether for each variable $x$ there is a word $w_x\in L_x$ such that all words's
$w_x$ have the same length. Thus, we need to know all possible lengths of words in $L_x$.
For this purpose we consider the unique literal mapping $\psi : A^*\rightarrow \{a\}^*$, $\psi(A)=\{a\}$.
Clearly, the language $\psi(L_x)$ is regular, because it is described by a regular expression $\ol{r_x}$, 
which  can be obtained from $r_x$, if we replace every letter from the alphabet $A$ by the letter $a$.
Moreover, there is a word $w_x\in L_x$ of length $k$ if and only if $a^k\in \psi(L_x)$.
So, the existence of an integer $k$ such that $L_x\cap A^k\not=\emptyset$ holds for every $x$,
is equivalent to the fact $\bigcap_{x\in X} \psi(L_x) \not= \emptyset$. The later inequality 
is equivalent to non\=/emptiness of the language given by the 
generalized regular expression $\bigcap_{x\in X} \ol{r_x}$.
So, one can decide this question.
\qed
\end{proof}
We did not discuss the complexity of the algorithm, because
we do not see how to effectively construct the regular expression $\bigcap_{x\in X} \ol{r_x}$. 
\end{proof}

\section{Further Remarks}

At the end we could mention that
one can extend the construction in at least two natural directions.
First, the theory of tree languages is a field where many fundamental ideas from the theory of deterministic 
automata were successfully generalized. 
Another recent notion of biautomata (see~\cite{biautomaty} and \cite{ncma13-holzer})
is based on considering both\=/sided quotients instead of left quotients only.
In both cases one can try to apply the previous constructions
and consider varieties of (semi)automata. 
Some papers in this direction already exist~\cite{esik-ivan}.

\end{document}